\documentclass[11pt,letterpaper]{article}

\usepackage{authblk}

\usepackage[margin=1in]{geometry}
\usepackage{mathtools,amssymb,amsthm,mathrsfs}
\usepackage{comment}
\usepackage{booktabs}
\usepackage{graphicx}
\usepackage{enumitem}
\usepackage[dvipsnames]{xcolor}

\usepackage[
    backend=biber,
    style=alphabetic,
    sorting=none,
    maxalphanames=4,
    minalphanames=3
]{biblatex}
\usepackage{hyperref}
\hypersetup{
    colorlinks=true,
    linkcolor=blue,
    filecolor=blue,
    citecolor=blue,
    urlcolor=blue,
}
\theoremstyle{plain}
\newtheorem{theorem}{Theorem}
\newtheorem{lemma}[theorem]{Lemma}
\newtheorem{proposition}[theorem]{Proposition}
\newtheorem{corollary}[theorem]{Corollary}
\newtheorem{claim}{Claim}
\newtheorem{fact}{Fact}

\theoremstyle{definition}
\newtheorem{definition}[theorem]{Definition}

\newcommand{\comments}[1]{}
\newcommand{\mc}{\mathcal}
\newcommand{\mbb}{\mathbb}
\newcommand{\ket}[1]{\lvert#1\rangle}
\newcommand{\bra}[1]{\langle#1\rvert}
\newcommand{\ketbra}[2]{\ket{#1}\bra{#2}}

\newcommand{\Tr}{\operatorname{Tr}}
\newcommand{\im}{\operatorname{Im}}
\newcommand{\Cmagicd}[1]{\mathscr{C}_{\mathrm{magic}}^{(#1)}}

\setlist{leftmargin=1.8em,itemsep=4pt,topsep=5pt}
\allowdisplaybreaks[1]

\NewDocumentCommand{\hzy}{m}{\textcolor{Blue}{(hzy:\space#1)}}
\NewDocumentCommand{\zm}{m}{\textcolor{Brown}{(lzm:\space#1)}}

\begin{document}
\hypersetup{pageanchor=false}
\title{NLTM Hamiltonians from gauged sheaf quantum locally testable codes}
\renewcommand{\Authsep}{ , }
\renewcommand{\Authand}{ and }
\renewcommand{\Authands}{ , and }

\setlength{\affilsep}{0.6em}

\title{NLTM Hamiltonians from gauged sheaf quantum locally testable codes}

\author{Fuchuan Wei\hspace{0.1em}%
  \thanks{These authors contributed equally to this work.}}
\author{Zhengyi Han\hspace{0.1em}\protect\footnotemark[1]}
\author{Zimu Li\hspace{0.1em}\protect\footnotemark[1]}
\author{Zi-Wen Liu\hspace{0.1em}%
  \thanks{\texttt{zwliu0@tsinghua.edu.cn}}}

\affil{Yau Mathematical Sciences Center, Tsinghua University,
  Beijing 100084, China}
\date{\today}
\begin{titlepage}
\maketitle
\begin{abstract}

Understanding the complexity of low-energy quantum states is a central problem in quantum complexity theory and many-body physics. Here we prove the no low-energy trivial magic (NLTM) conjecture, constructing a family of local Hamiltonians for which every state within a positive energy density window requires circuit depth $\Omega(\log n)$ to prepare from arbitrary stabilizer state that may themselves be highly entangled. Notably, our result is intrinsic in the sense that the same Hamiltonian family and energy density threshold apply to arbitrary qudit stabilizer inputs with any bounded local dimensions. Our construction builds on non-Abelian gauged sheaf codes obtained using the cup product structure on good quantum locally testable codes. We establish constant operator soundness for the associated Hamiltonians, enabling their protected logical structure to constrain all states at sufficiently low energy density. By excluding shallow stabilizer-based classical witnesses, Our result substantially strengthens NLTS and advances our understanding of low-energy complexity relevant to quantum
PCP.

\end{abstract}
\thispagestyle{empty}
\end{titlepage}
\hypersetup{pageanchor=true}
\pagenumbering{roman}
\tableofcontents
\clearpage
\pagenumbering{arabic}

\section{Introduction}

The complexity of low-energy quantum states is a central theme in quantum many-body physics and quantum complexity theory.
A culminating question with far-reaching implications for quantum
complexity theory and our understanding of quantum matter is captured by the quantum probabilistically checkable proofs (qPCP) conjecture, which posits that approximating the ground-state energy density of
a local Hamiltonian to constant accuracy remains
$\mathsf{QMA}$-hard~\cite{aharonov2013quantumpcpconjecture}, thus connecting quantum proof verification and hardness of approximation with the structure of low-energy many-body states.
A key obstacle is that low-energy states may admit succinct classical
descriptions that enable efficient estimation of their energies,
thereby providing classical witnesses for low energy~\cite{brandão2014productstateapproximationsquantumground,gharibian2024dequantizingquantumsingularvalue}.
Understanding how local Hamiltonians can rule out increasingly broad
classes of such witnesses is therefore central to progress toward qPCP.

A key milestone in the quest for qPCP is the no low-energy trivial states (NLTS) conjecture.
Formulated by Freedman and Hastings~\cite{freedman2013quantumsystemsnonkhyperfinitecomplexes},
NLTS asks for families of local Hamiltonians whose low-energy
states cannot be prepared from product states by constant-depth circuits, giving a physically motivated intermediate goal toward qPCP.
Building on breakthroughs in the construction of asymptotically good
quantum low-density parity-check (qLDPC) codes~\cite{PK2022Good,DHLV2022,leverrier2022quantumtannercodes},
Anshu, Breuckmann, and Nirkhe established the existence of NLTS local Hamiltonians~\cite{anshu2024nltshamiltoniansgoodquantum}.
Subsequent work extended this approach to suitable low-rate qLDPC
codes~\cite{golowich2023nltshamiltoniansstronglyexplicitsos}.

However, NLTS excludes only a specific type of classical witnesses, namely shallow circuits acting on product states. Crucial to our understanding of quantum computational advantage is
the observation that stabilizer states can be highly entangled yet
admit compact classical descriptions and efficiently computable local
expectation values, as follows from the Gottesman--Knill
theorem~\cite{gottesman1998heisenberg,aaronson2008improvedsimulationstabilizercircuits}. Indeed, the NLTS Hamiltonians admit stabilizer ground states, whose
compact classical descriptions provide efficiently checkable
witnesses for low energy.

This brings into play quantum magic (nonstabilizerness)~\cite{bravyi2004universalquantumcomputationideal,veitch2013resourcetheorystabilizercomputation,howard2014contextualitysuppliesmagicquantum,liu2022manybodyquantummagic},
which has attracted growing interest in quantum information and many-body physics (see e.g.\ Refs.\ \cite{liu2022manybodyquantummagic,leone2022stabilizerrenyientropy,haug2023quantifyingnonstabilizernessmatrixproduct,white2021conformalfieldtheoriesmagical,ellison2021symmetryprotectedsignproblem,gu2025magicinducedcomputational}).
Of particular relevance in the present Hamiltonian complexity context is the notion of \emph{long-range magic} (LRM), namely magic that
cannot be removed by shallow circuits, which has also been developed in connection with quantum error correction and phases of matter~\cite{korbany2025longrangenonstabilizernessphasesmatter,wei2025longrangenonstabilizernessquantumcodes}.
From this perspective, advancing toward qPCP requires strengthening
NLTS to exclude stabilizer-based witnesses in addition to the shallow entanglement type.
This naturally motivates the \emph{no low-energy trivial magic}
(NLTM) conjecture~\cite{wei2025longrangenonstabilizernessquantumcodes} (see also Refs.~\cite{natarajan2025lecturequantumpcpconjecture,parham2025quantumcircuitlowerbounds}) that asks for families of local Hamiltonians for which every state within a linear energy (or positive energy density) window above the ground state
exhibits LRM, thereby excluding classical witnesses based on shallow preparation from arbitrary stabilizer inputs, rather than only from product states.

Relevant progress along this line includes recent results on magic
circuit lower bounds for encoded states, non-Abelian ground spaces,
and explicit state families~\cite{wei2025longrangenonstabilizernessquantumcodes,parham2025quantumcircuitlowerbounds,li2026explicitstatestwosidedlongrange}.
However, extending such obstructions uniformly to a constant energy density window as required by NLTM is an essentially different challenge.
At positive energy density, there are constructions that exclude
stabilizer states from the low-energy sector~\cite{coble2023localhamiltonianslowenergystabilizer}
and require an extensive number of non-Clifford gates to prepare
any low-energy state~\cite{coble2024hamiltonianslowenergystatesrequire},
but these constructions still admit ground states obtainable from
stabilizer inputs by a single layer of local rotations.
In addition, extensive magic has been shown to persist under
geometrically local circuits for low-energy pure states of non-Abelian string-net models, but the admissible energy density depends on circuit depth~\cite{zhang2026extensivelongrangemagicnonabelian}.
All these results leave fundamental gaps to NLTM, which requires strong uniform control over magic circuit depth
throughout a positive energy density window, beyond what ground space LRM or code distance can guarantee.

In this work, we introduce and prove two natural notions of NLTM (as illustrated in Fig.~\ref{figure:nlts-nltm-preparation}):
binary NLTM rules out shallow preparation from qubit stabilizer states, whereas intrinsic NLTM is an alphabet-independent strengthening that achieves the same obstruction for any stabilizer states on qudits with local dimension \(d_v\le d_{\max}\) for any fixed \(d_{\max}\). 
While the former notion already represents a natural form of NLTM, the refined form of the latter intrinsic version has essential motivations in relation to complexity theory and qPCP.
A classical witness for a low-energy state of a qubit Hamiltonian may use a stabilizer state on higher-dimensional qudits.
When the circuit depth and the qudit dimensions are independent of system size, the energy of the Hamiltonian can still be evaluated efficiently from classical descriptions of the stabilizer state and the circuit.
Intrinsic NLTM  rules out such witnesses for qudits of any
bounded dimension, rather than only those based on qubit stabilizer states.

We prove in Appendix~\ref{appendix:ternary} that the binary NLTM theorem can be established by encoding each qutrit of a ternary quantum Tanner code into two qubits.
Since the code has a linear Pauli energy barrier, its low-energy
subspace splits into logical and excitation spaces with the logical space having dimension \(3^k\).
A shallow preparation from a qubit stabilizer state would then produce a qubit stabilizer code \(\mathcal K\) such that \(3^k\) divides
\(\dim\mathcal K\), contradicting the fact that
\(\dim\mathcal K\) is a power of two.

Our main result is the intrinsic NLTM theorem:
\begin{theorem}[Intrinsic NLTM, informal]\label{theorem:intrinsic-informal}
There is a family of frustration-free qubit local Hamiltonians, such that, for every given bound on the local dimensions, preparing any state below a positive energy density from a qudit stabilizer state requires depth $\Omega(\log n)$.

\end{theorem}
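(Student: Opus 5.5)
The plan is to use the Hamiltonian $H=\sum_i (I-\Pi_i)/2$ built from the gauged sheaf code $\mathcal C$. This code is obtained by gauging a good quantum locally testable code using its cup-product structure, and the resulting logical structure realizes a non-Abelian (magic) action. I would then argue by contradiction.

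First I would establish constant operator soundness: for every state $\rho$,
\[
\Tr(H\rho)\ \ge\ \frac{s}{n}\,\Tr\bigl((I-\Pi_{\mathcal C})\rho\bigr)\cdot n ,
\]
or in the refined form, $\Tr(H\rho)\ge s\cdot\mathbb E[|\text{syndrome-distance}|]$. The consequence is that a state of energy at most $\epsilon n$ is $O(\epsilon/s)$-close, in a suitable locally-decodable sense, to a state supported on a bounded-weight corruption of the codespace.

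Concretely, after applying a constant-depth local decoder $D$, or equivalently by projecting onto states whose error support has size $\le c\epsilon n$, the low-energy state $\rho$ satisfies $\rho\approx E(\sigma)$. Here $\sigma$ is a code state and $E$ is a channel supported on a set of linear but small size, which in particular is smaller than the distance. This step reduces the problem to the ground-space logical structure.

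Next, suppose $\rho\approx U\ket{S}\bra{S}U^\dagger$ with $U$ of depth $t=o(\log n)$ and $\ket S$ a stabilizer state on qudits with $d_v\le d_{\max}$. Light cones then have size $2^{O(t)}=n^{o(1)}$. I would use the standard "cleaning" argument from the NLTS proofs: choose a constant number of far-separated regions $A_1,\dots,A_m$, each of linear size, such that the logical information is recoverable from the complement of any one of them. Pulling the code's logical operators back through $U^\dagger$ gives operators of support $|A|\cdot n^{o(1)}$ acting on $\ket S$. Then $U^\dagger$ maps the relevant (approximately) code-like subspace spanned by the low-energy neighborhood to a subspace $\mathcal K$ whose projection can be approximated by a stabilizer-code projector. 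The reason is that the reduced state of a stabilizer state on any region is a stabilizer mixed state, that is, a normalized projector onto a stabilizer code.

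The intended contradiction is that the logical algebra of $\mathcal C$ contains a non-Clifford, non-Abelian gate implementable by a constant-depth circuit (the cup-product gate). On the preimage it would become a logical gate on a stabilizer code with large distance, $n^{1-o(1)}$. By a Bravyi–König-type bound generalized to qudits of bounded dimension, such a code admits only Clifford-hierarchy gates of bounded level. Alternatively, and more robustly, I would pass to the gauged non-Abelian ground space, whose logical operator algebra is non-Abelian in a way incompatible with the Abelian (Heisenberg–Weyl) structure of any qudit stabilizer code's logical Paulis modulo a bounded $d_{\max}$. The planned route is to show that correlation functions of suitable logical operators on the low-energy state take values that cannot arise from any stabilizer state of bounded local dimension. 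For example, a non-Abelian group character evaluated on gauge fluxes would have an irrational or non-$d$-th-root phase, while stabilizer expectation values lie in $\{0\}\cup\{\omega_d^j\}$ with $d\le d_{\max}$.

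The main obstacle is robustness. Soundness only guarantees closeness to the codespace up to a constant fraction of errors, so the forbidden correlation must survive a linear-size corruption. I would handle this by averaging the logical-operator correlator over exponentially many disjoint copies or cleaned representatives. The expected value stays bounded away from the finite set of stabilizer values by a constant gap, provided $\epsilon$ is below a threshold depending only on $s$ and $d_{\max}$. This gives $t=\Omega(\log n)$.
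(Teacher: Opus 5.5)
There is a genuine gap in the step that is supposed to produce the contradiction; neither of your two versions works.

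\emph{The gate version.} A Bravyi--K\"onig bound holds only for geometrically local codes in fixed dimension. Linear-distance qLDPC stabilizer codes can carry constant-depth non-Clifford logical gates; the cup-product $\mathrm{CCZ}$-type gates on the ungauged sheaf codes underlying this very construction are examples. Even where such a bound applies, it only caps the Clifford level, and these gates respect the cap. Nor does anything in your argument make a preimage of $\mathcal C$ a stabilizer code. That the reduced states of $\ket S$ are stabilizer projectors says nothing about $U^\dagger$ applied to the low-energy neighbourhood of $\mathcal C$. The valid implication runs the other way: a stabilizer code built from $\ket S$ is mapped by $U$ into low energy. That gives you no logical gate on the stabilizer code.

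\emph{The correlator version.} This fails more basically. On the input side you evaluate $U^\dagger L U$, which is not a generalized Pauli operator, so its expectation in a stabilizer state is unconstrained; one layer of single-qudit rotations already reaches arbitrary local values. The set $\{0\}\cup\{\omega_d^j\}$ constrains only Pauli observables. Moreover, the gauged code is defined over a fixed $\mathbb F_q$ with $q=2^{s_q}$, and all its phases are $(-1)^{\Tr(\cdot)}$, so there is no irrational phase to detect. Any feature that stays fixed along the family can be reproduced once $d_{\max}$ is large enough; the paper's ternary code, which fails intrinsic NLTM already at $d_{\max}=3$, illustrates this.

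\emph{Robustness.} This is also not handled. With linear distance there are only $O(1)$ pairwise disjoint logical representatives, and each is typically hit by a constant density of errors. The constant-depth decoder you invoke is never constructed. The NLTS-style cleaning and light-cone arguments rely on the input being a product state, and they fail for entangled stabilizer inputs, which is exactly the case NLTM must handle.

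\emph{What is missing} is an invariant attached to a whole subspace that grows with $n$. The paper's argument runs as follows.
\begin{itemize}
\item Stabilizer completion (Lemma~\ref{lemma:stabcompletion}) turns $\ket{S_0}$ into a qudit stabilizer projector $P$ with $PV^\dagger H V P=E_*P$, where $E_*$ is at most the energy and $\operatorname{rank}P$ divides $\prod_v d_v$. Hence every prime factor of $\operatorname{rank}P$ is at most $d_{\max}$.
\item Operator soundness is applied to the entire range of $VPV^\dagger$ to bound its leakage out of $V_t$. This is why soundness must be an operator inequality rather than a statement about one state.
\item Linear distance gives $V_t\cong\mathcal C\otimes\mathcal G_t$, and every operator of support below $d(\mathcal C)-2t$ compresses to $I_{\mathcal C}\otimes B$.
\item A Chebyshev filter of degree $O(\sqrt s)$ in the completed checks approximates $VPV^\dagger$ by such operators when the depth is below roughly $\frac15\log n$. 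The spectral count of Lemma~\ref{lemma:spectral-rank-obstruction} then forces $\dim\mathcal C$ to divide $\operatorname{rank}P$.
\item The code is engineered, via the modified complex $E$ and a deck transformation of prime order $p_\nu\ge 2\nu+1$ acting without fixed points on the compatible triples, so that $p_\nu$ divides $\dim\mathcal C_\nu$ with $p_\nu\to\infty$. This contradicts the bound on the prime factors of $\operatorname{rank}P$.
\end{itemize}
Your proposal uses neither the subspace-level completion nor any growing arithmetic property of $\dim\mathcal C$. Non-Abelianness alone is no substitute: the paper has to modify the earlier gauged construction precisely to obtain such a prime divisor.
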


Note again that we allow arbitrary, potentially highly entangled stabilizer states on qudits with \(2\le d_v\le d_{\max}\), including  states
with unequal or composite local dimensions.
The state may contain any finite number of ancillary qudits, and its Clifford circuit depth is unrestricted.
Ancillary qudits may be discarded, and  classical mixtures are
also allowed.
Fig.~\ref{figure:nlts-nltm-preparation} summarizes the distinctions among NLTS, binary NLTM, and intrinsic NLTM.

\begin{figure}
\centering
\includegraphics[width=\textwidth]{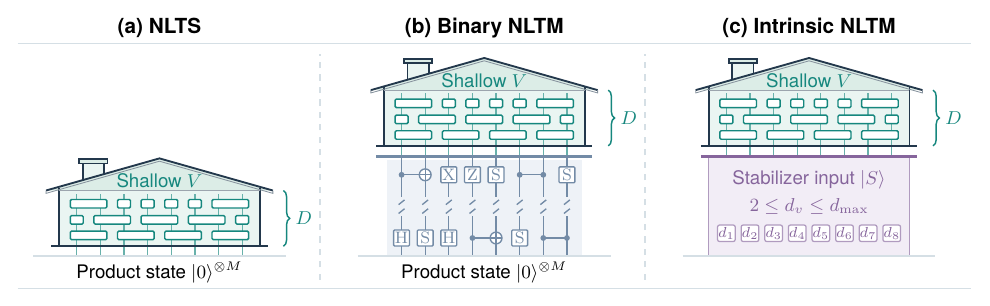}
\caption{\textbf{Preparation models excluded at low energy.}
The bungalow in (a) and the stilt houses in (b,c) visualize the extra free stabilizer structure beneath $V$ in NLTM.
(a) NLTS excludes shallow circuits $V$ acting on product inputs.
(b) Binary NLTM permits arbitrary qubit stabilizer inputs, including those with long-range entanglement.
(c) Intrinsic NLTM permits stabilizer inputs on qudits of dimensions $2\le d_v\le d_{\max}$, for every fixed $d_{\max}$.
The dimensions may be unequal or composite.
Only the depth of $V$ is counted.
Gates act on at most two qudits with arbitrary connectivity, and each qudit's dimension remains fixed throughout the circuit.
All panels show purifications before ancillary outputs are discarded.}
\label{figure:nlts-nltm-preparation}
\end{figure}

We now sketch the key insights and steps of our proof.
Suppose that an low energy state \(\rho\)  can be prepared by a shallow circuit \(V\) from a stabilizer state \(\ket{S}\).
Write \(H=\sum_i h_i\).
We can construct a stabilizer code \(\mathcal K\) whose stabilizer group contains the elements of \(\operatorname{Stab}(S)\) supported within the regions \(\operatorname{supp}(V^\dagger h_iV)\).
Every state in $V\mc{K}$ has the same energy expectation with respect to $H$, no greater than that of $V\ket{S}$, and every prime divisor of $\dim\mc{K}$ is at most $d_{\max}$.
We choose $H$ so that a prime greater than $d_{\max}$ divides $\dim\mc{C}$, where $\mc{C}=\ker H$.
It remains to prove that $\dim\mc{C}$ divides $\dim\mc{K}$, which leads to a contradiction.


Let \(V_t\) be the span of states obtained by applying errors of weight at most \(t\) to \(\mathcal C\).
Linear distance gives a decomposition
\(V_t\simeq\mathcal C\otimes\mathcal G_t\), under which every operator with support below the distance bound acts trivially on \(\mathcal C\). When \(V\) is shallow, we construct an operator \(F\) that approximates the projector onto \(V\mathcal K\) and acts on \(V_t\) as \(I_{\mathcal C}\otimes B\).
Thus every eigenvalue of \(F\) on \(V_t\) has multiplicity divisible by \(\dim\mathcal C\).
If \(V\mathcal K\) were contained in \(V_t\), exactly \(\dim\mathcal K\) eigenvalues would be close to one, and hence
\(\dim\mathcal C\) would divide \(\dim\mathcal K\).

At positive energy density, however, \(V\mathcal K\) need not be contained in \(V_t\).
We therefore need to bound its projection onto \(V_t^\perp\), which requires a property analogous to the soundness of a good quantum locally testable code (qLTC)~\cite{gay2026asymptoticallygoodquantumlocally,BLN2026,CHLT2026}.
A good CSS code alone does not suffice, since its code space has dimension a power of two. We instead construct a family of non-Abelian codes by gauging good qLTCs using the cup product structure recently established in Refs.~\cite{li2026transversalnoncliffordgatesgood,LWHL2609nonAbelian}, and prove that the corresponding Hamiltonians satisfy \emph{operator soundness}, which
gives the required bound on the projector onto \(V_t^\perp\).

This paper is organized as follows.
Section~\ref{section:statement} defines magic circuit complexity and
states our main results.
Section~\ref{section:arithmetic-obstruction} derives the magic circuit lower
bound from linear distance, operator soundness, and the dimension of
the ground space.
Section~\ref{section:intrinsic-construction} constructs a family of
gauged codes and proves that they have required properties.
Section~\ref{section:discussion} discusses the implications of our results and several open questions.
The binary NLTM construction is given in
Appendix~\ref{appendix:ternary}.

\section{Setting and main results}\label{section:statement}

In this section, we define magic circuit complexity and state our main results on intrinsic NLTM.


\subsection{Magic circuit complexity}\label{section:mgaic_complexity}

Let $d_{\max}\ge2$ be an integer upper bound on the local qudit dimensions.
We use the qudit stabilizer formalism on a finite collection of $M$ qudits with Hilbert space
\begin{equation}
\mc{H}=\bigotimes_{v=1}^M\mbb{C}^{d_v},\quad 2\le d_v\le d_{\max},
\end{equation}
where the integer local dimensions $d_v$ may be unequal or composite.
On qudit $v$, the shift and clock operators act as
\begin{equation}
X_v\ket{a}=\ket{a+1\bmod d_v},\quad Z_v\ket{a}=e^{2\pi ia/d_v}\ket{a}
\end{equation}
for $a=0,\cdots,d_v-1$.
A generalized Pauli operator is a scalar phase times $\bigotimes_v X_v^{a_v}Z_v^{b_v}$, where $a_v,b_v\in\mbb{Z}_{d_v}$.\footnote{For composite prime power dimensions, this cyclic Weyl convention differs from the finite field Pauli convention based on the field trace commonly used for qudit CSS codes.}
A \emph{pure qudit stabilizer state} is the unique common $+1$ eigenstate, up to an overall phase, of some finite abelian group of Pauli operators containing no nonidentity scalar.
Such a state can be entangled across qudits, including qudits of unequal dimensions.

A depth-$D$ circuit consists of $D\in\mbb{Z}_{\ge0}$ layers of arbitrary unitary gates on mutually disjoint sets of at most two qudits, with arbitrary connectivity.
Gates acting on any fixed number of qudits can be decomposed into gates acting on at most two qudits, with a depth increase by a factor depending only on that number and the fixed local dimension bound.

\begin{definition}[Magic circuit complexity]
For a state $\rho$ on $N$ qubits, consider  preparations of the form
\begin{equation}\label{equation:intrinsic-preparation}
\rho=\sum_{j=1}^r p_j\Tr_{A_j}\bigl(V_j\ketbra{S_j}{S_j}V_j^\dagger\bigr).
\end{equation}
Here $r\ge1$ is an integer, $p_j>0$, and $\sum_jp_j=1$.
In each branch, $\ket{S_j}$ is any  qudit stabilizer state on a finite qudit system with local dimensions not exceeding $d_{\max}$.
The circuit $V_j$ consists of unitary gates on at most two qudits.
The dimension of each qudit remains fixed throughout the circuit.
Discarding all qudits in $A_j$ leaves $N$ designated qubits.
The \emph{magic circuit complexity} of $\rho$ at local dimension bound $d_{\max}$ is
\begin{equation}
\Cmagicd{d_{\max}}(\rho)\coloneqq\min\max_{1\le j\le r}\operatorname{depth}(V_j),
\end{equation}
where the minimum ranges over all preparations in Eq.~\eqref{equation:intrinsic-preparation} with dimension bound $d_{\max}$.
\end{definition}

The input size, local dimensions, stabilizer state, circuit, and retained qubits may vary between branches.
The stabilizer input may include an arbitrary finite number of ancillary qudits, with arbitrary stabilizer entanglement between the retained qubits and discarded qudits.

$\Cmagicd{d_{\max}}(\rho)=0$ if and only if $\rho\in\mathrm{STAB}_N$, where $\mathrm{STAB}_N$ denotes the convex hull of pure stabilizer states on the $N$ output qubits.
The case $d_{\max}=2$ corresponds to the preparation model defining binary NLTM.
Increasing $d_{\max}$ can only decrease the complexity.

For a Hamiltonian with polynomially many terms of bounded support and norm, fixed depth and a fixed $d_{\max}$ allow efficient classical energy evaluation given polynomial descriptions of the stabilizer input and circuit.
Each local term has a backward light cone containing only a bounded number of qudits of bounded dimension, so its generalized Pauli expansion can be evaluated efficiently on the stabilizer input.
Even with unrestricted finite ancillary inputs, discarded outputs, and classical randomization, every such preparation admits a polynomial classical energy witness of no greater energy, up to an arbitrarily small inverse polynomial tolerance (Appendix~\ref{appendix:classical-witnesses}).

\subsection{General criteria for intrinsic NLTM}\label{sec:intrinsic_criterion}

Let $H=\sum_{i=1}^m h_i$ be a frustration-free Hamiltonian on $N$ qubits with $0\leq h_i\leq I$ 
and nonzero ground space $\mc{C}=\ker H$.
The code distance $d(\mc{C})$ associated with $\mc{C}$ is the smallest support size of an operator $O$ for which $P_{\mc{C}}OP_{\mc{C}}$ is not a scalar multiple of $P_{\mc{C}}$, where $P_{\mc{C}}$ projects onto $\mc{C}$.
Define the linear spaces spanned by code states after local errors as
\begin{equation}
V_j\coloneqq\operatorname{span}\{E\ket{\psi}:\ket{\psi}\in\mc{C}, \operatorname{wt}(E)\le j\}.
\end{equation}
Here $0\le j\le N$, and $\Pi_{V_j}$ denotes the orthogonal projector onto $V_j$.
Restricting $E$ to Pauli errors gives the same space.
The distance operator is
\begin{equation}
\Delta_{\mc{C}}\coloneqq\sum_{j=0}^{N-1}(I-\Pi_{V_j}).
\end{equation}
We use the following operator soundness inequality, which expresses quantum local testability for local Hamiltonians~\cite{leverrier2022localtestabilityquantumcoding}:
\begin{equation}\label{equation:operator-soundness}
\frac{H}{m}\geq\sigma\frac{\Delta_{\mc{C}}}{N},\quad \sigma>0.
\end{equation}
Since the spaces $V_j$ are nested, this inequality bounds the weight outside $V_j$ for every low-energy state.
It applies to arbitrary coherent superpositions, which is needed when the spectral argument compares entire subspaces.

Combining these properties with a growing prime divisor of the ground space degeneracy gives the following criterion, proved in Section~\ref{section:intrinsic-magic-lower-bound}.

\begin{theorem}[Intrinsic magic complexity]\label{theorem:intrinsic}
Consider a family of Hamiltonians $H_n=\sum_{i=1}^{m_n}h_{n,i}$ on $n$ qubits, defined for an unbounded sequence of sizes $n$, with $0\le h_{n,i}\le I$ and uniformly bounded term supports.
Let $\mc{C}_n\coloneqq\ker H_n$ be their nonzero ground spaces.
Suppose there are constants $\Delta_*,\sigma_*>0$, independent of $n$, such that:
\begin{enumerate}[label=(\roman*),leftmargin=*,itemsep=2pt,topsep=3pt,parsep=0pt]
\item \textbf{Linear distance.} The code $\mc{C}_n$ has distance at least $\Delta_*n$.
\item \textbf{Local testability.} The Hamiltonian satisfies $\frac{H_n}{m_n}\geq\sigma_*\frac{\Delta_{\mc{C}_n}}{n}$.
\item \textbf{Growing prime factor.} $\dim\mc{C}_n$ has a prime divisor $p_n$ with $p_n\to\infty$.
\end{enumerate}
Let $0\le e<\sigma_*\Delta_*/2$ be an energy density threshold.
There is a constant $c_e>0$, independent of $n$ and $d_{\max}$, such that the following holds for all sufficiently large $n$:
For every integer $2\le d_{\max}<p_n$, every $n$-qubit state $\rho$ with $\Tr(H_n\rho)/m_n\le e$ satisfies
\begin{equation}\label{equation:intrinsic-complexity-bound}
\Cmagicd{d_{\max}}(\rho)>\frac15\log n-\frac15\log \log d_{\max}-c_e.
\end{equation}
Since $p_n\to\infty$, for every $d_{\max}$ independent of $n$
we have $\Cmagicd{d_{\max}}(\rho)\ge\Omega(\log n)$.
\end{theorem}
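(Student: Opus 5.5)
We argue by contradiction, following the outline in the introduction. Suppose $\rho$ has energy density at most $e$ and admits a preparation~\eqref{equation:intrinsic-preparation} in which every branch has depth $D\le\frac15\log n-\frac15\log\log d_{\max}-c_e$.

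\textbf{Reduction to one pure branch.} Since energy is linear, some branch $j$ has $\Tr(H_n\rho_j)/m_n\le e$. We purify by keeping the discarded qudits; this does not change the energy of $H_n\otimes I$. So it suffices to treat a single pure state $V\ket{S}$, with $V$ of depth $D$ and $\ket{S}$ a qudit stabilizer state with local dimensions at most $d_{\max}$.

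\textbf{A stabilizer code from the light cones.} Each term $V^\dagger h_iV$ is supported on a backward light cone $L_i$ of size at most $2^D\cdot O(1)$. Let $\mc{K}$ be the stabilizer code whose stabilizer group is generated by the elements of $\operatorname{Stab}(S)$ supported in some $L_i$. Its reduced density matrix on each $L_i$ is that of $\ket{S}$, so every state of $V\mc{K}$ has the same energy, at most $e\,m_n$. The dimension of a qudit stabilizer code is a product of divisors of the local dimensions (via the Smith normal form of the cyclic Pauli group). Hence every prime factor of $\dim\mc{K}$ is at most $d_{\max}<p_n$, and so $\dim\mc{C}_n\nmid\dim\mc{K}$. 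The rest of the proof derives the opposite divisibility.

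\textbf{Confining $V\mc{K}$ near the code.} By (ii), every unit vector $\ket\phi\in V\mc{K}$ satisfies $\langle\Delta_{\mc{C}}\rangle\le en/\sigma_*$. Since the spaces $V_j$ are nested, $\langle I-\Pi_{V_j}\rangle$ is nonincreasing in $j$. Therefore $(t+1)\langle I-\Pi_{V_t}\rangle\le en/\sigma_*$, and with $t=\lfloor\Delta_*n/2\rfloor$ the weight outside $V_t$ is at most $\epsilon\coloneqq 2e/(\sigma_*\Delta_*)+o(1)<1$. This is where the threshold $e<\sigma_*\Delta_*/2$ enters. Because the bound holds for all superpositions, it holds as an operator inequality: $P_{V\mc{K}}(I-\Pi_{V_t})P_{V\mc{K}}\le\epsilon P_{V\mc{K}}$.

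\textbf{Approximating $P_{V\mc{K}}$ by a local operator.} $P_{\mc{K}}$ is a product of commuting projectors onto the local stabilizers, which can be grouped by light cone. We want an operator $F$, for instance $V\bigl(\prod_i \tfrac{1}{|G_i|}\sum_{g\in G_i}g\bigr)V^\dagger$ or a polynomial in such local terms, that approximates $P_{V\mc{K}}$. The constraint is that $F$ must act on $V_t\simeq\mc{C}\otimes\mc{G}_t$ as $I_{\mc{C}}\otimes B$. Distance (i) guarantees this form for any product of operators of total support below $\Delta_*n-2t$. So $F$ must be a linear combination of products each involving few local projectors, for example a truncated Taylor expansion or Chebyshev filter of $\sum_i(I-Q_i)$, where $Q_i$ are the conjugated local projectors. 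The relevant spectral gap comes from the fact that $\sum_i(I-Q_i)$ vanishes exactly on $V\mc{K}$ and is at least $1$ on its complement. The filter degree needed is about $\log(n\log d_{\max})$. Each factor has support $2^{O(D)}$ and there are $O(\log n)$ factors, which produces exactly the condition $2^{5D}\log d_{\max}\lesssim n$ behind the bound $\frac15\log n-\frac15\log\log d_{\max}$. The $\log d_{\max}$ enters through the number of qudits in a light cone, i.e.\ through local Hilbert space dimensions. Fixing $c_e$ makes these supports stay below $\Delta_*n/2$.

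\textbf{Counting eigenvalues.} Compress $F$ to $V_t$, where it has the form $I_{\mc{C}}\otimes B$, so each eigenvalue multiplicity is divisible by $\dim\mc{C}_n$. Using $\epsilon<1$ and the approximation error, min-max comparison with $P_{V\mc{K}}$ shows that $\Pi_{V_t}F\Pi_{V_t}$ has exactly $\dim\mc{K}$ eigenvalues above $1/2$ and the rest below. The upper bound uses that $F$ is close to a rank-$\dim\mc{K}$ projector. The lower bound uses that $\Pi_{V_t}$ restricted to $V\mc{K}$ is injective with singular values at least $\sqrt{1-\epsilon}$. It follows that $\dim\mc{C}_n\mid\dim\mc{K}$, a contradiction.

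\textbf{Main obstacle.} The hardest step is the previous one. We need an $F$ that is simultaneously (a) close enough to $P_{V\mc{K}}$ in a sense compatible with the leakage $\epsilon$, which may be a constant rather than small, and (b) a low-support polynomial. Reconciling (a) and (b) with a constant $\epsilon$ may require a sharper separation than min-max gives. My first attempt is to push $t$ and $\epsilon$ and to use a gap-amplifying polynomial so that eigenvalues cluster near $0$ and $1$ up to error $1/\mathrm{poly}(n)$. This error is far smaller than the gap $1-\epsilon$ produced by compression.
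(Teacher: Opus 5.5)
\textbf{The gap is in your code $\mc{K}$.} You take its stabilizer group to be the subgroup generated by the elements of $\operatorname{Stab}(S)$ supported inside single light cones. For that subgroup, only the maximally mixed state $P_{\mc K}/\dim\mc K$ has the same marginals on the light cones as $\ket{S}$; pure states of $\mc K$ generally do not. For instance, take $V=I$, $\ket{S}$ the $n$-qubit GHZ state, and terms $(I-Z_iZ_{i+1})/2$ and $(I-Z_1)/2$. The local stabilizers $Z_iZ_{i+1}$ give $\mc K=\operatorname{span}\{\ket{0^n},\ket{1^n}\}$, on which the energy takes the values $0$ and $1$, while $\ket{S}$ has energy $1/2$.

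So you only get the averaged bound $\Tr\bigl(P_{V\mc K}(I-\Pi_{V_t})\bigr)\le\epsilon\dim\mc K$, and this is useless for the eigenvalue count. It allows vectors of $V\mc K$ lying entirely outside $V_t$, so $\Pi_{V_t}$ need not be injective on $V\mc K$. Then the number of large eigenvalues of the compressed filter need not equal $\dim\mc K$.

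The missing idea is the stabilizer completion of Lemma~\ref{lemma:stabcompletion}:
\begin{enumerate}
\item Extend the local subgroup to a \emph{maximal} commuting family of generalized Paulis, each supported in one light cone.
\item Split $\im P_{\mc K}$ into their joint eigenspaces; their mean energies average to $\bra{S}H'\ket{S}$.
\item Keep one eigenspace with mean energy $E_*\le\bra{S}H'\ket{S}$.
\end{enumerate}
Maximality forces every operator supported in a single light cone to compress to a scalar on this eigenspace. This gives $PH'P=E_*P$, which is exactly what your operator inequality needs. The rank is still $\prod_v d_v/|G|$, so its prime factors stay at most $d_{\max}$. The generators supply the commuting projectors $V(I-\Pi_j)V^\dagger$ to be filtered, with $s\le n2^D\log d_{\max}$.

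\textbf{Three quantitative points also need fixing.}

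First, $t=\lfloor\Delta_*n/2\rfloor$ leaves essentially no support budget for $F$, because Lemma~\ref{lemma:correctable-space-protection} requires $2t+|\operatorname{supp}O|<d(\mc{C}_n)$. Instead, use the strict inequality $e<\sigma_*\Delta_*/2$ to choose $\alpha<\Delta_*/2$ and $0<a<1/2$ with $e<(1-2a)\sigma_*\alpha$. Set $t=\lfloor\alpha n\rfloor$ and spend the remaining $(\Delta_*-2\alpha)n$ on the filter.

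Second, a polynomial with $f(0)=1$ and $|f|\le a$ on $\{1,\dots,s\}$ needs degree of order $\sqrt{s}\ln(1/a)$, not $\log(n\log d_{\max})$. This Chebyshev degree, with $s\le n2^D\log d_{\max}$ and per-factor support $\ell2^{2D}$ ($\ell$ the term-support bound), gives total support $\ell\sqrt{n\log d_{\max}}\,2^{5D/2}$. That is where the exponent $\frac15$ comes from; your count of $O(\log n)$ factors would not produce it.

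Third, your ``main obstacle'' is not one, because a constant error $a$ suffices. If $\|P(I-Q)P\|_\infty<1-2a$ and $\|F-P\|_\infty\le a$, then $QPQ|_{\im Q}$ has exactly $\operatorname{rank}P$ eigenvalues above $2a$ and the rest are zero. Hence $QFQ|_{\im Q}$ has exactly $\operatorname{rank}P$ eigenvalues in $(a,\infty)$ (Lemma~\ref{lemma:spectral-rank-obstruction}). The separating threshold must be $a$ rather than $1/2$, since $1-\epsilon$ may be small. Driving $a$ down to $1/\mathrm{poly}(n)$ would add a $\log n$ factor to the degree and would not give the stated bound with a constant $c_e$.

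Finally, after purifying, take $Q=\Pi_{V_t}\otimes I_A$ and expand each filter term as $\sum_\mu O_\mu\otimes A_\mu$, so that the protection lemma applies to the qubit factors.
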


Linear distance (i) protects logical information within a correctable error space, while constant operator soundness (ii) bounds the weight of low-energy states outside this space.
Together, these properties keep states of sufficiently low energy density largely within a space where the logical information remains protected, sustaining the obstruction to shallow preparation.


\subsection{An intrinsic NLTM family}\label{sec:intrinsic_NLTM}



We present in Section~\ref{section:intrinsic-construction} a family of qubit gauged sheaf codes that combines positive rate, linear distance, and local testability with a growing prime divisor of the code dimension.

\begin{theorem}[Locally testable codes for intrinsic NLTM]\label{theorem:intrinsic-code}
There is a family of frustration-free qubit Hamiltonians $H_\nu$ on $N_\nu\to\infty$ qubits, with $m_\nu=\Theta(N_\nu)$ terms and bounded locality and degree, satisfying the hypotheses of Theorem~\ref{theorem:intrinsic}, including (i)--(iii).
Their ground spaces also have positive constant coding rate.
\end{theorem}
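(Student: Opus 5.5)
The construction takes a family of good quantum locally testable codes (from \cite{gay2026asymptoticallygoodquantumlocally,BLN2026,CHLT2026}) as a chain complex over $\mathbb{F}_2$, and gauges it with a non-Abelian group via the cup product structure of \cite{li2026transversalnoncliffordgatesgood,LWHL2609nonAbelian}. The input is a length-four (or longer) cochain complex $C^0\to C^1\to C^2\to C^3$ from the qLTC family, with cup products $C^1\times C^1\to C^2$ and $C^1\times C^1\times C^1\to C^3$ that are bounded-degree and local. For each $n$ I would choose a finite non-Abelian group $G_n$, for instance a Heisenberg-type extension built from cyclic groups $\mathbb{Z}_{q}$ with $q$ ranging over a sequence of primes $q_\nu\to\infty$. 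Each group element is encoded in $\lceil\log_2|G|\rceil$ qubits, and the non-code basis states are penalized by single-site projectors. The Hamiltonian is $H_\nu=\sum(\text{vertex gauge terms})+\sum(\text{flatness terms})+\sum(\text{encoding penalties})$. It is frustration-free by construction: the ground space is spanned by gauge-invariant superpositions of flat $G$-valued cochains, that is, twisted-cohomology classes. Bounded locality and $m_\nu=\Theta(N_\nu)$ follow from the bounded degree of the complex and of the cup product, since $|G|$ depends only on the local alphabet.

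The argument has three steps.

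\textbf{Dimension.} I would compute $\dim\mc{C}_\nu$ as the number of gauge orbits of flat non-Abelian cocycles, or equivalently as a sum over irreducible representations of the gauge-invariant flat sectors. Using that the qLTC has $\dim H^1=\Theta(N)$, the ground space should contain a tensor factor of dimension $q_\nu^{k}$, contributed by the central $\mathbb{Z}_q$ part of the extension with nonvanishing cup product. I would arrange the group so that $q_\nu\mid\dim\mc{C}_\nu$ and the rate is positive. This gives (iii) and the rate claim.

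\textbf{Distance.} For (i) I would reduce to the distances of the underlying $\mathbb{Z}_q$ CSS layers: any operator of support below $\Delta_*N$ acts trivially on the ground space. This holds because each layer's logical operators are cocycles/cycles of linear weight, by the qLTC distance together with the cup-product locality bounds.

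\textbf{Soundness.} Condition (ii) is the main obstacle. Classical soundness of the underlying complex (cocycle and cycle expansion at levels $1$ and $2$) has to be upgraded to the operator inequality $H/m\ge\sigma\Delta_{\mc{C}}/N$ for the non-Abelian Hamiltonian. I would first work in the error-diagonal basis, labeled by gauge-invariant configurations (flux excitations, i.e.\ coboundary violations, and charge excitations, i.e.\ representation labels at vertices). In this basis $H$ is diagonal, with energy counting violated terms. I would then show that each eigenspace of $H$ with $\ell$ violations lies in $V_{j}$ for $j\le C\ell N/m$. This step uses soundness of the complex to find, for any configuration with few flux violations, a nearby flat configuration at distance $O(\#\text{violations})$, and the analogous dual statement for charges. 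The non-Abelian twist enters through the cup-product correction terms, and I would control these by locality: correcting a cochain at $s$ sites changes cup-product terms at $O(s)$ sites. Once $H$ is shown to be block-diagonal with this containment, summing $\sum_j(I-\Pi_{V_j})\le \sum_\ell C\ell N/m\cdot P_\ell$ gives the inequality with constant $\sigma$. The delicate point is making the eigenspace containment exact despite the non-commuting gauge terms. I would handle it by decomposing $\mathcal{H}$ into superselection sectors labeled by excitation patterns, which the gauged structure makes well defined.
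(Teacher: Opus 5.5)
There is a genuine gap, and it is exactly the point the theorem is about. Your mechanism for (iii) puts the growing prime into the gauge group itself: $\mathbb{Z}_{q_\nu}\subset G_\nu$ with $q_\nu\to\infty$. Each site then carries $\lceil\log_2|G_\nu|\rceil\to\infty$ qubits. Every gauge, flatness or penalty term acts on whole sites; for instance, a shift by an element of order $q_\nu$ touches all qubits encoding that site. So each term acts on $\Omega(\log q_\nu)$ qubits. This violates the bounded locality in the statement and the uniformly bounded term supports assumed in Theorem~\ref{theorem:intrinsic}, whose proof uses a fixed $\ell$ to choose $c_e$ independently of $n$. Your sentence ``$|G|$ depends only on the local alphabet'' contradicts $q_\nu\to\infty$.

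If a growing alphabet were allowed, no non-Abelian gauging would be needed: an abelian $\mathbb{Z}_{q_\nu}$-qudit CSS code already has dimension a power of $q_\nu$. The whole difficulty is to get an unbounded prime divisor from a fixed alphabet, and your proposal does not address it. The paper keeps $\mathbb{F}_q$, $q=2^{s_q}$, fixed and takes the prime from the geometry of the covering family, not from the alphabet:
\begin{enumerate}
\item It picks a deck transformation $\rho\in\mathrm{SL}_2(\mathbb{F}_{Q_\nu})^4$ of prime order $p_\nu$, a primitive prime divisor of $s_0^{2\nu}-1$, so $p_\nu\ge2\nu+1$.
\item It replaces the third block by $E^j=C^j\oplus C^{j-1}$ with $\delta_E(y,x)=(\delta_Cy,\,y-\rho y+\delta_Cx)$, so that the classes of $E$ project to $\rho$-invariant classes.
\item It gauges with $\widetilde I_\xi(a,b,e)=I_\xi(a,b-\rho b,\pi e)$ and imposes the twisted condition $\widetilde I_\xi(a,b,w)=\lambda(w)$, where $\lambda\neq0$ by Lemma~\ref{lemma:seed_cup}.
\item The map $([a],[b],[e])\mapsto([\rho a],[\rho b],[e])$ then preserves the set $L_\lambda$ of compatible triples and has no fixed points. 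A fixed triple would make $b-\rho b$ a coboundary and force $\widetilde I_\xi(a,b,\cdot)\equiv0\neq\lambda$.
\end{enumerate}
Hence every orbit has size $p_\nu$, and $p_\nu\mid|L_\lambda|=\dim\mathcal{C}_\nu$. This is an orbit-counting argument, not a tensor factor.

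Your soundness step also has a gap. In cup-product gauged codes the dressed $X$ terms commute only on flat configurations. The commutator of an $A$-term and a $B$-term is a diagonal phase depending on $\delta x_C$, via $I_\xi(v_A,v_B,\delta x_C)$ by the Leibniz rule. So inside a nonzero flux sector, $H$ is not diagonal in any basis labeled by excitation patterns. Flux superselection sectors are preserved, but that does not restore commutation within them.

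The paper proves $\Delta_{\mathcal{C}_\nu}\le\ell\widetilde H_X$ only on $\ker H_Z$. There it uses the joint eigenspaces $P_\gamma$ of the commuting dressed stabilizers, and qLTC soundness plus distance give $\operatorname{ran}P_\gamma\subseteq V_{\min\{N_\nu,\ell|\gamma|\}}$. It then maps each syndrome-$\eta$ sector into $\ker H_Z$ with a Pauli $X(r)$ of weight $w\le\kappa_Z|\eta|$ supplied by $Z$-soundness. Using $\Delta_{\mathcal{C}_\nu}\le wI+X(r)^\dagger\Delta_{\mathcal{C}_\nu}X(r)$ and $X(r)^\dagger\widetilde H_XX(r)\le\widetilde H_X+g_XwI$, the cost is an additive $O(H_Z)$, which yields $\Delta_{\mathcal{C}_\nu}\le AH_\nu$.
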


Our construction produces nonadditive qubit codes. Its constant coding rate is actually unnecessary in establishing the intrinsic NLTM theorem.
Combining Theorems~\ref{theorem:intrinsic} and~\ref{theorem:intrinsic-code} yields: 

\begin{corollary}[Intrinsic NLTM]\label{corollary:intrinsic-family}
There is a family of frustration-free qubit local Hamiltonians ${H_N}$, with $m_N=\Theta(N)$ local terms, 
and constants $\varepsilon,\kappa,c>0$, independent of $N$,
such that for every given $d_{\max}\ge2$
and all sufficiently large $N$, every $N$-qubit state $\rho$ with low energy density $\Tr(H_N\rho)/N\le\varepsilon$ obeys

\begin{equation}
\Cmagicd{d_{\max}}(\rho)>\kappa \log N-\frac15\log\log d_{\max}-c=\Omega(\log N).
\end{equation}
\end{corollary}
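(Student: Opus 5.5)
The corollary should follow by combining Theorem~\ref{theorem:intrinsic-code} with Theorem~\ref{theorem:intrinsic}; the remaining work is translating normalizations and choosing constants uniformly in $d_{\max}$.

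First, take the family $H_\nu$ on $N_\nu$ qubits from Theorem~\ref{theorem:intrinsic-code}. It has $m_\nu=\Theta(N_\nu)$ terms with $0\le h_{\nu,i}\le I$ and bounded supports, and it satisfies (i)--(iii) with constants $\Delta_*,\sigma_*$. Index it by $N=N_\nu$. If several $\nu$ share the same $N$, keep one; the sizes are unbounded, which is all Theorem~\ref{theorem:intrinsic} needs.

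Second, convert the energy condition. Fix constants $a>0$ with $m_N\ge aN$ for all large $N$. Fix the threshold $e\coloneqq\sigma_*\Delta_*/4<\sigma_*\Delta_*/2$, and set $\varepsilon\coloneqq ae$. Then $\Tr(H_N\rho)/N\le\varepsilon$ implies
\[
\Tr(H_N\rho)/m_N\le\varepsilon N/m_N\le e .
\]
Theorem~\ref{theorem:intrinsic} supplies $c_e$ independent of $n$ and $d_{\max}$. Take $\kappa=1/5$ and $c=c_e$, so that for all large $N$ and every $2\le d_{\max}<p_N$ we get
\[
\Cmagicd{d_{\max}}(\rho)>\tfrac15\log N-\tfrac15\log\log d_{\max}-c .
\]

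Third, handle the quantifier order. For a given fixed $d_{\max}$, condition (iii) gives $p_N\to\infty$, so $d_{\max}<p_N$ for all sufficiently large $N$. Here "sufficiently large" depends on $d_{\max}$, which the corollary's wording allows. The threshold $N_0$ in Theorem~\ref{theorem:intrinsic} ("for all sufficiently large $n$") does not depend on $d_{\max}$, so the final threshold is $\max(N_0,N_{d_{\max}})$. Since $\log\log d_{\max}$ is a constant for fixed $d_{\max}$, the right-hand side is $\Omega(\log N)$.

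The only delicate point is keeping $\varepsilon$, $\kappa$ and $c$ independent of $d_{\max}$. This holds because $e$ depends only on $\sigma_*,\Delta_*$, the constant $c_e$ is stated to be independent of $d_{\max}$, and $a$ comes from the code family alone. All $d_{\max}$-dependence sits in the explicit $\log\log d_{\max}$ term and in how large $N$ must be.
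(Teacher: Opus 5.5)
Your proposal is correct and follows the paper's route: the paper obtains the corollary by combining Theorem~\ref{theorem:intrinsic-code} with Theorem~\ref{theorem:intrinsic}. Your steps supply the bookkeeping the paper leaves implicit. You convert the threshold $\Tr(H_N\rho)/N\le\varepsilon$ into $\Tr(H_N\rho)/m_N\le e$ using $m_N\ge aN$ with $e=\sigma_*\Delta_*/4$, and you take $\kappa=1/5$ and $c=c_e$. You also note correctly that $d_{\max}$ enters only through the $\log\log d_{\max}$ term and through the size threshold needed for $p_N>d_{\max}$.
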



We remark that binary and intrinsic NLTM (Fig.~\ref{figure:nlts-nltm-preparation}) differ even at the level of ground-state preparation.
The ternary code constructed in Appendix~\ref{appendix:ternary} satisfies binary NLTM, but some of its ground states can be prepared in depth two from qutrit stabilizer inputs.
Their magic circuit complexity is thus at most two for $d_{\max}=3$, so this family does not satisfy intrinsic NLTM.

\section{Magic complexity lower bounds for low-energy states}
\label{section:intrinsic-magic-lower-bound}
\label{section:arithmetic-obstruction}

We prove Theorem~\ref{theorem:intrinsic} by combining logical protection from code distance, leakage control from operator soundness, and an arithmetic obstruction from the ground space dimension.
Suppose a shallow circuit maps a qudit stabilizer state to a low-energy state.
Stabilizer completion (Lemma~\ref{lemma:stabcompletion}) gives a nonzero qudit stabilizer code whose entire image under the circuit has the same mean energy, no greater than that of the original stabilizer state.
Soundness limits how much of a low-energy state lies outside the space generated by correctable errors acting on the ground space. Within this space, code distance protects all logical information encoded in the ground space.
For sufficiently shallow circuits, a polynomial approximation then forces the ground space dimension to divide the dimension of the completed stabilizer code.
This is impossible when the ground space dimension has a prime divisor larger than every local qudit dimension.

\subsection{Stabilizer completion}

We first extend the energy bound of a single stabilizer state to an entire stabilizer subspace.

\begin{lemma}[Stabilizer completion]\label{lemma:stabcompletion}
Let $H'=\sum_iO_i$ act on a finite qudit system with local dimensions $d_v$, where each $O_i$ is Hermitian and supported within $A_i$.
For every  qudit stabilizer state $\ket{S_0}$, there is a nonzero qudit stabilizer projector $P$, defined by $s$ commuting Pauli constraints each supported within some $A_i$, such that
\begin{equation}\label{equation:inf-uniform-energy}
PH'P=E_*P,\quad E_*\le\bra{S_0}H'\ket{S_0}.
\end{equation}
The rank of $P$ divides $\prod_v d_v$, and the generators can be chosen with $s\le\sum_{v\in\bigcup_iA_i}\log d_v$.
\end{lemma}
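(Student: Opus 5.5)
We plan to build $P$ in two stages. First we pass from $\ket{S_0}$ to the code cut out by the \emph{local part} of its stabilizer group. Then we refine this code greedily by local Pauli constraints, never increasing the mean energy, until $PH'P$ is scalar.

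\emph{Initial code.} Let $G=\operatorname{Stab}(S_0)$, and let $G_0\le G$ be the subgroup generated by elements of $G$ supported within some $A_i$. Let $P_0$ be the projector onto its common $+1$ eigenspace; it is nonzero since $\ket{S_0}\in\operatorname{range}P_0$. Define the mean energy $E(P)\coloneqq\Tr(PH')/\Tr(P)$.

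We claim $E(P_0)=\bra{S_0}H'\ket{S_0}$. Expand each $O_i$ in generalized Paulis $Q$ supported within $A_i$. For such $Q$ there are two cases.
\begin{itemize}
\item If $Q=\omega g$ with $g\in G$, then $g$ is supported within $A_i$, so $g\in G_0$. Hence $\Tr(P_0Q)/\Tr P_0=\omega=\bra{S_0}Q\ket{S_0}$.
\item Otherwise $\bra{S_0}Q\ket{S_0}=0$. We also have $\Tr(P_0Q)=0$, because $P_0=|G_0|^{-1}\sum_{g\in G_0}g$ and $\Tr(gQ)=0$ unless $Q$ is proportional to $g^{-1}$.
\end{itemize}

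\emph{Refinement.} Suppose $P$ is the stabilizer projector of an abelian group $G_P\supseteq G_0$ with no nonidentity scalar, generated by Paulis each supported within some $A_i$. The same trace computation shows that
\[
PH'P=\sum_i\sum_{Q}c_{i,Q}\,PQP,
\]
where $PQP$ equals $\omega P$ if $Q\in\omega G_P$, vanishes if $Q$ fails to commute with some element of $G_P$, and otherwise is a nonscalar logical operator.

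If $PH'P$ is not a multiple of $P$, pick a local Pauli $Q$ of the third kind. It commutes with $G_P$, so $P=\sum_\lambda P_\lambda$, where $P_\lambda$ is the projector onto the $\lambda$-eigenspace of $Q$ inside $\operatorname{range}P$. Each nonzero $P_\lambda$ is the stabilizer projector of $\langle G_P,\lambda^{-1}Q\rangle$. This group contains no nonidentity scalar, because it has a common eigenvector.

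The mean energy $E(P)$ is the convex combination of the $E(P_\lambda)$ with weights $\Tr P_\lambda/\Tr P$. So some branch has $E(P_\lambda)\le E(P)$, and we replace $P$ by that branch. Since $Q$ acts nonscalarly on $\operatorname{range}P$, the rank strictly drops, and the process terminates.

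At termination every local Pauli term gives a scalar, so $PH'P=E_*P$. We also get $E_*=E(P)\le E(P_0)=\bra{S_0}H'\ket{S_0}$.

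\emph{Counting.} For an abelian Weyl group $G_P$ without nontrivial scalars, $P=|G_P|^{-1}\sum_{g\in G_P}g$, so $\operatorname{rank}P=\Tr P=\prod_vd_v/|G_P|$. This is a positive integer dividing $\prod_vd_v$. Modulo phases, each element of $G_P$ is determined by its restriction to $U=\bigcup_iA_i$. Thus $G_P$ embeds into a subgroup of phase classes of Paulis on $U$, which together with commutativity gives $|G_P|\le\prod_{v\in U}d_v$.

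Take generators along the chain of subgroups built above, choosing a minimal generating set of $G_0$ first. Each added generator at least doubles the group order, so $s\le\log_2|G_P|\le\sum_{v\in U}\log d_v$.

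\emph{Main obstacle.} The delicate step is the composite and unequal-dimension Weyl bookkeeping. We need to check that $\Tr(gQ)=0$ for non-proportional Paulis, and that no scalar phase sneaks into the enlarged group; the latter is fixed by choosing $\lambda$ to be an actual eigenvalue on $\operatorname{range}P$. We also need the bound $|G_P|\le\prod_{v\in U}d_v$ for abelian subgroups of $\prod_v\mbb{Z}_{d_v}^2$. We would verify this bound via the isotropic-subgroup order bound for the symplectic form on $\bigoplus_v\mbb{Z}_{d_v}^2$, applied one primary component at a time.
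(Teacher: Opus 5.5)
Your proposal is correct and follows essentially the paper's route: restrict to the subgroup generated by the local stabilizers (same mean energy), refine by commuting local Paulis while keeping a joint eigenspace of no greater mean energy, and read off the rank from the group average. The paper adds a maximal commuting local family in one step instead of refining greedily, which also gives scalar compression of every operator supported in some $A_i$; and for $|G_P|\le\prod_{v\in U}d_v$ you can skip the isotropic-subgroup argument by applying your own rank formula to $G_P$ restricted to $U$, whose common $+1$ eigenspace is nonzero, giving $\prod_{v\in U}d_v/|G_P|\ge1$ as the paper does.
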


The construction retains the input stabilizers supported within individual regions $A_i$, extends them to a maximal commuting family of local constraints, and selects a joint eigenspace whose mean energy does not increase.
Maximality implies $PO_iP=c_iP$ for every $i$, where $c_i\in\mbb{R}$, and summing over $i$ gives $PH'P=E_*P$ with $E_*=\sum_i c_i$.
Thus every normalized state in $\im P$ has mean energy $E_*$, although $H'$ need not preserve this subspace.
Appendix~\ref{app:stabcomp} gives the proof.


\subsection{Logical protection and dimension constraints}


We next show that code distance gives the space generated by local errors a subsystem code structure, with the original code as a protected logical subsystem.
Let $\mc{C}$ be a nonzero code on $n$ qubits.
For an integer $0\le t\le n$, recall that $V_t$ is the span of Pauli errors of weight at most $t$ acting on $\mc{C}$, and let $\Pi_{V_t}$ be its orthogonal projector.

\begin{lemma}[Protected logical subsystem]\label{lemma:correctable-space-protection}
If $2t<d(\mc{C})$, there are a Hilbert space $\mc{G}_t$ and an onto isometry $W_t:\mc{C}\otimes\mc{G}_t\longrightarrow V_t$ such that every operator $O$ with $2t+|\operatorname{supp}O|<d(\mc{C})$ satisfies
\begin{equation}\label{equation:correctable-space-compression}
W_t^\dagger O W_t=I_{\mc{C}}\otimes B_O
\end{equation}
for some operator $B_O$ on $\mc{G}_t$.
\end{lemma}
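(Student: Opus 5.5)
The plan is to build $W_t$ from the Knill--Laflamme structure that the distance hypothesis gives. Let $\mathcal{E}_t$ be the set of Pauli operators of weight at most $t$. For any $E,F\in\mathcal{E}_t$, the product $E^\dagger F$ has support at most $2t<d(\mc{C})$. By the definition of distance, this gives
\begin{equation}
P_{\mc{C}}E^\dagger F P_{\mc{C}}=g_{EF}P_{\mc{C}}
\end{equation}
for some scalars $g_{EF}$. The matrix $G=(g_{EF})$ is the Gram matrix of the vectors $E\ket{\psi}$ for a fixed normalized $\ket{\psi}\in\mc{C}$, so it is positive semidefinite.

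I would take $\mc{G}_t\coloneqq\mbb{C}^{\mathcal{E}_t}/\ker G$, with the inner product induced by $G$; equivalently, $\mc{G}_t$ is the range of $G$. The map $W_t$ is defined on product vectors by
\begin{equation}
W_t(\ket{\psi}\otimes[e_E])=E\ket{\psi}
\end{equation}
and extended linearly.

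The first step is to check that $W_t$ is well defined and isometric:
\begin{equation}
\langle E\psi,F\phi\rangle=\bra{\psi}P_{\mc{C}}E^\dagger FP_{\mc{C}}\ket{\phi}=g_{EF}\langle\psi,\phi\rangle .
\end{equation}
This is exactly the inner product on $\mc{C}\otimes\mc{G}_t$. It also shows that vectors in $\mc{C}\otimes\ker G$ are sent to zero, so the quotient is harmless. The second step is surjectivity. The image contains every $E\ket{\psi}$ with $E$ a Pauli of weight at most $t$, and the paper notes that restricting to Pauli errors gives all of $V_t$. So $W_t$ is an onto isometry, hence unitary onto $V_t$.

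The third step is the compression identity. Take $O$ with $2t+|\operatorname{supp}O|<d(\mc{C})$. For $E,F\in\mathcal{E}_t$, the operator $E^\dagger OF$ has support of size less than $d(\mc{C})$. Hence
\begin{equation}
P_{\mc{C}}E^\dagger OFP_{\mc{C}}=b_{EF}P_{\mc{C}} .
\end{equation}
It follows that
\begin{equation}
\bra{\psi\otimes e_E}W_t^\dagger OW_t\ket{\phi\otimes e_F}=b_{EF}\langle\psi,\phi\rangle .
\end{equation}
Because $O$ is a linear combination of Paulis, the scalars $b_{EF}$ depend linearly on $O$, which is fine.

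Define the sesquilinear form $\beta(e_E,e_F)=b_{EF}$ on $\mbb{C}^{\mathcal{E}_t}$. It descends to the quotient $\mc{G}_t$: if $\sum_F c_Fe_F\in\ker G$, then $\sum_F c_F F\ket{\phi}=0$ for all $\phi\in\mc{C}$, which forces the corresponding $b$-combination to vanish. With the Gram inner product, $\beta$ is then represented by a unique operator $B_O$ on $\mc{G}_t$. Since the matrix elements of $W_t^\dagger OW_t$ on the spanning set $\{\psi\otimes[e_E]\}$ agree with those of $I_{\mc{C}}\otimes B_O$, the two operators are equal by sesquilinearity.

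The main obstacle, though a mild one, is the bookkeeping for degenerate errors, where $G$ is singular. One has to confirm that both $W_t$ and $B_O$ respect the quotient. I would handle this by working with an orthonormal basis of the range of $G$. Concretely, diagonalize $G=\sum_k\lambda_k u_ku_k^\dagger$ and set
\begin{equation}
\ket{k}\mapsto\lambda_k^{-1/2}\sum_E (u_k)_E\,E ,
\end{equation}
so that $W_t\ket{\psi\otimes k}=\lambda_k^{-1/2}\sum_E(u_k)_EE\ket{\psi}$. In this basis the identities above are immediate.
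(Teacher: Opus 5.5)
Your proposal is correct and follows essentially the same route as the paper: both use the Knill--Laflamme identity $P_{\mathcal{C}}E^\dagger FP_{\mathcal{C}}=g_{EF}P_{\mathcal{C}}$ to define $W_t(\ket{\psi}\otimes[E])=E\ket{\psi}$, and both use the detectability of $E^\dagger OF$ to obtain the tensor form $I_{\mathcal{C}}\otimes B_O$. The only difference is that the paper realizes $\mathcal{G}_t$ concretely as $\operatorname{span}\{E\ket{\psi_0}\}$ for a fixed $\ket{\psi_0}\in\mathcal{C}$, which is canonically isometric to your $\mathbb{C}^{\mathcal{E}_t}/\ker G$ via $[e_E]\mapsto E\ket{\psi_0}$; with that realization the degenerate-error bookkeeping is automatic, and $B_O$ is simply the compression $P_{\mathcal{G}_t}OP_{\mathcal{G}_t}$.
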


Appendix~\ref{appendix:error-space-protection} proves this lemma.


We next use this protection to derive a dimension constraint.
Let $\widehat{H}$ be a sum of commuting orthogonal projectors acting on the $n$ qubits and a finite auxiliary system $A$.
The projector onto $\ker\widehat{H}$ need not be local, so we approximate it by a Chebyshev polynomial in $\widehat{H}$ of low degree~\cite{arad2013arealawsubexponentialalgorithm,anshu2024nltshamiltoniansgoodquantum}.
If every term in the expansion acts on fewer than $d(\mc{C})-2t$ of the $n$ qubits, its compression has the form $I_{\mc{C}}\otimes B$.
Under the overlap condition below, the compressed polynomial has exactly $\dim\ker\widehat{H}$ eigenvalues above the approximation error, counted with multiplicity.
The tensor form makes this count divisible by $\dim\mc{C}$.

\begin{lemma}[Rank divisibility]\label{lemma:intrinsic-rank-divisibility}
Let $\mc{C}\ne\{0\}$ be a code on $n$ qubits and $A$ a finite auxiliary system.
For an integer $0\le t\le n$, set $Q\coloneqq\Pi_{V_t}\otimes I_A$.
Let $\widehat{H}=\sum_{j=1}^s g_j$ have nonzero kernel, where the $g_j$ are commuting orthogonal projectors on these qubits and $A$, each acting on at most $w$ of the $n$ qubits.
For every $0<a<1/2$, there is a constant $c_a>0$ with the following property:
If
\begin{equation}\label{equation:intrinsic-rank-conditions}
\|\Pi_{\ker\widehat{H}}(I-Q)\Pi_{\ker\widehat{H}}\|_\infty<1-2a,\quad 2t+c_aw\sqrt{\max\{1,s\}}<d(\mc{C}),
\end{equation}
then
\begin{equation}\label{equation:intrinsic-rank-divisibility}
\dim\mc{C}\mid\dim\ker\widehat{H}.
\end{equation}
\end{lemma}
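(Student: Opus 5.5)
We approximate the kernel projector $\Pi\coloneqq\Pi_{\ker\widehat H}$ by a low-degree polynomial in $\widehat H$, compress it into $V_t\otimes A$, and count eigenvalues. Since the $g_j$ are commuting projectors, $\widehat H$ has integer spectrum in $\{0,1,\dots,s\}$. Following the Chebyshev construction of~\cite{arad2013arealawsubexponentialalgorithm,anshu2024nltshamiltoniansgoodquantum}, for each $0<a<1/2$ there is a real polynomial $f$ of degree $\ell\le c'_a\sqrt{\max\{1,s\}}$ with $f(0)=1$ and $|f(x)|\le\delta$ for all integers $1\le x\le s$. Here $\delta=\delta(a)$ is a small constant fixed below. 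Set $F\coloneqq f(\widehat H)$, so $\|F-\Pi\|_\infty\le\delta$. We expand $F$ as a linear combination of products $g_{j_1}\cdots g_{j_k}$ with $k\le\ell$. Each such product acts on at most $\ell w\le c_a w\sqrt{\max\{1,s\}}$ of the $n$ qubits, tensored with something on $A$. The constant $c_a$ is chosen to absorb $c'_a$.

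Under the second condition in \eqref{equation:intrinsic-rank-conditions}, each product term has qubit support $<d(\mc C)-2t$. Also $2t<d(\mc C)$, so Lemma~\ref{lemma:correctable-space-protection} applies. Extending it trivially to the auxiliary system gives $(W_t\otimes I_A)^\dagger O(W_t\otimes I_A)=I_{\mc C}\otimes B_O$ with $B_O$ on $\mc G_t\otimes A$. We decompose each term into a sum of (qubit operator)$\otimes$(operator on $A$) and apply the lemma term by term. By linearity, $\widetilde F\coloneqq (W_t\otimes I_A)^\dagger F(W_t\otimes I_A)=I_{\mc C}\otimes B$ for a Hermitian $B$. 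Hence every eigenvalue of $\widetilde F$, equivalently of $QFQ$ restricted to $\operatorname{im}Q$, has multiplicity divisible by $\dim\mc C$.

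Next we show that $QFQ$ has exactly $K\coloneqq\dim\ker\widehat H$ eigenvalues above a threshold $\theta$, say $\theta=a$, and none in a window around it. Since $\|QFQ-Q\Pi Q\|\le\delta$, it suffices to study $Q\Pi Q$, whose nonzero spectrum equals that of $\Pi Q\Pi$ restricted to $\ker\widehat H$. By the first condition, $\Pi Q\Pi\ge (2a)\Pi$ on $\ker\widehat H$. So $Q\Pi Q$ has exactly $K$ eigenvalues in $[2a,1]$ and all others equal to $0$, since its rank is at most $K$. Weyl's inequality then puts $QFQ$'s eigenvalues within $\delta$ of these values. Choosing $\delta<a/2$, there are exactly $K$ eigenvalues $\ge 2a-\delta>a$ and the rest are $\le\delta<a$. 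The eigenvalues above $a$ form a union of full eigenspaces of $I_{\mc C}\otimes B$, so $\dim\mc C\mid K$.

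The main obstacle is the polynomial step. We must check that the degree bound $O_a(\sqrt s)$ holds uniformly with error $\delta(a)$ on the integer points $1,\dots,s$ (a rescaled Chebyshev polynomial on $[1,s]$ gives this). We must also check that the monomial support bound counts only the $n$ qubits, not $A$, when invoking Lemma~\ref{lemma:correctable-space-protection} on operators that are entangled with $A$. For the latter we would expand each monomial in a product basis on $A$; each resulting qubit factor has the same qubit support.
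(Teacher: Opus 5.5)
Your proposal is correct and follows essentially the same route as the paper. The paper uses the same Chebyshev filter of degree $O_a(\sqrt{\max\{1,s\}})$, expands each term in a product basis on $A$ so that Lemma~\ref{lemma:correctable-space-protection} applies with a single isometry, and counts eigenvalues via the $T^\dagger T$/$TT^\dagger$ correspondence plus a perturbation bound. The only differences are cosmetic: you take approximation error $\delta<a/2$ where the paper takes error $a$ and relies on the strict overlap hypothesis, and you argue directly where the paper factors the counting into the separate contrapositive Lemma~\ref{lemma:spectral-rank-obstruction}.
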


Appendix~\ref{appendix:refinement} gives the polynomial construction and the complete proof.

\subsection{Proof of the intrinsic NLTM criterion}

\begin{proof}
Let $0\le e<\sigma_*\Delta_*/2$.
We show that every sufficiently shallow preparation has energy density greater than $e$.
We first consider a single circuit branch and then extend the bound to finite mixtures.

Let $d_{\max}$ be an integer with $2\le d_{\max}<p_n$.
Suppose, for a contradiction, that
\begin{equation}\label{equation:intrinsic-assumed-energy}
\omega=\Tr_{A_{\mathrm{out}}}\bigl(V\ketbra{S_0}{S_0}V^\dagger\bigr),\quad E\coloneqq\Tr(H_n\omega)\le em_n,
\end{equation}
where $\ket{S_0}$ is a qudit stabilizer state with local dimensions at most $d_{\max}$, and $V$ has depth at most $D\in\mbb{Z}_{\ge0}$.
Set $H_{\mathrm{tot}}\coloneqq H_n\otimes I_{A_{\mathrm{out}}}$, and let $\ell\ge1$ be a uniform bound on the supports of the Hamiltonian terms.
For each term $h_{n,i}$, let $A_i$ be the backward light cone of its support under $V$.
Applying Lemma~\ref{lemma:stabcompletion} to $H'\coloneqq V^\dagger H_{\mathrm{tot}}V$ with these regions gives a nonzero stabilizer projector $P$ with $PH'P=E_*P$ for some $E_*\le E$.
Let $\Pi_1,\cdots,\Pi_s$ project onto the eigenspaces specified by the commuting Pauli constraints defining $P$, so that $P=\prod_{j=1}^s\Pi_j$.
Define
\begin{equation}
\widehat{H}\coloneqq\sum_{j=1}^s V(I-\Pi_j)V^\dagger.
\end{equation}
The summands of $\widehat{H}$ are commuting orthogonal projectors, $\Pi_{\ker\widehat{H}}=VPV^\dagger\ne0$, and
\begin{equation}
\Pi_{\ker\widehat{H}}H_{\mathrm{tot}}\Pi_{\ker\widehat{H}}=E_*\Pi_{\ker\widehat{H}}\leq E\Pi_{\ker\widehat{H}}.
\end{equation}

Each $A_i$ contains at most $\ell2^D$ input qudits.
Conjugation by $V$ therefore shows that each term of $\widehat{H}$ acts on at most $\ell2^{2D}$ qudits.
The union $\bigcup_iA_i$ lies within the backward light cones of the $n$ retained outputs and contains at most $n2^D$ input qudits.
The generator bound in Lemma~\ref{lemma:stabcompletion} yields $s\le n2^D\log d_{\max}$.
Moreover, $\dim\ker\widehat{H}=\operatorname{rank}P$ divides the total input Hilbert space dimension, so every prime divisor of $\dim\ker\widehat{H}$ is at most $d_{\max}$.

We next use soundness to verify the overlap condition in Lemma~\ref{lemma:intrinsic-rank-divisibility}.
Since $e<\sigma_*\Delta_*/2$, choose $0<a<1/2$ and $0<\alpha<\Delta_*/2$, depending only on $e$ and the fixed code parameters, such that $e<(1-2a)\sigma_*\alpha$.
Set $t\coloneqq\lfloor\alpha n\rfloor$.
For $0\le j\le t$, the inclusion $V_j\subset V_t$ gives $I-\Pi_{V_j}\geq I-\Pi_{V_t}$.
The first $t+1$ terms in the distance operator therefore imply
\begin{equation}\label{equation:distance-operator-tail}
\Delta_{\mc{C}_n}=\sum_{j=0}^{n-1}(I-\Pi_{V_j})\geq(t+1)(I-\Pi_{V_t}).
\end{equation}
Set $Q\coloneqq\Pi_{V_t}\otimes I_{A_{\mathrm{out}}}$.
By Eq.~\eqref{equation:operator-soundness} and Eq.~\eqref{equation:distance-operator-tail},
\begin{equation}
H_{\mathrm{tot}}\geq\frac{\sigma_*m_n}{n}\bigl(\Delta_{\mc{C}_n}\otimes I_{A_{\mathrm{out}}}\bigr)\geq\frac{\sigma_*m_n(t+1)}{n}(I-Q).
\end{equation}
Using the scalar compression of $H_{\mathrm{tot}}$ and the energy bound $E_*\le E\le em_n$, we obtain
\begin{align}
\begin{aligned}
    \Pi_{\ker\widehat{H}}(I-Q)\Pi_{\ker\widehat{H}}&\leq\frac{n}{\sigma_*m_n(t+1)}\Pi_{\ker\widehat{H}}H_{\mathrm{tot}}\Pi_{\ker\widehat{H}}\\
&=\frac{nE_*}{\sigma_*m_n(t+1)}\Pi_{\ker\widehat{H}}\\
&\leq\frac{ne}{\sigma_*(t+1)}\Pi_{\ker\widehat{H}}\\
&\leq\frac{e}{\sigma_*\alpha}\Pi_{\ker\widehat{H}}.\label{equation:intrinsic-output-overlap}
\end{aligned}
\end{align}
The last inequality uses $t+1>\alpha n$ and $e\ge0$.
Consequently,
\begin{equation}\label{equation:intrinsic-overlap-condition}
\|\Pi_{\ker\widehat{H}}(I-Q)\Pi_{\ker\widehat{H}}\|_\infty<1-2a.
\end{equation}

It remains to verify the support condition.
Set $\ell_D\coloneqq\ell2^{2D}$, which bounds the number of retained qubits on which each term of $\widehat{H}$ acts.
The generator bound gives
\begin{equation}\label{equation:intrinsic-support-bound}
c_a\ell_D\sqrt{\max\{1,s\}}\le c_a\ell\sqrt{n\log d_{\max}}\,2^{5D/2}.
\end{equation}
Choose $c_e>0$ large enough that $c_a\ell2^{-5c_e/2}<\Delta_*-2\alpha$.
This choice is independent of $n$ and $d_{\max}$.
If
\begin{equation}\label{equation:intrinsic-depth-budget}
D\le\frac15\log n-\frac15\log \log d_{\max}-c_e,
\end{equation}
then Eq.~\eqref{equation:intrinsic-support-bound} is strictly smaller than $(\Delta_*-2\alpha)n$.
Together with $2t\le2\alpha n$ and $d(\mc{C}_n)\ge\Delta_*n$, this yields
\begin{equation}
2t+c_a\ell_D\sqrt{\max\{1,s\}}<d(\mc{C}_n).
\end{equation}
Lemma~\ref{lemma:intrinsic-rank-divisibility} now implies $\dim\mc{C}_n\mid\dim\ker\widehat{H}$.
But $p_n>d_{\max}$ divides $\dim\mc{C}_n$, whereas every prime divisor of $\dim\ker\widehat{H}$ is at most $d_{\max}$, a contradiction.

Every single circuit branch of depth at most $D$ satisfying Eq.~\eqref{equation:intrinsic-depth-budget} therefore has energy density greater than $e$.
By linearity of the energy, the same strict bound holds for every finite mixture of such branches.
This proves \eqref{equation:intrinsic-complexity-bound}, with $c_e$ independent of $n$ and $d_{\max}$.
Since $p_n\to\infty$, the bound holds for every given $d_{\max}$ for all sufficiently large $n$.
\end{proof}


\section{Intrinsic NLTM from gauged sheaf codes}\label{section:intrinsic-construction}

We proof Theorem~\ref{theorem:intrinsic-code} in this section by using the gauged sheaf codes constructed in Appendix~\ref{appendix:gauged_code}. Its ground space dimension has a prime divisor that tends to infinity, while the local dimensions stays fixed and the relative distance and soundness have uniform positive lower bounds. The construction therefore satisfies the intrinsic NLTM criterion of Theorem~\ref{theorem:intrinsic}.

\subsection{Gauged sheaf code construction}

We start with three good quantum locally testable code blocks $A,B,C$ from the construction of Ref.~\cite{gay2026asymptoticallygoodquantumlocally}, with more detailed construction specified in Appendix~\ref{appendix:gauged_qLTC}. The number field $\mbb F_q$, where $q=2^{s_q}$, is fixed throughout the family. All three blocks have linear distance and constant soundness, and $A$ has positive rate.
The gauging construction of \cite{Zhu2026,christos2026nonabelianquantumlowdensityparity,LWHL2609nonAbelian,zhu2026nonAbelian} uses a trilinear invariant form $I_\xi$ on the cochains of these blocks. It is defined by cellular cup product as in Eq.~\eqref{eq:invariant_form}, and on cocycles, its value depends only on the cohomology classes (see also e.g., \cite{Li2025Poincare,LSWLL2026Theory,LLL2026nontrivial} for more details). Gauging leaves the $Z$ stabilizers unchanged and multiplies the $X$-stabilizer generators $\mathcal{A}^{(a)}_{\alpha v}$ with $a = A,B,C$ by phases determined by $I_\xi$: 
\begin{align}\label{eq:dressed_X_main}
	\begin{aligned}
		\mathcal{A}^{(A)}_{\alpha_A v_A} \ket{x_A,x_B,x_C}
		& = (-1)^{\Tr_{\mbb F_q/\mbb F_2} I_\xi(\alpha_A v_A, x_B,x_C)}
		\ket{x_A + \alpha_A \delta_A v_A ,x_B,x_C}, \\
		\mathcal{A}^{(B)}_{\alpha_B v_B} \ket{x_A,x_B,x_C}
		& = (-1)^{\Tr_{\mbb F_q/\mbb F_2} I_\xi(x_A, \alpha_B v_B,x_C)} \ket{x_A,x_B + \alpha_B \delta_B v_B, x_C}, \\
		\mathcal{A}^{(C)}_{\alpha_C v_C} \ket{x_A,x_B,x_C}
		& = (-1)^{\Tr_{\mbb F_q/\mbb F_2} I_\xi(x_A,x_B,\alpha_C v_C)} \ket{x_A,x_B,x_C + \alpha_C \delta_C v_C}.
	\end{aligned}
\end{align}
where $\alpha_a\in\mbb F_q$ for $a=A,B,C$, and $v_a$ and $x_a$ are cochains. The cochain complex language of sheaf codes is articulated in Appendix~\ref{appendix:gauged_intro}. All these stabilizer generators have constant weights and act on constant numbers of physical qudits. 

The gauged sheaf code in \cite{LWHL2609nonAbelian} achieves good parameters, but we need to step further here to obtain a growing prime divisor of the full code dimension in Theorem \ref{theorem:intrinsic-code}. To this end, we replace $C$ by a modified code $E$ and define another invariant form $\widetilde{I}_\xi$ in Eqs.~\eqref{eq:modified_complex} and~\eqref{eq:modified_form}, while keeping $A,B$ unchanged. These changes will facilitate us to partition all compatible logical representatives into orbits of the same prime size, as explained in the next subsection. The gauging process can be defined as before with dressed $X$-stabilizers $\widetilde{\mathcal{A}}^{(a)}_{\alpha v}$ for $a = A,B,E$. 
For the family member indexed by $\nu$, the gauged code has the following Hamiltonian:
\begin{align}
	\begin{aligned}
		H_\nu={}&H_Z+\sum_{a\in\{A,B,E\}}\sum_{v_a}
		\left(I-\frac1q\sum_{\alpha\in\mbb F_q}\widetilde{\mathcal{A}}^{(a)}_{\alpha v_a} \right), 
		\qquad \mc C_\nu=\ker H_\nu.
	\end{aligned}
\end{align}
Here $H_Z$ is the sum of the $Z$-check projectors, and $v_a$ range over the coordinate basis vectors indexing the local $X$ generators in $A,B$ and $E$, respectively. Every summand in the Hamiltonian is an orthogonal projector. The original $A,B,C$ are defined on sparse cubical complexes with fixed local coefficient spaces (see Appendix~\ref{appendix:gauged_qLTC}), hence they are qLDPC codes. The modifications on $E$ only enlarge supports by fixed factors, and gauging process uses $I_\xi$ defined cup product also has constant degree and incidence. As a result, $\mc C_\nu$ is a qLDPC code: each term acts on a bounded number of sites and each site meets a bounded number of terms, uniformly in $\nu$. Each site comprises $s_q$ qubits. Write $N_\nu$ for the total number of qubits and $m_\nu = \Theta(N_\nu)$ for the number of projector terms.

\subsection{Growing prime divisors of the ground space dimension}

It is possible to find an explicit orthonormal basis for the gauged code by using the gauge conditions in \cite{LWHL2609nonAbelian}. With the modified code $E$, these conditions read 
\begin{align}\label{eq:gauge_condition_main}
	\begin{aligned}
		\widetilde I_\xi(u,x_B,x_E) & = 0 && \text{for every cocycle } u \in C^{j_A-1}(X,\mathcal F),\\
		\widetilde I_\xi(x_A,v,x_E) & = 0 && \text{for every cocycle } v \in C^{j_B-1}(X,\mathcal G),\\
		\widetilde I_\xi(x_A,x_B,w) & = \lambda(w) && \text{for every cocycle } w \in E^2.
	\end{aligned}
\end{align}
It says the cocycles $x_A,x_B,x_E$ can represent a gauged code states if their cup product and pairing vanish as above. The formal definitions are presented in Section \ref{appendix:gauged_modify}. For now, let $L_\lambda$ be the set of all triples satisfying the three compatibility conditions. Since these states form a basis of the full code,
\begin{align}
	K_\nu \coloneqq \dim\mc C_\nu = |L_\lambda|.
\end{align}
To count $|L_\lambda|$, Ref.~\cite{LWHL2609nonAbelian} simplifies the task by exploring symmetries on the code space. Roughly speaking, the existing construction of (almost) good codes are all covering spaces of some base spaces \cite{LSWLL2026Theory,LLL2026nontrivial,li2026transversalnoncliffordgatesgood}. A cover has several copies of each base cell, with the same local incidence structure and local codes. A deck transformation permutes these copies while preserving their incidences and the base cell below each copy. These transformations therefore induce permutations of the logical cohomology classes. The almost-good family in Ref.~\cite{Dinur2024sheaf} uses a cyclic deck group $C_l$, whereas non-Abelian groups appear in the constructions of good qLDPC codes and qLTCs in Refs.~\cite{PK2022Good,DHLV2022,gay2026asymptoticallygoodquantumlocally,BLN2026}.

The code $E$ provides a chosen deck transformation that acts on the logical states of $A$ and $B$ while leaving the action on $E$ unchanged. The modified invariant form and the above gauge conditions further exclude fixed or invariant triples of gauged code states. 
All other triple lies in an orbit of $p_\nu$ elements where $p_\nu \ge 2\nu+1$ is a prime. As a result,
\begin{align}
	p_\nu\mid K_\nu,\qquad p_\nu\ge2\nu+1\longrightarrow\infty.
\end{align}
and the gauged code dimension has a prime factor tending to infinity. We provide the detailed construct in Appendix~\ref{appendix:gauged_modify}.

\subsection{Operator soundness}\label{subsection:gauged-operator-bound}

We demonstrate in Appendix~\ref{appendix:gauged-parameters} that our gauged code family admits the constant rate and good distance. To apply Theorem~\ref{theorem:intrinsic} to low-energy states of \(H_\nu\), we prove here that the gauged codes has constant \emph{operator soundness}:
\begin{equation}
\Delta_{\mc C_\nu}\leq A H_\nu
\label{equation:section4-target-operator-bound}
\end{equation}
for a constant \(A>0\) independent of \(\nu\).

We first prove the bound on \(\ker H_Z\). On this subspace,
\(H_\nu=\widetilde H_X\), and the dressed \(X\) stabilizers commute. We therefore prove the bound separately on their
common eigenspaces. Label the common eigenspaces by
\(\gamma=(\gamma_{a,v_a})\), where
\(\gamma_{a,v_a}\in\mbb F_q\), and let \(P_\gamma\) denote the
projector onto the eigenspace satisfying
\begin{equation}
\widetilde{\mathcal A}^{(a)}_{\alpha v_a}P_\gamma
=
(-1)^{
\Tr_{\mbb F_q/\mbb F_2}
(\alpha\gamma_{a,v_a})
}
P_\gamma
\label{equation:section4-dressed-x-eigenvalues}
\end{equation}
for every \(a\), \(v_a\), and \(\alpha\).
Let \(|\gamma|\) denote the number of pairs \((a,v_a)\) for
which \(\gamma_{a,v_a}\neq0\). For each pair \((a,v_a)\), the average
over \(\alpha\in\mbb F_q\) equals one when
\(\gamma_{a,v_a}=0\) and zero otherwise. It follows that
\begin{equation}
\widetilde H_XP_\gamma
=
|\gamma|P_\gamma.
\label{equation:section4-dressed-x-energy}
\end{equation}
Projectors \(P_\gamma\) with nonzero range form a complete orthogonal
decomposition of \(\ker H_Z\). In particular, every term of
\(\widetilde H_X\) vanishes on \(\operatorname{ran}P_0\).
Therefore
\begin{equation}
\operatorname{ran}P_0
=
\ker H_Z\cap\ker\widetilde H_X
=
\mc C_\nu.
\label{equation:section4-zero-eigenspace}
\end{equation}

The parent qLTCs have linear soundness and linear distance. 
Let
\(P_\gamma\neq0\). If a Pauli \(Z\) operator maps
\(\operatorname{ran}P_\gamma\) into \(\mc C_\nu\), the
linear soundness gives a representative \(z\) such that
\begin{equation}
Z(z)\operatorname{ran}P_\gamma
\subseteq
\mc C_\nu,
\qquad
\operatorname{wt}(Z(z))
\leq
\kappa_X|\gamma|.
\label{equation:section4-dressed-x-correction}
\end{equation}
If no such Pauli \(Z\) operator, then by linear distance,
\begin{equation}
|\gamma|
\geq
c_*N_\nu.
\label{equation:section4-uncorrectable-eigenvalue-weight}
\end{equation}

Thus, in both cases, \(\operatorname{ran}P_\gamma\) is contained
in a Pauli error space whose radius is at most a constant multiple of \(|\gamma|\). To combine these two bounds, choose an integer
\begin{equation}
\ell\geq\max\{1,\kappa_X,c_*^{-1}\}
\end{equation}
and define
\begin{equation}
r_\gamma
=
\min\{N_\nu,\ell|\gamma|\}.
\label{equation:section4-eigenspace-radius}
\end{equation}
In the first case, since \(\ell\geq\kappa_X\) and every Pauli
operator has weight at most \(N_\nu\), the representative \(z\)
satisfies \(\operatorname{wt}(Z(z))\leq r_\gamma\). In the
second case, \(\ell\geq c_*^{-1}\) gives
\(N_\nu\leq\ell|\gamma|\), so \(r_\gamma=N_\nu\) and
\(V_{r_\gamma}\) is the full Hilbert space. Hence
\begin{equation}
\operatorname{ran}P_\gamma \subseteq V_{r_\gamma}
\label{equation:section4-eigenspace-inclusion}
\end{equation}
for every common eigenspace.

For each \(r\), the direct sum of the eigenspaces satisfying
\(r_\gamma\leq r\) is contained in \(V_r\). The error spaces \(V_r\) decompose according to the
\(Z\)-check syndrome, so \(\Pi_{V_r}\) preserves
\(\ker H_Z\). Taking orthogonal complements within \(\ker H_Z\) gives
\begin{equation}
\left.(I-\Pi_{V_r})\right|_{\ker H_Z} \leq
\sum_{\gamma:\,r_\gamma>r}P_\gamma.
\label{equation:section4-error-space-complement}
\end{equation}
Summing Eq.~\eqref{equation:section4-error-space-complement}
over \(0\leq r<N_\nu\), each \(P_\gamma\) appears
\(r_\gamma\) times. Using \(r_\gamma\leq\ell|\gamma|\) and
Eq.~\eqref{equation:section4-dressed-x-energy}, we obtain
\begin{equation}
\begin{aligned}
\left.\Delta_{\mc C_\nu}\right|_{\ker H_Z}
&\leq \sum_\gamma r_\gamma P_\gamma
\leq \ell\sum_\gamma|\gamma|P_\gamma = \ell\widetilde H_X \Big|_{\ker H_Z}.
\end{aligned}
\label{equation:section4-zero-syndrome-bound}
\end{equation}

We now extend Eq.~\eqref{equation:section4-zero-syndrome-bound}
to states with a nonzero \(Z\)-check syndrome. Since gauging doesn't change \(Z\) checks, the linear \(Z\)-soundness of parent qLTCs guarantee that every \(Z\) syndrome \(\eta\) has a representative \(r\) satisfying
\begin{equation}
w\coloneqq\operatorname{wt}(X(r))
\leq
\kappa_Z|\eta|.
\label{equation:section4-z-syndrome-correction}
\end{equation}
The Pauli \(X(r)\) maps the entire syndrome-\(\eta\) subspace into \(\ker H_Z\).
Multiplication by \(X(r)\) changes the weight of a Pauli error
by at most \(w\), and hence changes its error radius by at most
\(w\). By the definition of \(\Delta_{\mc C_\nu}\), this gives
\begin{equation}
\Delta_{\mc C_\nu}
\leq wI + X(r)^\dagger\Delta_{\mc C_\nu}X(r).
\label{equation:section4-radius-shift}
\end{equation}

Only the terms of \(\widetilde H_X\) whose supports intersect
\(\operatorname{supp}X(r)\) can change under conjugation. Let
\(g_X\) be a uniform upper bound on the number of terms in
\(\widetilde H_X\) containing any given qubit. Thus at most \(g_Xw\) terms can change. Since each term \(h\) is a projector and hence satisfies
\(0\leq h\leq I\), we have
\begin{equation}
X(r)^\dagger\widetilde H_XX(r) \leq \widetilde H_X+g_XwI.
\label{equation:section4-energy-shift}
\end{equation}
Since \(X(r)\) maps the syndrome-\(\eta\) subspace into
\(\ker H_Z\), Eq.~\eqref{equation:section4-zero-syndrome-bound}
implies, on the syndrome-\(\eta\) subspace,
\begin{equation}
X(r)^\dagger\Delta_{\mc C_\nu}X(r)
\leq
\ell X(r)^\dagger\widetilde H_XX(r).
\label{equation:section4-conjugated-zero-syndrome-bound}
\end{equation}
Combining
Eqs.~\eqref{equation:section4-radius-shift},
\eqref{equation:section4-conjugated-zero-syndrome-bound},
and~\eqref{equation:section4-energy-shift}, and using
\(w\leq\kappa_Z|\eta|\), gives, on the syndrome-\(\eta\) subspace,
\begin{equation}
\begin{aligned}
\Delta_{\mc C_\nu} \leq \ell\widetilde H_X+(1+\ell g_X)wI
\leq \ell\widetilde H_X + \kappa_Z(1+\ell g_X)|\eta|I.
\end{aligned}
\label{equation:section4-syndrome-bound}
\end{equation}
On the syndrome-\(\eta\) subspace, \(H_Z=|\eta|I\). Both
\(\Delta_{\mc C_\nu}\) and \(\widetilde H_X\) preserve the
decomposition into \(Z\)-check syndrome subspaces. Therefore, the preceding bounds hold for all \(\eta\):
\begin{equation}
\Delta_{\mc C_\nu} \leq
\ell\widetilde H_X + \kappa_Z(1+\ell g_X)H_Z \leq A H_\nu,
\qquad A=\max\{\ell,\kappa_Z(1+\ell g_X)\}.
\label{equation:section4-full-operator-bound}
\end{equation}
The constant \(A\) is independent of \(\nu\).

Finally, since \(m_\nu=\Theta(N_\nu)\), there exists a constant \(\sigma_*>0\), independent
of \(\nu\), such that
\begin{equation}
\frac{H_\nu}{m_\nu}
\geq
\sigma_*
\frac{\Delta_{\mc C_\nu}}{N_\nu}.
\label{equation:section4-normalized-operator-soundness}
\end{equation}
This proves the operator soundness condition required in
Theorem~\ref{theorem:intrinsic}.

Together with the prime divisor and bounded locality and degree, these statements prove Theorem~\ref{theorem:intrinsic-code}.

\section{Discussion}\label{section:discussion}

We have established an intrinsic notion of NLTM by employing recent advances in sheaf qLDPC codes, substantially broadening the landscape of classical low-energy witnesses excluded in the NLTS approach to qPCP.
This further deepens the connection between quantum error
correction and robust quantum circuit complexity, extending it beyond
stabilizer codes. 
The key insight is that arithmetic properties of the ground space degeneracy in non-Abelian gauged codes can obstruct shallow preparation from arbitrary stabilizer states, while linear distance and constant soundness extend this obstruction throughout a linear energy window. We achieve the target by constructing nonadditive quantum locally testable codes using non-Abelian gauged sheaf codes, combining linear distance and constant soundness while satisfying the arithmetic condition on degeneracy required for intrinsic NLTM. These results reveal a broader role for quantum error correction
in shaping the computational structure of quantum matter:
it can protect complexity beyond entanglement as a robust
low-energy property that persists at nonzero
temperature.



A promising direction is to establish NLTM through structural
principles beyond our dimension criterion, such as conditions
on the fusion rules and braiding statistics of non-Abelian
excitations. This could clarify when non-Abelian quantum order protects complexity at low energies, linking its algebraic structure to the computational properties of quantum matter.

Moreover, here we have shown that soundness naturally links 
energy and error protection in the intrinsic NLTM criterion.
More broadly, it would be valuable to develop complementary
criteria for intrinsic NLTM based on the structure of the
low-energy space, which potentially point to intrinsic NLTM for a wider range
of local Hamiltonians.

Further progress toward qPCP calls for a more complete understanding of classical low-energy witnesses.
A natural direction is to explore broader notions of classical description and simulation,  seeking new mechanisms to exclude increasingly general classes of classical witnesses at positive energy density.



\section*{Acknowledgments}

This work is supported in part by NSFC under Grant No.~12475023, the Dushi Program, and startup funding from YMSC.
F.W.\ acknowledges support from the Shuimu Tsinghua Scholar Program.

The authors developed the key insights and strategies, building on their prior work. Generative AI tools assisted with completing the proof details and aspects of the presentation. The authors take full responsibility for the manuscript.

\appendix

\section{Lemmas for magic circuit lower bounds}\label{appendix:refinement}


We prove the stabilizer completion and rank divisibility statements used in Sec.~\ref{section:arithmetic-obstruction}, together with the spectral estimates used in Appendix~\ref{appendix:ternary}.

\subsection{Qudit stabilizer completion}\label{app:stabcomp}

\begin{proof}[Proof of Lemma~\ref{lemma:stabcompletion}]
Write the Hilbert space as $\bigotimes_v\mbb{C}^{d_v}$ and let $D\coloneqq\prod_vd_v$.
The cyclic shift and clock operators defined in Sec.~\ref{section:statement} give an orthogonal operator basis consisting of the products $\bigotimes_vX_v^{a_v}Z_v^{b_v}$, with $a_v,b_v\in\mbb{Z}_{d_v}$.
Any two Pauli operators commute up to a scalar phase.
These properties hold for unequal and composite local dimensions.

Let $\mc{S}$ be the stabilizer group of $\ket{S_0}$.
It contains no nonidentity scalar, has order $D$, and satisfies
\begin{equation}
\ketbra{S_0}{S_0}=\frac1{D}\sum_{g\in\mc{S}}g.
\end{equation}
Let $G_0$ be the subgroup generated by the elements of $\mc{S}$ whose support is contained in some $A_i$.
The projector onto their common $+1$ eigenspace and its normalized state are
\begin{equation}
P_0=\frac1{|G_0|}\sum_{g\in G_0}g,\quad \rho_0\coloneqq\frac{P_0}{\operatorname{rank}P_0}=\frac1{D}\sum_{g\in G_0}g.
\end{equation}
Partial trace removes every Pauli term whose support is not contained in the retained region.
Since $G_0\subset\mc{S}$ contains every stabilizer supported within each $A_i$, the states $\rho_0$ and $\ketbra{S_0}{S_0}$ have identical reduced states on these regions.
Consequently,
\begin{equation}
\Tr(H'\rho_0)=\bra{S_0}H'\ket{S_0}.
\end{equation}

Choose a local generating set of $G_0$ and multiply each generator by a scalar phase so that it belongs to the Pauli basis $\{\bigotimes_vX_v^{a_v}Z_v^{b_v}\}$.
Extend these basis operators to a family, maximal under inclusion, of pairwise commuting Pauli basis operators, each supported within some $A_i$.
Such a family exists because the Pauli basis is finite.
The chosen operators commute with $G_0$, so their joint eigenspaces within $\im P_0$ give an orthogonal decomposition of $\im P_0$.
The mean energies of the nonzero eigenspaces, weighted by their dimensions, average to $\Tr(H'\rho_0)$.
Choose an eigenspace with mean energy $E_*\le\bra{S_0}H'\ket{S_0}$ and multiply each chosen operator by the inverse of its eigenvalue on this eigenspace.
For the operators obtained from the original generators of $G_0$, this restores their original phases.
Thus the rephased operators generate an abelian group $G$ containing $G_0$, and their common $+1$ eigenspace is exactly the chosen eigenspace.
No nonidentity scalar belongs to $G$, and $G$ is finite.
The projector onto the chosen eigenspace is
\begin{equation}\label{equation:generalized-completion-rank}
P=\frac1{|G|}\sum_{g\in G}g,\quad\text{with } \operatorname{rank}P=\frac{D}{|G|}>0.
\end{equation}
So $\operatorname{rank}P$ is an integer divisor of $D$.

Let $W$ be a Pauli operator supported within one $A_i$.
If $W$ has a nontrivial commutation phase with some element of $G$, then $PWP=0$.
Otherwise, a scalar multiple of $W$ belongs to the Pauli basis and commutes with every operator in the chosen family.
By maximality, this basis operator already belongs to the family.
Hence $W$ is a scalar multiple of an element of $G$, and $PWP$ is a scalar multiple of $P$.
Expanding an arbitrary operator supported within $A_i$ in the Pauli basis proves scalar compression for every such operator.
Applying this to each $O_i$ gives $PH'P=E_*P$, where $E_*$ is the selected mean energy.

Finally, $G$ acts only on $\bigcup_iA_i$ and has a nonzero common $+1$ eigenspace there, so group averaging gives $|G|\le\prod_{v\in\bigcup_iA_i}d_v$.
Choose local generators one at a time, skipping any generated by those already chosen.
Each added generator at least doubles the subgroup size, so
\begin{equation}
s\le\log |G|\le\sum_{v\in\bigcup_iA_i}\log d_v.
\end{equation}
\end{proof}

\subsection{Logical protection and spectral counting}\label{appendix:error-space-protection}

\begin{proof}[Proof of Lemma~\ref{lemma:correctable-space-protection}]
Let $P_{\mc{C}}$ project onto $\mc{C}$ and fix a normalized state $\ket{\psi_0}\in\mc{C}$.
Let $\mc{G}_t$ be the span of $E\ket{\psi_0}$ over all Pauli errors $E$ of weight at most $t$, and let $P_{\mc{G}_t}$ project onto $\mc{G}_t$.
For any such errors $E,F$, $2t<d(\mc{C})$ give
\begin{equation}
P_{\mc{C}}E^\dagger FP_{\mc{C}}=\gamma_{EF}P_{\mc{C}},\quad \gamma_{EF}=\bra{\psi_0}E^\dagger F\ket{\psi_0}.
\end{equation}
This identity shows that the rule
\begin{equation}
W_t(\ket{\psi}\otimes E\ket{\psi_0})\coloneqq E\ket{\psi}
\end{equation}
extends linearly to a well-defined isometry $W_t:\mc{C}\otimes\mc{G}_t\longrightarrow V_t$.
Its image contains every $E\ket{\psi}$ with $\ket{\psi}\in\mc{C}$ and hence equals $V_t$.

Let $O$ be supported on at most $r$ qubits, with $2t+r<d(\mc{C})$.
Every operator $E^\dagger OF$ is supported on fewer than $d(\mc{C})$ qubits and is therefore detected by $\mc{C}$.
Thus
\begin{equation}
P_{\mc{C}}E^\dagger OFP_{\mc{C}}=\beta_{EF}P_{\mc{C}},\quad \beta_{EF}=\bra{\psi_0}E^\dagger OF\ket{\psi_0}.
\end{equation}
Define
\begin{equation}
B_O\coloneqq(P_{\mc{G}_t}OP_{\mc{G}_t})\big|_{\mc{G}_t}.
\end{equation}
Since $E\ket{\psi_0},F\ket{\psi_0}\in\mc{G}_t$, the matrix element of $B_O$ between these vectors is $\beta_{EF}$.
Taking matrix elements between the spanning vectors $\ket{\psi}\otimes E\ket{\psi_0}$ therefore gives
\begin{equation}
W_t^\dagger O W_t=I_{\mc{C}}\otimes B_O.
\end{equation}
The same isometry $W_t$ applies to every such $O$, proving Eq.~\eqref{equation:correctable-space-compression}.
\end{proof}

We now prove the rank divisibility statement using polynomial approximation, logical protection, and the spectral rank obstruction established below.

\begin{proof}[Proof of Lemma~\ref{lemma:intrinsic-rank-divisibility}]
Fix $0<a<1/2$ and set $c_a\coloneqq1+\frac12\ln(2/a)$.
Let $P\coloneqq\Pi_{\ker\widehat{H}}$.
Applying Fact~\ref{fact:stabilizer-filter} to $\widehat{H}$ gives a real polynomial $f$ such that $F\coloneqq f(\widehat{H})$ satisfies
\begin{equation}
\|F-P\|_\infty\le a,\quad \deg f\le L_s(a)\le c_a\sqrt{\max\{1,s\}}.
\end{equation}
Each term in the expansion of $F$ acts on at most $wL_s(a)$ of the $n$ qubits.
The distance condition in \eqref{equation:intrinsic-rank-conditions} therefore gives
\begin{equation}
2t+wL_s(a)<d(\mc{C}).
\end{equation}

Expand each term of $F$ in an operator basis on $A$ as a sum of operators $O_\mu\otimes A_\mu$, where each $O_\mu$ acts on at most $wL_s(a)$ of the $n$ qubits.
Lemma~\ref{lemma:correctable-space-protection} applies to all these operators with the same isometry $W_t$.
By linearity,
\begin{equation}
(W_t^\dagger\otimes I_A)F(W_t\otimes I_A)=I_{\mc{C}}\otimes B
\end{equation}
for an operator $B$ on $\mc{G}_t\otimes A$.
Since $F$ is Hermitian, $B$ is Hermitian.
Recall that $Q=\Pi_{V_t}\otimes I_A$ is the orthogonal projector onto $V_t\otimes A$.
The map $W_t\otimes I_A$ preserves inner products and maps $\mc{C}\otimes\mc{G}_t\otimes A$ onto $\im Q=V_t\otimes A$.
It is therefore a unitary map between these two spaces.
Since $Q$ acts as the identity on the image of $W_t\otimes I_A$, the preceding identity gives
\begin{equation}
(W_t^\dagger\otimes I_A)QFQ(W_t\otimes I_A)=I_{\mc{C}}\otimes B.
\end{equation}
Thus the compression $QFQ\big|_{\im Q}$ is unitarily equivalent to $I_{\mc{C}}\otimes B$.

If $\dim\mc{C}$ did not divide $\operatorname{rank}P$, Lemma~\ref{lemma:spectral-rank-obstruction} below would imply
\begin{equation}
\|P(I-Q)P\|_\infty\ge1-2a,
\end{equation}
contrary to the overlap condition in \eqref{equation:intrinsic-rank-conditions}.
Hence $\dim\mc{C}\mid\operatorname{rank}P=\dim\ker\widehat{H}$.
\end{proof}

\begin{lemma}[Spectral rank obstruction]\label{lemma:spectral-rank-obstruction}
Let $P\ne0$ and $Q$ be orthogonal projectors on a finite dimensional Hilbert space, and let $F$ be Hermitian with $\|F-P\|_\infty\le a$ for some $0<a<1/2$.
Suppose that $QFQ\big|_{\im Q}$ is unitarily equivalent to $I_K\otimes B$ for a Hermitian operator $B$ and a positive integer $K$ that does not divide $\operatorname{rank}P$.
Then
\begin{equation}
\|P(I-Q)P\|_\infty\ge1-2a.
\end{equation}
\end{lemma}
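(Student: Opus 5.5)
We argue by contraposition: assume $\|P(I-Q)P\|_\infty<1-2a$ and show that $K$ divides $\operatorname{rank}P$. Set $r\coloneqq\operatorname{rank}P$. The plan is to show that the compression $G\coloneqq QFQ|_{\im Q}$ has \emph{exactly} $r$ eigenvalues, counted with multiplicity, strictly above the threshold $1/2$. By the unitary equivalence with $I_K\otimes B$, every eigenvalue of $G$ has multiplicity divisible by $K$. Hence the number of eigenvalues of $G$ above any threshold is a multiple of $K$, and $K\mid r$ follows.

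\emph{Lower count.} We first show that $G$ has at least $r$ eigenvalues above $1/2$, using Courant--Fischer with the test subspace $Q(\im P)\subseteq\im Q$. For a unit vector $\psi\in\im P$, the hypothesis gives $\|Q\psi\|^2=1-\langle\psi,(I-Q)\psi\rangle>2a$. In particular $Q$ is injective on $\im P$, so this test subspace has dimension $r$.

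It then suffices to show $\langle Q\psi,FQ\psi\rangle>\tfrac12\|Q\psi\|^2$ for every nonzero $\psi\in\im P$. Write $F=P+R$ with $\|R\|_\infty\le a$, so that $\langle Q\psi,FQ\psi\rangle\ge\|PQ\psi\|^2-a\|Q\psi\|^2$. Next, $\langle\psi,PQ\psi\rangle=\langle\psi,Q\psi\rangle=\|Q\psi\|^2$, and Cauchy--Schwarz gives $\|PQ\psi\|\ge\|Q\psi\|^2/\|\psi\|$. Setting $x=\|Q\psi\|^2/\|\psi\|^2\in(2a,1]$, the Rayleigh quotient is therefore at least $x-a>a$. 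The main obstacle is that this only yields a threshold near $a$, not $1/2$. To fix this, I would set the counting threshold at $a$ instead of $1/2$. That requires the upper count to work at threshold $a$ as well.

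\emph{Upper count.} We show $G$ has at most $r$ eigenvalues above $a$. Suppose instead that $G$ had an $(r+1)$-dimensional subspace $U\subseteq\im Q$ on which $\langle u,Fu\rangle>a\|u\|^2$. Then $U$ contains a nonzero $u$ with $Pu=0$, so $\langle u,Fu\rangle=\langle u,Ru\rangle\le a\|u\|^2$, a contradiction.

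Combining both counts, $G$ has exactly $r$ eigenvalues strictly greater than $a$. This number must be a multiple of $K$, so $K\mid r$, which contradicts the hypothesis. The bound $\|P(I-Q)P\|_\infty\ge1-2a$ follows. The only delicate point is the strictness in the lower bound, which is supplied by $x>2a$, i.e., by the strict inequality $\|Q\psi\|^2>2a$.
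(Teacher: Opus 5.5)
Your proof is correct and follows the paper's in substance. Both arguments show that $QFQ\big|_{\im Q}$ has exactly $\operatorname{rank}P$ eigenvalues in $(a,\infty)$ and then use the $I_K\otimes B$ structure. You verify the two counts by direct Courant--Fischer arguments: the Cauchy--Schwarz bound on $\|PQ\psi\|$ over the test subspace $Q(\im P)$, and the intersection of any $(r+1)$-dimensional subspace with $\ker P$. The paper instead reads off the spectrum of $QPQ\big|_{\im Q}$ from $T^\dagger T$ versus $TT^\dagger$ with $T=Q\big|_{\im P}$, and then applies the perturbation bound $\|QFQ-QPQ\|_\infty\le a$. The threshold $1/2$ in your opening paragraph should be replaced by $a$ throughout, since, as you noticed, the lower count supports only threshold $a$.
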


\begin{proof}
Suppose for contradiction that $\|P(I-Q)P\|_\infty<1-2a$.
On $\im P$,
\begin{equation}
PQP=I-P(I-Q)P>2aI.
\end{equation}
Let $T\coloneqq Q\big|_{\im P}$, viewed as a map from $\im P$ to $\im Q$.
The operators $PQP\big|_{\im P}$ and $QPQ\big|_{\im Q}$ are respectively $T^\dagger T$ and $TT^\dagger$, so they have the same nonzero eigenvalues with multiplicities.
The preceding inequality makes $T$ injective.
Thus $QPQ\big|_{\im Q}$ has exactly $\operatorname{rank}P$ positive eigenvalues, all greater than $2a$, and every remaining eigenvalue is zero.

Compression preserves the approximation bound,
\begin{equation}
\|QFQ-QPQ\|_\infty\le a.
\end{equation}
The min--max principle therefore places exactly $\operatorname{rank}P$ eigenvalues of $QFQ\big|_{\im Q}$ in $(a,\infty)$.
The assumed tensor form gives
\begin{equation}
\operatorname{rank}P=\operatorname{rank}\mathbf1_{(a,\infty)}(QFQ\big|_{\im Q})=K\operatorname{rank}\mathbf1_{(a,\infty)}(B),
\end{equation}
contradicting $K\nmid\operatorname{rank}P$.
Hence $\|P(I-Q)P\|_\infty\ge1-2a$.
\end{proof}

\subsection{Chebyshev approximation}

We approximate the projector onto the common satisfying subspace by a polynomial of low degree.
The degree controls the number of constraint supports that occur in each term of the expansion.

\begin{fact}\label{fact:stabilizer-filter}
Let $\Pi_1,\cdots,\Pi_s$ be commuting orthogonal projectors, let $H_P\coloneqq\sum_{j=1}^s(I-\Pi_j)$, and let $P$ project onto $\ker H_P$.
For $0<a<1/2$, there is a real polynomial $f$ of degree at most
\begin{equation}
L_s(a)\coloneqq\left\lceil\frac{\sqrt{s}}2\ln\frac2a\right\rceil
\end{equation}
with $f(0)=1$ and $\|f(H_P)-P\|_\infty\le a$.
Moreover, $f(H_P)$ is a linear combination of products of at most $L_s(a)$ of the projectors $\Pi_j$.
\end{fact}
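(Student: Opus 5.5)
The spectrum of $H_P$ is contained in $\{0,1,\dots,s\}$, because the $\Pi_j$ commute and are projectors. Each joint eigenspace gives $H_P$ the eigenvalue equal to the number of violated constraints. It therefore suffices to build a real polynomial $f$ of degree at most $L_s(a)$ with $f(0)=1$ and $|f(k)|\le a$ for all integers $1\le k\le s$. Then $f(H_P)-P$ is diagonal in the joint eigenbasis: it vanishes on $\ker H_P$ and is bounded by $a$ in absolute value elsewhere, so $\|f(H_P)-P\|_\infty\le a$.

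To construct $f$, I will use the standard Chebyshev filter from the AGSP literature. Set
\begin{equation}
f(x)\coloneqq\frac{T_L\!\left(\frac{s+1-2x}{s-1}\right)}{T_L\!\left(\frac{s+1}{s-1}\right)},
\end{equation}
which maps $[1,s]$ affinely onto $[-1,1]$ and sends $x=0$ to $\frac{s+1}{s-1}>1$. The degenerate case $s\le1$ is handled directly: take $f(x)=1-x$ with degree $1\le L_s(a)$, which gives $f(1)=0$. The degree $0$ case $s=0$ is trivial, since then $P=I$.

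The normalization gives $f(0)=1$. On $[1,s]$ we have $|T_L|\le1$, so $|f|\le 1/T_L\!\left(1+\frac{2}{s-1}\right)$. For $\gamma=\frac{2}{s-1}$, the Chebyshev growth bound gives
\begin{equation}
T_L(1+\gamma)\ge\tfrac12 e^{L\sqrt{2\gamma}}\ge\tfrac12 e^{2L/\sqrt{s}},
\end{equation}
where the second step uses $\sqrt{2\gamma}=2/\sqrt{s-1}\ge2/\sqrt{s}$. Choosing $L=\lceil\frac{\sqrt s}{2}\ln\frac2a\rceil$ then gives $|f(k)|\le a$. The first inequality needs care: I would derive it from $T_L(\cosh\theta)=\cosh(L\theta)\ge\frac12e^{L\theta}$ together with $\operatorname{arccosh}(1+\gamma)\ge\sqrt{2\gamma}\cdot c$ for the appropriate constant. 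This is the main obstacle, namely getting exactly the constant $\frac{\sqrt s}{2}$ stated in $L_s(a)$. If the crude bound $\operatorname{arccosh}(1+\gamma)\ge\sqrt{\gamma}$ only yields $\sqrt{2/(s-1)}$, I would still check that this exceeds $2/\sqrt{s}$ for small $s$. Otherwise I would use $\ln(1+\gamma+\sqrt{2\gamma+\gamma^2})$ directly, or absorb the slack into the ceiling.

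For the structural claim, substitute $H_P=\sum_j(I-\Pi_j)$. Expanding $f(H_P)$ as a polynomial of degree at most $L$ in the commuting operators $I-\Pi_j$, every monomial is a product of at most $L$ factors $(I-\Pi_j)$. Multiplying these out, each product becomes a linear combination of products of at most $L$ of the $\Pi_j$, where the empty product is $I$. Idempotence and commutativity then reduce repeated indices without increasing the count. This gives the stated locality property, which is what Lemma~\ref{lemma:intrinsic-rank-divisibility} uses.
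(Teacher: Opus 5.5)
Your route is the paper's: the same reduction to the integer spectrum $\{0,\dots,s\}$, the same filter $f(x)=T_L\bigl(\frac{s+1-2x}{s-1}\bigr)/T_L\bigl(\frac{s+1}{s-1}\bigr)$, and the same expansion argument for the locality claim. Your explicit treatment of $s\le1$, where this filter is undefined, also fills a case the paper leaves implicit. The gap is in the one analytic step that produces the constant $\sqrt s/2$.

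The inequality you write, $T_L(1+\gamma)\ge\tfrac12e^{L\sqrt{2\gamma}}$, is false. Since $\cosh u\ge1+u^2/2$, one has $\operatorname{arcosh}(1+\gamma)\le\sqrt{2\gamma}$, so the comparison runs the wrong way. Already for $s=3$ and $L=1$ it would assert $2=T_1(2)\ge\tfrac12e^{\sqrt2}\approx2.06$. Your fallbacks do not all repair this. The crude bound $\operatorname{arcosh}(1+\gamma)\ge\sqrt\gamma=\sqrt{2/(s-1)}$ dominates $2/\sqrt s$ only for $s\le2$. For large $s$ it forces $L\approx\sqrt2\cdot\frac{\sqrt s}{2}\ln\frac2a$, a multiplicative loss that no ceiling can absorb. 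Likewise, any $s$-independent constant $c<1$ in $\operatorname{arcosh}(1+\gamma)\ge c\sqrt{2\gamma}$ fails for large $s$, because you would need $c\ge\sqrt{(s-1)/s}\to1$.

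What closes the step is the exact evaluation you list only as an option. With $y=\frac{s+1}{s-1}$ one has $\sqrt{y^2-1}=\frac{2\sqrt s}{s-1}$, hence
\[
\operatorname{arcosh}\frac{s+1}{s-1}=\ln\frac{\sqrt s+1}{\sqrt s-1}=2\operatorname{artanh}\frac1{\sqrt s}\ge\frac2{\sqrt s},
\]
using $\operatorname{artanh}x\ge x$. Then $T_L\bigl(\frac{s+1}{s-1}\bigr)=\cosh\bigl(L\operatorname{arcosh}\frac{s+1}{s-1}\bigr)\ge\tfrac12e^{2L/\sqrt s}$. Taking $L=L_s(a)$ gives $|f(k)|\le2e^{-2L/\sqrt s}\le a$ for $1\le k\le s$, exactly as in the paper. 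The true value $2\operatorname{artanh}(1/\sqrt s)$ lies strictly between the needed $2/\sqrt s$ and your $\sqrt{2\gamma}=2/\sqrt{s-1}$. So the lower bound must be taken on $\operatorname{arcosh}$ directly rather than routed through $\sqrt{2\gamma}$.
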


\begin{proof}
The commuting projector Hamiltonian $H_P$ has spectrum in $\{0,1,\cdots,s\}$.
For $s\ge2$, use the Chebyshev spectral filter~\cite{arad2013arealawsubexponentialalgorithm} with $L\coloneqq L_s(a)$,
\begin{equation}
f(x)\coloneqq\frac{T_L((s+1-2x)/(s-1))}{T_L((s+1)/(s-1))},
\end{equation}
where $T_L(\cos\theta)=\cos(L\theta)$ defines the degree-$L$ Chebyshev polynomial.
We have $f(0)=1$, and the numerator has absolute value at most one for $x\in[1,s]$.
To bound the denominator, note that
\begin{equation}
\operatorname{arcosh}\frac{s+1}{s-1}=2\operatorname{artanh}\frac1{\sqrt{s}}\ge\frac2{\sqrt{s}}.
\end{equation}
Using $T_L(\cosh u)=\cosh(Lu)\ge e^{Lu}/2$, we obtain
\begin{equation}
\max_{1\le j\le s}|f(j)|\le2e^{-2L/\sqrt{s}}\le a.
\end{equation}
Applying this estimate to every eigenspace gives $\|f(H_P)-P\|_\infty\le a$.

Expanding a polynomial of degree at most $L_s(a)$ in $H_P$ gives products of at most $L_s(a)$ of the projectors $\Pi_j$.
The support of each product is contained in the union of the supports of its factors.
\end{proof}

\section{Construction and properties of gauged sheaf codes}\label{appendix:gauged_code}

\subsection{Introduction to sheaf codes and cup product gauging}\label{appendix:gauged_intro}

For further constructions, we introduce the sheaf cochain complex and cup product gauging. More details can be found in, e.g., \cite{DHLV2022,Panteleev2024,Lin2024transversal,christos2026nonabelianquantumlowdensityparity,Zhu2026,zhu2026nonAbelian,Li2025Poincare,LSWLL2026Theory,LLL2026nontrivial,LWHL2609nonAbelian}. Let $X$ be a finite \emph{cubical complex} and let $X(j)$ denote its $j$-cells or cubes. A \emph{sheaf} $\mathcal{F}$ assigns a finite-dimensional vector space $\mathcal{F}_\sigma$ over $\mbb{F}_q$ with $q=2^{s_q}$ and $s_q$ a positive integer to each cell $\sigma$, and a linear map $\mathcal{F}_{\sigma,\tau}: \mathcal{F}_\sigma \to \mathcal{F}_\tau$ to each inclusion $\sigma \leq \tau$. These maps satisfy $\mathcal{F}_{\tau,\upsilon} \mathcal{F}_{\sigma,\tau} = \mathcal{F}_{\sigma,\upsilon}$ and $\mathcal{F}_{\sigma,\sigma} = \mathrm{id}$. The cochain spaces and coboundary maps are
\begin{align}\label{eq:sheaf_cochain}
	C^j(X,\mathcal F) = \bigoplus_{\sigma \in X(j)} \mathcal F_\sigma,
	\qquad
	(\delta^j x)(\tau) = \sum_{\substack{\sigma\in X(j) \\ \sigma \prec \tau}}
	\mathcal F_{\sigma,\tau}x(\sigma), \quad \tau \in X(j+1),
\end{align}
where the sum runs over all $j$-cells $\sigma$ contained in $\tau$, and $x(\sigma) \in \mathcal F_\sigma$ is its \emph{local coefficient vector}. After choosing bases, the matrix block of $\delta^j$ from $\sigma$ to $\tau$ is $\mathcal{F}_{\sigma,\tau}$ when $\sigma\prec\tau$, and zero otherwise. 
A sheaf code uses three consecutive terms
\begin{align}\label{eq:sheaf_code}
	C^{j-1}(X,\mathcal{F}) \xrightarrow{\delta^{j-1}}
	C^j(X,\mathcal{F}) \xrightarrow{\delta^j} C^{j+1}(X,\mathcal{F}).
\end{align}
Each coordinate of $C^j(X,\mathcal{F})$ stands for one physical $q$-dimensional site. 
The dual spaces $C_j(X,\mathcal F) \coloneq C^j(X,\mathcal F)^*$ with \emph{boundary operator} $\partial_j = (\delta^{j-1})^T$ form the chain complex. 
For $y \in C^j(X,\mathcal{F})$ and $z \in C_j(X,\mathcal{F}) \cong C^j(X,\mathcal{F})$, the Pauli operators are defined by the trace formula (cf. the generalized Pauli operators used in Section \ref{section:mgaic_complexity}):
\begin{align}
	X(y)\ket{x} \coloneq \ket{x+y}, \qquad Z(z)\ket{x} \coloneq (-1)^{\Tr_{\mbb F_q/\mbb F_2} \langle z,x \rangle} \ket{x}.
\end{align}
The $Z$ checks require $\delta^px=0$, and the $X$ checks identify cochains differing by a coboundary. A cohomology class and the corresponding logical quantum state are
\begin{align}
	[x] \in H^p(X,\mathcal F)=\ker\delta^p/ \im \delta^{p-1}, \quad
	\ket{[x]} = \frac{1}{|\im \delta^{p-1}|^{1/2}} \sum_{b \in \im \delta^{p-1}} \ket{x+b}.
\end{align}

The gauged code is defined by using the cellular cup product with sheaves~\cite{Li2025Poincare,LLL2026nontrivial}: 
\begin{align}
	\smile: C^j(X,\mathcal{F}) \times C^k(X,\mathcal{G}) \longrightarrow C^{j+k}(X,\mathcal{F} \otimes \mathcal{G}).
\end{align}
The coefficient spaces and maps of $\mathcal{F} \otimes \mathcal{G}$ are the corresponding tensor products. The cup product combines cochain values on complementary faces of a cube. For example, on a two-dimensional cube, the product of two 1-cochains sums over the two pairs of complementary initial and terminal edges: evaluate the first cochain on the initial edge and the second on the terminal edge, restrict both local coefficient vectors to the square, and take their tensor product. 
Iterating the cup product gives a product of more than two cochains. In the following construction, we need three sheaves $\mathcal{F},\mathcal{G},\mathcal{H}$, and a nonzero cycle $\xi \in C_{j_A + j_B + j_C}(X,\mathcal F\otimes\mathcal G\otimes\mathcal H)$, so $\partial\xi = 0$. For cochains $x_A,x_B,x_C$ with coefficients in $\mathcal F,\mathcal G,\mathcal H$, respectively, the invariant form in \cite{li2026transversalnoncliffordgatesgood,LWHL2609nonAbelian} is essential to the constructions:
\begin{align}\label{eq:invariant_form}
	\begin{aligned}
		I_\xi(x_A,x_B,x_C) & = \int_\xi x_A \smile x_B \smile x_C
		= \langle x_A \smile x_B \smile x_C, \ \xi \rangle \\
		& = \sum_{\tau \in X(j_A + j_B + j_C)} \xi(\tau) \cdot \bigl((x_A\smile x_B\smile x_C)(\tau)\bigr),
	\end{aligned}
\end{align}
Here the discrete integral $\int_\xi$ or the inner product can be used interchangeably as pairing between chains and cochains of the same degree. When the other two inputs are cocycles, the Leibniz rule and $\partial \xi = 0$ imply
\begin{align}\label{eq:Leibniz_rule}
	I_\xi (\delta u_A,x_B,x_C)=I_\xi(x_A,\delta u_B,x_C)=I_\xi(x_A,x_B,\delta u_C) = 0.
\end{align}
Consequently the form on cocycles depends only on their cohomology classes. The cubical complex $X$ used throughout the paper has bounded cell incidence and fixed coefficient dimensions, which implies that each coordinate occurs in only a bounded number of its terms. A nonzero induced trilinear form yields constant-depth non-Clifford gates \cite{LLL2026nontrivial,li2026transversalnoncliffordgatesgood}. The same construction also gives gauged good qLDPC codes \cite{LWHL2609nonAbelian}.

We now define the gauged codes. Label the three blocks by $a=A,B,C$, with sheaves $\mathcal F,\mathcal G,\mathcal H$, respectively, and write $\delta_A,\delta_B,\delta_C$ for their coboundaries, omitting the degrees. Choose physical degrees $j_A,j_B,j_C \ge 1$ satisfying $j_A + j_B + j_C - 1$ equal to the top degree $t$. For $x_A \in C^{j_A}(X,\mathcal F)$, $x_B \in C^{j_B}(X,\mathcal G)$ and $x_C \in C^{j_C}(X,\mathcal H)$, we consider the dressed $X$ stabilizer generators
\begin{equation}\label{eq:dressed_X}
	\begin{aligned}
		\mathcal{A}^{(A)}_{\alpha_A v_A} \ket{x_A,x_B,x_C}
		& = (-1)^{\Tr_{\mbb F_q/\mbb F_2} I_\xi(\alpha_A v_A, x_B,x_C)}
		\ket{x_A + \alpha_A \delta_A v_A ,x_B,x_C}, \\
		\mathcal{A}^{(B)}_{\alpha_B v_B} \ket{x_A,x_B,x_C}
		& = (-1)^{\Tr_{\mbb F_q/\mbb F_2} I_\xi(x_A, \alpha_B v_B,x_C)} \ket{x_A,x_B + \alpha_B \delta_B v_B, x_C}, \\
		\mathcal{A}^{(C)}_{\alpha_C v_C} \ket{x_A,x_B,x_C}
		& = (-1)^{\Tr_{\mbb F_q/\mbb F_2} I_\xi(x_A,x_B,\alpha_C v_C)} \ket{x_A,x_B,x_C + \alpha_C \delta_C v_C}.
	\end{aligned}
\end{equation}
where $\alpha_a\in\mbb F_q$ for $a=A,B,C$, and $v_A,v_B,v_C$ are coordinate basis vectors of $C^{j_A-1}(X,\mathcal F)$, $C^{j_B-1}(X,\mathcal G)$ and $C^{j_C-1}(X,\mathcal H)$, respectively. The phase in Eq.~\eqref{eq:dressed_X} are realized by physical multi-controlled-$Z$ gates of constant weight. On the other hand, the $Z$ stabilizer generators are unchanged and can be read off from the coboundary operators.

The gauged code $\mathcal C_G$ is the common $+1$ eigenspace of these stabilizers: $\mathcal{C}_G = \ker H_G$, where, as operators on quantum states, the local terms and the Hamiltonian are
\begin{align}
	h^Z_{a,r}\ket{x_A,x_B,x_C} & = \mathbf1_{(\delta_a x_a)_r\ne0}\ket{x_A,x_B,x_C},
	\qquad H_Z = \sum_{a,r}h^Z_{a,r}, \label{eq:undressed_Z} \\
	h^X_{a,v} & = I-\frac1q\sum_{\alpha\in\mbb F_q}\mathcal A^{(a)}_{\alpha v},
	\qquad\qquad H_G = H_Z+\sum_{a,v}h^X_{a,v}. \label{eq:gauged_hamiltonian}
\end{align}
Here $a=A,B,C$, $r$ indexes the coordinates of $\delta_a x_a$, and $v$ runs over the coordinate basis for the $a$-th family in Eq.~\eqref{eq:dressed_X}.

Let $L_G \subseteq H^{j_A}(X,\mathcal{F}) \times H^{j_B}(X,\mathcal{G}) \times H^{j_C}(X,\mathcal{H})$ consist of the triples $([x_A],[x_B],[x_C])$ satisfying
\begin{align}\label{equation:general-gauged-classes}
	\begin{aligned}
		I_\xi(u,x_B,x_C)&=0&&\text{for every cocycle }u\in C^{j_A-1}(X,\mathcal F),\\
		I_\xi(x_A,v,x_C)&=0&&\text{for every cocycle }v\in C^{j_B-1}(X,\mathcal G),\\
		I_\xi(x_A,x_B,w)&=0&&\text{for every cocycle }w\in C^{j_C-1}(X,\mathcal H).
	\end{aligned}
\end{align}
It is proved in Ref.~\cite{LWHL2609nonAbelian} that each triple satisfying the above gauge conditions defines one orthonormal basis state of $\mathcal{C}_G$. In particular, $\dim \mathcal{C}_G = |L_G|$. We will apply this fact later in our setting.

\subsection{Gauging good quantum locally testable codes}\label{appendix:gauged_qLTC}
 
Now, we need three code blocks $A,B,C$ with linear distance and constant soundness, and the first of which has positive rate. These ingredients are given in the recent construction of good qLTCs \cite{gay2026asymptoticallygoodquantumlocally,BLN2026,CHLT2026}. The additional choices in \cite{li2026transversalnoncliffordgatesgood} give a nonzero trilinear invariant form. We then modify the third block to ensure that the gauge conditions are preserved by an action of growing prime order. The resulting complex is denoted by $E$ and this gives a prime divisor of the gauged code dimension tending to infinity, as claimed in Theorem~\ref{theorem:intrinsic} and \ref{theorem:intrinsic-code}. The parameters of $E$ and of the gauged code are proved in Appendix~\ref{appendix:gauged-parameters}.

We first outline construction of good qLTCs: write $s_0=2^{h_s}$, where $h_s$ is a sufficiently large positive integer with $3\nmid h_s$. Let $D_1,\ldots,D_6 \geq 3$ be the sufficiently large distinct odd degrees and let
\begin{align}
	Q_i = s_0^{2D_i}=2^{2h_sD_i},\qquad n_i=Q_i+1,\qquad 1\le i\le6.
\end{align}
All these numbers remain constant as the code length grows. The distinct $D_i$ make the $Q_i$ distinct, as required to obtain product-expanding projective Reed--Solomon in \cite{gay2026asymptoticallygoodquantumlocally}.

We define a $6$-dimensional cubical complex $X$. For $1\le i\le6$, let $T_i$ be the Bruhat--Tits tree for a completion of $\mbb F_{s_0^2}(z)$ at a degree-$D_i$ place. Its residue field is $\mbb F_{Q_i}$, so $T_i$ has degree $n_i$. The quaternionic construction used in \cite{li2026transversalnoncliffordgatesgood} supplies a fixed arithmetic group $\Gamma_0$ acting freely and cocompactly from the left on $T_1\times\cdots\times T_6$. Each tree is bipartite, with vertex types $0$ and $1$, and the action preserves the directions and these types. The quotient complex defined by the left action $X_0 = \Gamma_0 \backslash (T_1\times\cdots\times T_6)$ is finite.

Following Ref.~\cite{li2026transversalnoncliffordgatesgood}, we let $\nu$ range over an unbounded sequence of odd positive integers. For each such $\nu$, we choose four distinct irreducible polynomials $w_{\nu,1},\ldots,w_{\nu,4}\in\mbb F_{s_0^2}[z]$, all of degree $\nu$, so that the invariant form $I_\xi$ on the resulting cover is nonzero.
Reduction modulo these polynomials gives a homomorphism
\begin{align}\label{eq:arithmetic-cover}
	\rho_\nu:\Gamma_0\longrightarrow G_\nu\coloneq\mathrm{SL}_2(\mbb F_{Q_\nu})^4,
	\qquad Q_\nu=s_0^{2\nu},\qquad \Gamma_\nu=\ker\rho_\nu.
\end{align}
Each residue field $\mbb F_{s_0^2}[z]/(w_{\nu,j})$ has size $Q_\nu$. Strong approximation for simply connected groups over function fields~\cite{Prasad1977StrongApproximation} gives simultaneous surjectivity of these reductions. This yields the normal covering
\begin{align}
	X=X_\nu=\Gamma_\nu\backslash(T_1\times\cdots\times T_6)
	\xrightarrow{P_\nu}X_0,
	\qquad v_\nu=|G_\nu|=[Q_\nu(Q_\nu^2-1)]^4
\end{align}
with $v_\nu$ sheets and non-Abelian deck group $G_\nu$. The covering map is $P_\nu(\Gamma_\nu \tau) \coloneq \Gamma_0 \tau$ for a cell $\tau$ of the product of trees $T_1\times\cdots\times T_6$.

Over the covering space, the desired three sheaves are defined by pullbacks $\mathcal F=P_\nu^*\mathcal F_0$, $\mathcal G = P_\nu^*\mathcal G_0$ and $\mathcal H=P_\nu^*\mathcal H_0$, where $\mathcal F_0,\mathcal G_0,\mathcal H_0$ are the three sheaves on $X_0$ generated by projective Reed--Solomon codes~\cite{gay2026asymptoticallygoodquantumlocally,li2026transversalnoncliffordgatesgood}. We skip the formal definitions here, but the basic intuitions is that the local coefficient spaces and restriction matrices at a lifted cell or incidence are copied from its image in $X_0$. In direction $i$, the local codes are projective Reed--Solomon codes on $\mbb P^1(\mbb F_{Q_i})$ of length $n_i$. To use the same physical local dimension in every direction of $X$, we choose one finite field $\mbb F_q$ containing all fixed coefficients of the base sheaves and the fields $\mbb F_{Q_i^2}$, $1\le i\le6$. Such a common finite extension exists. We will regard all local matrices as matrices over $\mbb F_q$ and take their row spans over that field, which remains fixed throughout the covering family. The three sheaf codes are defined by $C^1(X,\mathcal F)\to C^2(X,\mathcal F)\to C^3(X,\mathcal F)$ and the corresponding terms for $\mathcal G$ and $\mathcal H$, as in Eq.~\eqref{eq:sheaf_code}. Since $X$ is defined with bounded cell incidence and fixed local coefficient dimensions, the sheaf codes defined here are qLDPC codes.

Now the index $a=A,B,C$ labels the three code blocks, and $i=1,\ldots,6$ labels their geometric directions in $X$. Let $B_A=\{1,2\}$, $B_B=\{3,4\}$, $B_C=\{5,6\}$, and let $e_i = (Q_i-1)/(s_0+1)$. In block $a$ and direction $i$, the local code is required to have dimension
\begin{align}\label{eq:local_rank}
	m_{a,i}=1+\begin{cases}
		e_i,&i\in B_a,\\
		(s_0-9)e_i,&a=A,\ i\notin B_A,\\
		3e_i,&a=B,C,\ i\notin B_a.
	\end{cases}
\end{align}
These local dimensions support the existence of a nonzero invariant form $I_\xi$ constructed in \cite{li2026transversalnoncliffordgatesgood}. For sufficiently large fixed $s_0$, the first code block has constant rate. The Ramanujan expansion and local product expansion give these three codes linear distance and constant soundness \cite{gay2026asymptoticallygoodquantumlocally}. 

We now consider the deck transformation $s\in G_\nu$ on cells $\sigma \prec \tau$ of $X$. We will use a cyclic subgroup of this action to prove prime divisibility of the code dimension, as in the group action counting argument of \cite{LWHL2609nonAbelian}. The covering and pullback definitions give
\begin{align}\label{eq:deck_coboundary}
		& P_\nu(s\sigma) = P_\nu\sigma,\qquad s\sigma\prec s\tau,
		\qquad \mathcal{F}_{s\sigma,s\tau} = \mathcal{F}_{\sigma,\tau},\\
		& (sx)(\sigma)=x(s^{-1}\sigma),\qquad \delta_a(sx)=s(\delta_a x).
\end{align}
The same restriction identity holds for $\mathcal G$ and $\mathcal H$, and $\delta_a$ denotes the coboundary in block $a=A,B,C$. Commutation with the coboundary follows from the covering identities, without requiring an abelian group. Thus $s[x] \coloneq [sx]$ is a well-defined action on cohomology.

\subsection{The modified complex and growing prime divisors}\label{appendix:gauged_modify}

Given $C^\bullet(X,\mathcal F)$, $C^\bullet(X,\mathcal G)$ and $C^\bullet(X,\mathcal H)$, we choose $\xi$ as in \cite{li2026transversalnoncliffordgatesgood} by copying the local coefficients of a nonzero 6-cycle in the base $X_0$ to every lifted cell. This cycle is fixed by deck transformations, so $I_\xi$ in Eq.~\eqref{eq:invariant_form} is also invariant under their simultaneous action on three inputs. We will keep the first two complexes unchanged and modify the third so that its contribution to the gauge conditions comes from cohomology classes fixed by an element $\rho \in G_\nu$ of growing prime order.

For every sufficiently large $\nu$ in the chosen sequence, there is a primitive prime divisor $p = p_\nu$ of $s_0^{2\nu}-1$~\cite{Ribenboim2004Primes}. This means
\begin{align}
	p \mid s_0^{2\nu}-1,\qquad p \nmid s_0^k-1 \quad (1\le k<2\nu).
\end{align}
Its multiplicative order is $\mathrm{ord}_p(s_0) \coloneq \min\{k \geq 1:s_0^k \equiv 1 \pmod p\}$. With $e_\nu = (Q_\nu-1)/(s_0+1)$, we therefore have
\begin{align}\label{eq:prime_order}
	\mathrm{ord}_p(s_0)=2\nu,\qquad p\ge2\nu+1,\qquad p\mid e_\nu.
\end{align}
The last divisibility follows because a prime dividing $s_0 + 1$ has order at most two.

Choose $\omega\in\mbb F_{Q_\nu}^{\times}$ of order $p$. Each factor of $G_\nu=\mathrm{SL}_2(\mbb F_{Q_\nu})^4$ consists of the $2\times2$ matrices of determinant one. Set $\rho = (\mathrm{diag}(\omega,\omega^{-1}),I_2,I_2,I_2)$. Its action on cochains permutes local coefficient vectors and is $\mbb F_q$-linear. Note that here $q$ stays fixed while $Q_\nu$ tends to infinity.

We modify the third complex $C^\bullet(X,\mathcal H)$ as follows:
\begin{align}\label{eq:modified_complex}
	& E^j=C^j(X,\mathcal H)\oplus C^{j-1}(X,\mathcal H), \\
	& \delta_E:E^j\longrightarrow E^{j+1},\qquad
	\delta_E(y,x)=(\delta_Cy,y-\rho y+\delta_Cx), \\
	& \pi:E^j\longrightarrow C^j(X,\mathcal H), \qquad \pi(y,x)=y.
\end{align}
Here $\delta_E^j$ is coboundary on $E^j$, with its degree omitted when clear, and $\pi$ is the projection onto the first summand. The formula defines $E$ in every degree from zero through seven, with $E^0 = C^0(X,\mathcal H)$ and $E^7=C^6(X,\mathcal H)$. Since $\rho$ commutes with $\delta_C$ and the field has characteristic two, $\delta_E^{j+1}\delta_E^j=0$ and $\pi \delta_E = \delta_C\pi$. A cocycle $y\in C^j(X,\mathcal H)$ lifts to a cocycle $(y,x)\in E^j$ exactly when $y-\rho y=\delta_Cx$. Consequently,
\begin{align}\label{eq:fiber_invariant_image}
	\operatorname{im}\bigl( \pi_*: H^j(E) \longrightarrow H^j(X,\mathcal H) \bigr)
	= H^j(X,\mathcal H)^\rho,
\end{align}
where $H^j(X,\mathcal H)^\rho \coloneq \{[y]:[\rho y] = [y]\}$. As a result, the projection of every cohomology class of $E$ is fixed by $\rho$. We are going to prove that gauging code blocks $A$, $B$ with $E$ yields a code whose dimension has growing prime divisors. The following lemma is the first step.

\begin{lemma}\label{lemma:seed_cup}
	For the element $\rho$, there are 2-cocycles $a_0 \in C^2(X,\mathcal F)$, $b_0 \in C^2(X,\mathcal G)$ and $c_0 \in C^2(X,\mathcal H)$ satisfying
	\begin{align}\label{eq:arithmetic_nonvanishing}
		[\rho c_0] = [c_0],\qquad I_\xi(a_0,b_0 - \rho b_0,c_0) \neq 0.
	\end{align}
\end{lemma}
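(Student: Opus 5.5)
The plan is to start from the nonvanishing trilinear form supplied by \cite{li2026transversalnoncliffordgatesgood}. That reference gives cocycles $a,b,c$ with $I_\xi(a,b,c)\neq0$, and since $I_\xi$ on cocycles depends only on cohomology classes, we may work throughout with the induced trilinear form on $H^2(X,\mathcal F)\times H^2(X,\mathcal G)\times H^2(X,\mathcal H)$. The deck group acts on cohomology by $s[x]=[sx]$. The cycle $\xi$ is lifted from $X_0$, so $I_\xi(sx_A,sx_B,sx_C)=I_\xi(x_A,x_B,x_C)$ for all $s\in G_\nu$. The idea is to build $c_0$ as a $\rho$-average, to move $\rho$ from the third slot to the first two using this invariance, and then to derive a contradiction from the assumption that the expression in \eqref{eq:arithmetic_nonvanishing} always vanishes.

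\textbf{Step 1: a $\rho$-fixed third argument.} For any cocycle $c$, set $c_0\coloneqq\sum_{k=0}^{p-1}\rho^k c$. Because $\rho^p=1$, this satisfies $\rho c_0=c_0$ exactly, so in particular $[\rho c_0]=[c_0]$. Using invariance,
\begin{align*}
I_\xi(a,b,c_0)=\sum_k I_\xi(\rho^{-k}a,\rho^{-k}b,c).
\end{align*}
Replacing $b$ by $b-\rho b$ and using $\rho c_0=c_0$ once more gives
\begin{align*}
I_\xi(a,b-\rho b,c_0)=I_\xi(a-\rho^{-1}a,\,b,\,c_0),
\end{align*}
so the twist can be moved freely between the first two slots.

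\textbf{Step 2: reduce to an invariance statement.} Suppose, for contradiction, that $I_\xi(a,(1-\rho)b,c_0)=0$ for all cocycles $a,b$ and every averaged $c_0$. Then, for every $c$, the bilinear form $\beta_c(a,b)\coloneqq\sum_k I_\xi(\rho^k a,\rho^k b,c)$ satisfies $\beta_c(a,\rho b)=\beta_c(a,b)$. Equivalently, the $\rho$-averaged form is invariant under $\rho$ acting on the $B$ slot alone. The task is therefore to show that $\rho$ acts on $H^2(X,\mathcal G)$ nontrivially relative to the pairing.

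\textbf{Step 3: use the product structure of $\rho$ and the local codes.} The element $\rho=(\mathrm{diag}(\omega,\omega^{-1}),I,I,I)$ acts only through the first $\mathrm{SL}_2$ factor of $G_\nu$. I would argue that the cohomology arising from the cover decomposes under the deck group into isotypic components, and that the nonvanishing in \cite{li2026transversalnoncliffordgatesgood} can be realized with $b$ in an eigencomponent of the torus $\langle\rho\rangle$ with nontrivial character $\chi$. For such $b$ we have $\rho b=\chi(\rho)b$ with $\chi(\rho)\neq1$, hence $(1-\rho)b=(1-\chi(\rho))b$. The choice of $a$ and $c$ can be made in the dual characters, which is possible because the torus $\mathrm{diag}(\omega,\omega^{-1})$ of order $p$ is diagonalizable over $\mbb F_{Q_\nu}\subseteq\overline{\mbb F}_q$ and $p$ is odd, coprime to the characteristic. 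With these choices the averaging over $k$ in $c_0$ does not annihilate the form, since the total character is trivial. The concern that $\omega\notin\mbb F_q$ can be handled by working over the extension and then taking $\mbb F_q$-rational combinations, namely Galois traces, which preserve nonvanishing for a suitable scalar multiple.

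\textbf{Main obstacle.} The hardest part is showing that some nontrivial $\rho$-character pairs nondegenerately with the others, i.e., that the nonzero cup form is not supported entirely on $\rho$-invariant classes. I would first try to use that the deck group $G_\nu$ acts transitively, and that $\mathrm{SL}_2(\mbb F_{Q_\nu})$ is generated by conjugates of $\rho$. If $I_\xi$ vanished on all non-invariant components for $\rho$, then it would do so for every conjugate $g\rho g^{-1}$ too, because $I_\xi$ is $G_\nu$-invariant. This would force the form to factor through the classes invariant under the whole first $\mathrm{SL}_2$ factor. I would then contradict this using the explicit nonvanishing construction of \cite{li2026transversalnoncliffordgatesgood}, which localizes on a single sheet and hence is not invariant under that factor.
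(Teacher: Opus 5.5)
There is a genuine gap. Your Steps 1 and 2 are correct, but they only reformulate the lemma. Over an extension containing $\omega$, split each $H^2$ into $\rho$-eigenspaces. By $\rho$-invariance of $I_\xi$, the lemma is then equivalent to $I_\xi$ being nonzero on $H^2(X,\mathcal F)_{\chi^{-1}}\times H^2(X,\mathcal G)_{\chi}\times H^2(X,\mathcal H)_{1}$ for some character $\chi\neq1$. The third slot must be the \emph{trivial} component, because $\sum_k\rho^kc=0$ whenever $c$ has a nontrivial character. ``The total character is trivial'' holds for every nonzero value of an invariant form, so it is not the relevant condition. Step 3 simply asserts that the nonvanishing from Ref.~\cite{li2026transversalnoncliffordgatesgood} can be realized in such components, and that assertion is the entire content of the lemma.

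Your argument for the ``main obstacle'' does not supply it, for three reasons.
\begin{itemize}
\item The negation of the lemma only says $I_\xi(a,(1-\rho)b,c)=0$ for $\rho$-fixed $c$. You silently strengthened this to vanishing on all non-invariant components.
\item Conjugating the actual hypothesis gives the same statement for $g\rho g^{-1}$, but with the third slot restricted to $g\rho g^{-1}$-fixed classes. These constraints live on different subspaces and cannot be combined into a statement about the whole form. At most you learn that $I_\xi(a,\cdot,c)$ kills $\sum_g\operatorname{im}(1-g\rho g^{-1})$ when $c$ is fixed by the entire first $\mathrm{SL}_2$ factor.
\item Even granting that the form factors through invariant classes, exhibiting a non-invariant class would not contradict this, since such a class can pair exactly as its invariant projection does. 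Your ``single sheet'' description is also unsupported.
\end{itemize}
More fundamentally, $G_\nu$-invariance plus nonvanishing cannot suffice: if the three cohomology groups were trivial $G_\nu$-modules, everything you use abstractly would hold and the conclusion would fail. You need specific information about how $\rho$ acts on the classes that carry the nonvanishing.

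That missing input is the explicit eigenstructure the paper uses. In the cited construction, the relevant classes are coefficients of generating polynomials:
\begin{itemize}
\item $a_F$ for monomials $F$ of $(X+Y)^{s_0e_\nu}$;
\item $c_G$ for monomials $G$ of $(X+Y)^{e_\nu}$;
\item $b_{F,G}$ for the dual polynomials $H_{F,G}$ of degree $Q_\nu-1$.
\end{itemize}
These satisfy $I_\xi(a_F,b_{F,G},c_G)\neq0$ for every pair. The torus element $\rho$ acts on these polynomials by diagonally rescaling $X$ and $Y$ by $\omega^{\mp1}$ and $\omega^{\pm1}$, so every coefficient class is a $\rho$-eigenclass.

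Take $g=X^{e_\nu}$, which is fixed precisely because $p\mid e_\nu$; this is why $p$ is chosen as a primitive prime divisor of $s_0^{2\nu}-1$. Take $f=X^{s_0e_\nu-s_0}Y^{s_0}$, whose dual is $h=H_{f,g}=X^{s_0}Y^{Q_\nu-1-s_0}$; these acquire eigenvalues $\omega^{\pm2s_0}\neq1$. This gives directly $I_\xi(a_f,b_{f,g}-\rho b_{f,g},c_g)=(1-\omega^{2s_0})I_\xi(a_f,b_{f,g},c_g)\neq0$, with no contradiction argument.

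Descent to $\mathbb F_q$ follows because $\operatorname{im}(1-\rho)$, $H^2(X,\mathcal H)^\rho$ and the form are all defined over $\mathbb F_q$. By multilinearity, vanishing over $\mathbb F_q$ would imply vanishing over the extension. This is cleaner than your Galois-trace remark. Your proposal never uses $p\mid e_\nu$, and that arithmetic fact is what produces a $\rho$-fixed third class paired nontrivially with a non-fixed second class.
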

\begin{proof}
	We use three formal polynomials in $X,Y$ obtained from the construction of Refs.~\cite{li2026transversalnoncliffordgatesgood} over a finite extension $\mathbb K$ containing $\mbb F_q$, $\mbb F_{Q_\nu}$. The polynomials have the form
	\begin{align}
		\mathcal P_A(X,Y) & = \sum_F a_F F(X,Y),
		\qquad a_F \in C^2(X,\mathcal F)\otimes_{\mbb F_q} \mathbb K,\\
		\mathcal P_B(X,Y) & = \sum_{F,G} b_{F,G}H_{F,G}(X,Y),
		\qquad b_{F,G} \in C^2(X,\mathcal G)\otimes_{\mbb F_q}\mathbb K,\\
		\mathcal P_C(X,Y) & = \sum_G c_G G(X,Y),
		\qquad c_G \in C^2(X,\mathcal H)\otimes_{\mbb F_q}\mathbb K.
	\end{align}
	Here $F$ ranges over the monomials with nonzero coefficients in $(X+Y)^{s_0e_\nu}$, and $G$ ranges over those in $(X+Y)^{e_\nu}$, with both expansions taken over $\mathbb K$. Their products form a basis of the $\mathbb K$-valued functions on $\mbb P^1(\mbb F_{Q_\nu})$ whose sum is zero. The polynomial $H_{F,G}$ is the homogeneous polynomial of degree $Q_\nu-1$ dual to $FG$: for any pair $F',G'$ from these bases,
	\begin{align}
		\sum_{z\in\mbb P^1(\mbb F_{Q_\nu})}H_{F,G}(z)(F'G')(z)
		= \begin{cases}1,&F'=F,\ G'=G,\\0,&\text{otherwise}.\end{cases}
	\end{align}
	Every coefficient $a_F,b_{F,G},c_G$ is a 2-cocycle. Indeed, applying the corresponding coboundary to each polynomial gives zero, and the displayed basis polynomials are linearly independent.The polynomials above are obtained by selecting a monomial pair and its dual function in each of the last three factors of $G_\nu$ and extracting the corresponding coefficients. These coefficients satisfy $I_\xi(a_F,b_{F,G},c_G)\neq0$ for every pair $F,G$~\cite{li2026transversalnoncliffordgatesgood}. Since $\rho$ is the identity in those three group factors, these choices do not affect its action on the remaining variables $X,Y$.
	
	Choose
	\begin{align}
		f=X^{s_0e_\nu-s_0}Y^{s_0},\qquad g = X^{e_\nu},\qquad h=X^{s_0}Y^{Q_\nu-1-s_0}.
	\end{align}
	The monomial $g$ occurs in $(X+Y)^{e_\nu}$, and $f$ occurs in $(X+Y)^{s_0e_\nu}=(X^{s_0}+Y^{s_0})^{e_\nu}$ because $e_\nu$ is odd. Regard $h$ as a function on the projective line. For every allowed monomial pair $u,v$ of degrees $s_0 e_\nu, e_\nu$, finite field power sums give
	\begin{align}
		\sum_{z\in\mbb P^1(\mbb F_{Q_\nu})}h(z)(uv)(z) =
		\begin{cases}
			1,&u=f,\ v=g,\\
			0,&\text{otherwise}.
		\end{cases}
	\end{align}
	Thus $h = H_{f,g}$. Both $h$ and $uv$ have degree $Q_\nu-1$, so their evaluations are independent of the chosen projective representatives. Take $a=a_f$, $b=b_{f,g}$ and $c = c_g$. On these polynomials, $\rho$ acts by $u(X,Y)\mapsto u(\omega^{-1}X,\omega Y)$. Since $p\mid e_\nu$,
	\begin{align}
		\rho f=\omega^{2s_0}f,\qquad \rho h=\omega^{-2s_0}h,\qquad \rho g=g.
	\end{align}
	Applying $\rho$ to the cohomology class of each coefficient gives
	\begin{align}
		\rho[\mathcal P_D(X,Y)]=[\mathcal P_D(\omega X,\omega^{-1}Y)],\qquad D=A,B,C,
	\end{align}
	where brackets are applied coefficientwise. Comparing the coefficients of $f,h,g$ gives
	\begin{align}
		\rho[a]=\omega^{-2s_0}[a],\qquad \rho[b]=\omega^{2s_0}[b],\qquad \rho[c]=[c].
	\end{align}
	For cocycle representatives, this gives
	\begin{align}
		I_\xi(a,b-\rho b,c)=(1-\omega^{2s_0})I_\xi(a,b,c)\ne0,
	\end{align}
	because $p$ is odd and $s_0$ is a power of two. It remains to choose the cocycles over $\mbb F_q$. The trilinear form and the subspaces $\im (1-\rho)\subseteq H^2(X,\mathcal G)$ and $H^2(X,\mathcal H)^\rho$ are defined over $\mbb F_q$. Kernels and images commute with field extension. If the form vanished on $H^2(X,\mathcal F)\times \im(1-\rho)\times H^2(X,\mathcal H)^\rho$ over $\mbb F_q$, it would also vanish after extension to $\mathbb K$, contrary to the displayed calculation. We can therefore choose such classes over $\mbb F_q$, take a preimage under $1-\rho$ in the second cohomology, and choose cocycle representatives $a_0,b_0,c_0$ as required.
\end{proof}

With the above result, we now define
\begin{align}\label{eq:modified_form}
	\widetilde{I}_\xi(a,b,e) \coloneq I_\xi(a,b - \rho b,\pi e).
\end{align}
Both $1 - \rho$ and $\pi$ commute with coboundaries, so $\widetilde{I}_\xi$ satisfies the Leibniz rule. By Lemma~\ref{lemma:seed_cup}, there is $u_0 \in C^1(X,\mathcal H)$ with $c_0- \rho c_0 = \delta_C u_0$. Hence $w_0 = (c_0,u_0) \in E^2$ is a cocycle and $\widetilde{I}_\xi (a_0,b_0,w_0) \neq 0$. Let
\begin{align}\label{eq::E_lamabda}
	\lambda: E^2 \longrightarrow \mbb{F}_q,\qquad \lambda(w) = \widetilde{I}_\xi(a_0,b_0,w).
\end{align}
This functional vanishes on $\delta_E E^1$ and is nonzero on $w_0$. We also write $\lambda$ for its induced nonzero functional on $H^2(E)$.

We apply the preceding gauging rules to
\begin{align}\label{eq:-complexes}
	C^1(X,\mathcal F)&\xrightarrow{\delta_A}C^2(X,\mathcal F)\xrightarrow{\delta_A}C^3(X,\mathcal F),\\
	C^1(X,\mathcal G)&\xrightarrow{\delta_B}C^2(X,\mathcal G)\xrightarrow{\delta_B}C^3(X,\mathcal G),\\
	E^2&\xrightarrow{\delta_E}E^3\xrightarrow{\delta_E}E^4,
\end{align}
and use $\widetilde I_\xi$ in place of $I_\xi$. These triples specify the physical spaces and stabilizers. The first two families of dressed $X$ stabilizer generators have eigenvalue $+1$. For the third family, prescribe eigenvalue $(-1)^{\Tr_{\mbb F_q/\mbb F_2}\lambda(w)}$ for $w\in E^2$. Equivalently, impose eigenvalue $+1$ on
\begin{align}\label{eq:dressed_E}
	\widetilde{\mathcal A}^{(E)}_w\ket{a,b,e} \coloneq (-1)^{\Tr_{\mbb F_q/\mbb F_2}
		\left(\widetilde I_\xi(a,b,w)-\lambda(w)\right)} \ket{a,b,e+\delta_Ew},
\end{align}
The $Z$ stabilizers are intact as before. In the matrix of $\delta_E$, the new term $1-\rho$ adds at most two nonzero entries to each relevant row and column. In both  $\widetilde{I}_\xi$ and $\lambda$, a term of the original cup product is replaced by its two copies using $b$ and $\rho b$, while $\pi$ simply discards the second summand. Each dressed $X$ stabilizer generator therefore acts on a bounded number of coordinates, and each coordinate occurs in a bounded number of checks. The present gauging process still yields qLDPC codes.

For a cocycle $w \in E^2$, the translation in Eq.~\eqref{eq:dressed_E} is zero, so its prescribed eigenvalue requires $\widetilde I_\xi(a,b,w) = \lambda(w)$. Since all scalar multiples of $w$ are included,  the trace conditions are thus equivalent to this $\mbb F_q$-valued equation. The first two families have eigenvalue $+1$, so their compatibility equations are unchanged. Let $L_\lambda$ be the full set of triples
\begin{align}
	([a],[b],[e])\in H^2(X,\mathcal F)\times H^2(X,\mathcal G)\times H^3(E)
\end{align}
satisfying
\begin{align}\label{eq:gauge_condition_E}
	\widetilde I_\xi(u,b,e) & = 0 \qquad \text{for every cocycle }u\in C^1(X,\mathcal F),\\
	\widetilde I_\xi(a,v,e) & = 0 \qquad \text{for every cocycle }v\in C^1(X,\mathcal G),\\
	\widetilde I_\xi(a,b,w) & = \lambda(w) \qquad \text{for every cocycle }w\in E^2.
\end{align}
Each triple satisfying all three conditions contributes one basis state of the gauged code~\cite{LWHL2609nonAbelian}. Thus
\begin{align}\label{eq:E-full-count}
	K_\nu \coloneq \dim\mathcal C_G = |L_\lambda|.
\end{align}
The triple $([a_0],[b_0],0)$ lies in $L_\lambda$, so $K_\nu>0$. Consider the action $	([a],[b],[e])\longmapsto([\rho a],[\rho b],[e])$. For $[e] \in H^3(E)$ and the cocycles $u,v,w$ in Eq.~\eqref{eq:gauge_condition_E}, Eq.~\eqref{eq:fiber_invariant_image} gives $\rho\pi_*[e]=\pi_*[e]$ and $\rho\pi_*[w]=\pi_*[w]$. Invariance of $I_\xi$ therefore gives
\begin{align}
	\begin{aligned}
		\widetilde I_\xi(u,\rho b,e)&=\widetilde I_\xi(\rho^{-1}u,b,e),\\
		\widetilde I_\xi(\rho a,v,e)&=\widetilde I_\xi(a,\rho^{-1}v,e),\\
		\widetilde I_\xi(\rho a,\rho b,w)&=\widetilde I_\xi(a,b,w).
	\end{aligned}
\end{align}
Since $\rho^{-1}$ permutes the cocycles in $C^1(X,\mathcal F)$ and $C^1(X,\mathcal G)$, the first two identities preserve the first two conditions in Eq.~\eqref{eq:gauge_condition_E}. The third identity preserves its last condition. Thus the action preserves $L_\lambda$. A fixed triple would have $[\rho b] = [b]$, making $b - \rho b$ a coboundary and forcing $\widetilde I_\xi(a,b,w)=0$ for every cocycle $w \in E^2$. This contradicts $\lambda([w_0]) \neq 0$. Since $\rho$ has prime order $p_\nu$, every orbit has $p_\nu$ elements. This proves the following lemma.

\begin{lemma}\label{lemma:prime-orbit}
	The full gauged code dimension satisfies
	\begin{equation}\label{eq:E-prime-divisibility}
		0<K_\nu,\qquad p_\nu\mid K_\nu,\qquad p_\nu\ge2\nu+1\longrightarrow\infty.
	\end{equation}
	Namely, the gauged code dimension is positive and has a prime divisor tending to infinity along the family.
\end{lemma}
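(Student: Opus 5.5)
The proof is a group-action counting argument on the index set $L_\lambda$ of Eq.~\eqref{eq:E-full-count}. Since $K_\nu=|L_\lambda|$, it suffices to exhibit a free action of a cyclic group of prime order $p_\nu$ on the finite set $L_\lambda$ and to check $L_\lambda\neq\emptyset$. By the orbit--stabilizer theorem every orbit then has size exactly $p_\nu$, so $p_\nu\mid K_\nu$.

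\textbf{Step 1: positivity.} We show the triple $([a_0],[b_0],0)$ from Lemma~\ref{lemma:seed_cup} lies in $L_\lambda$.
\begin{itemize}
\item The first two conditions in Eq.~\eqref{eq:gauge_condition_E} hold trivially, since $\widetilde I_\xi$ is trilinear and the third argument $e=0$.
\item The third condition holds by the definition of $\lambda$ in Eq.~\eqref{eq::E_lamabda}: $\widetilde I_\xi(a_0,b_0,w)=\lambda(w)$ for every cocycle $w\in E^2$.
\item We also need the conditions to depend only on classes, so that $L_\lambda$ is a well-defined subset of cohomology. This follows from the Leibniz rule for $\widetilde I_\xi$, since $1-\rho$ and $\pi$ commute with coboundaries.
\end{itemize}
Hence $K_\nu>0$.

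\textbf{Step 2: the action preserves $L_\lambda$.} Let $\rho$ act by $([a],[b],[e])\mapsto([\rho a],[\rho b],[e])$. This is well defined on cohomology because $\rho$ commutes with $\delta_A,\delta_B$ by Eq.~\eqref{eq:deck_coboundary}. It has order $p_\nu$, since $\omega$ has order $p_\nu$.

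To check that the action preserves the three gauge conditions, we use two facts:
\begin{itemize}
\item $I_\xi$ is invariant under the simultaneous deck action on all three inputs, because $\xi$ is a lift of a base cycle.
\item By Eq.~\eqref{eq:fiber_invariant_image}, $\pi e$ and $\pi w$ are $\rho$-fixed up to coboundaries. Since the other arguments are cocycles, these coboundaries contribute nothing.
\end{itemize}
Moving $\rho$ onto the other arguments then gives the three displayed identities. The cocycles $u,v$ are permuted by $\rho^{-1}$, so the first two conditions are preserved, and the third identity preserves the $\lambda$-condition.

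\textbf{Step 3: freeness.} Suppose some nonidentity element $\rho^k$ fixes a triple. Since $p_\nu$ is prime, $\rho^k$ generates $\langle\rho\rangle$, so $\rho$ itself fixes the triple, and in particular $[\rho b]=[b]$. Then $b-\rho b$ is a coboundary. By the Leibniz rule, $\widetilde I_\xi(a,b,w)=I_\xi(a,b-\rho b,\pi w)=0$ for every cocycle $w$. This contradicts the third condition at $w_0$, where $\lambda(w_0)\neq0$ by Lemma~\ref{lemma:seed_cup}. So all stabilizers are trivial and every orbit has size $p_\nu$.

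\textbf{Step 4: growth of $p_\nu$.} We have $p_\nu\ge2\nu+1$ from Eq.~\eqref{eq:prime_order}: the order of $s_0$ modulo $p_\nu$ is $2\nu$, and this order divides $p_\nu-1$ by Fermat. Since $\nu\to\infty$ along the family, $p_\nu\to\infty$.

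\textbf{Main obstacle.} The delicate point is Step 2, and specifically the identity $\widetilde I_\xi(\rho a,\rho b,w)=\widetilde I_\xi(a,b,w)$. Rewriting via invariance gives $I_\xi(a,b-\rho b,\rho^{-1}\pi w)$, and one must then replace $\rho^{-1}\pi w$ by $\pi w$. This replacement is justified only because the difference is a coboundary and both $a$ and $b-\rho b$ are cocycles, so the Leibniz rule applies. I would verify this cocycle requirement carefully. I would also check that the same argument goes through for $e\in H^3(E)$, where the image of $\pi_*$ is again $\rho$-invariant by Eq.~\eqref{eq:fiber_invariant_image}.
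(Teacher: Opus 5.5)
Your proposal is correct and follows essentially the same route as the paper. Both arguments use the seed triple $([a_0],[b_0],0)$ for positivity, the action $([a],[b],[e])\mapsto([\rho a],[\rho b],[e])$ preserving $L_\lambda$ via deck invariance of $I_\xi$ and $\rho$-invariance of $\operatorname{im}\pi_*$, freeness from $\lambda([w_0])\neq 0$, and $p_\nu\ge 2\nu+1$ from $\mathrm{ord}_{p_\nu}(s_0)=2\nu$. Your explicit check of the coboundary replacement step (using that $a$ and $b-\rho b$ are cocycles) and your reduction from $\rho^k$ to $\rho$ spell out details the paper leaves implicit.
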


\subsection{Rate and distance of the gauged code}\label{appendix:gauged-parameters}

We now prove constant rate and linear distance for the gauged code. We use the distance and expansion bounds of $C$ to establish the needed estimates for $E$, then apply the gauged code distance result. All weights below count individual $\mbb F_q$ coordinates. The coordinates of the following spaces label physical $q$-dimensional sites and dressed $X$ stabilizer generators, respectively:
\begin{align}
	V & = C^2(X,\mathcal F)\oplus C^2(X,\mathcal G)\oplus E^3,\\
	U & = C^1(X,\mathcal F)\oplus C^1(X,\mathcal G)\oplus E^2.
\end{align}
Recall that $q=2^{s_q}$, so each $q$-dimensional site comprises $s_q$ binary sites, or qubits. Thus $n_\nu=\dim_{\mbb F_q}V$ is the number of $q$-dimensional sites, and $N_\nu=s_qn_\nu$ is the number of $\mbb F_2$ sites, equivalently the number of qubits. The covering has $v_\nu=|G_\nu|$ sheets, so every base cell has $v_\nu$ lifts. Since the local coefficient dimensions are fixed, $n_\nu=\Theta(v_\nu)$, and the number $m_\nu$ of local projector terms in $H_G$ is also $\Theta(v_\nu)$. Each term acts on $O(1)$ sites, and each site occurs in $O(1)$ terms.

\paragraph{Gauged code rate.}
In Eq.~\eqref{eq:gauge_condition_E}, fix $[b] = [b_0]$ and $[e] = 0$. The first two equations vanish, and the third becomes
\begin{align}
	I_\xi(a-a_0,b_0 - \rho b_0,\pi w) = 0 \qquad\text{for every cocycle }w\in E^2.
\end{align}
Recall that $\pi(y,x) = y$. This equation depends only on $[a-a_0]$ and the projected class $\pi_*[w]$: classes in $\ker\pi_*$ give the zero equation, and classes with the same projection give the same equation. By Eq.~\eqref{eq:fiber_invariant_image}, the projected classes range over exactly $H^2(X,\mathcal H)^\rho$. Since the expression is linear in this projected class, there are at most $\dim H^2(X,\mathcal H)^\rho$ independent linear equations on $H^2(X,\mathcal F)$. The class $[a_0]$ is a solution, so rank--nullity gives at least $q^{\dim H^2(X,\mathcal F)-\dim H^2(X,\mathcal H)^\rho}$ choices of $[a]$. Each choice gives a distinct compatible triple counted by $K_\nu$.

Since $p_\nu$ is odd, the average $p_\nu^{-1}\sum_{k=0}^{p_\nu-1} \rho^k$ commutes with the coboundary and identifies the cohomology of the invariant cochains with the invariant cohomology. Each lifted cell has an orbit of length $p_\nu$, so
\begin{align}
	\dim H^2(X,\mathcal H)^\rho
	\leq \dim C^2(X,\mathcal H)^\rho = \frac{\dim C^2(X,\mathcal H)}{p_\nu}.
\end{align}
Consequently,
\begin{equation}\label{eq:gauged_rate}
	\log_q K_\nu \geq \dim H^2(X,\mathcal F) - \frac{\dim C^2(X,\mathcal H)}{p_\nu} = \Omega(v_\nu).
\end{equation}
Here the first term is $\Omega(v_\nu)$ by the original lower bound on the rate given in Appendix~\ref{appendix:gauged_qLTC}, whereas the second is $O(v_\nu/p_\nu)$. Therefore, the final code has positive rate.

\paragraph{Distance and expansion of $E$.}
For a cochain complex $C^\bullet$ with coboundaries $\delta^j$ and dual boundaries $\partial_j=(\delta^{j-1})^T$, recall the cocycle and cycle expansions~\cite{Dinur2024sheaf,LWHL2609nonAbelian}:
\begin{align}\label{eq:cycle_expansions}
	\epsilon_\delta(j)=\min_{x\in C^j\setminus\ker\delta^j}
	\frac{|\delta^jx|}{\operatorname{dist}(x,\ker\delta^j)},\qquad
	\epsilon_\partial(j)=\min_{z\in C_j\setminus\ker\partial_j}
	\frac{|\partial_jz|}{\operatorname{dist}(z,\ker\partial_j)},
\end{align}
where $C_j=(C^j)^*$ and $\operatorname{dist}$ is Hamming distance in the chosen $\mbb F_q$ coordinates. 
When the expansion constants are positive, their definitions give the equivalent bounds 
\begin{align}\label{eq:syndrome_correction}
	\min_{\delta^jy=b}|y| \leq \epsilon_\delta(j)^{-1}|b|,
	\ b \in \im \delta^j, \qquad
	\min_{\partial_jy=b}|y| \leq \epsilon_\partial(j)^{-1}|b|,
	\ b \in \im \partial_j.
\end{align}
For example, if $b=\delta^jx$, the solutions of $\delta^jy=b$ are exactly $x + \ker\delta^j$, so their minimum weight is $\operatorname{dist}(x,\ker\delta^j)$.

Blocks $A$ and $B$ have linear code distances. With $\partial_{a,j}=(\delta_a^{j-1})^T$, their (co)cycle expansions $\epsilon_{\delta_a}(1),\epsilon_{\delta_a}(2),\epsilon_{\partial_a}(2)$, for $a=A,B$, are bounded below by a positive constant independent of $\nu$~\cite{gay2026asymptoticallygoodquantumlocally,BLN2026,CHLT2026}. For $E$, write $\delta_E^j$ for its coboundary from $E^j$ to $E^{j+1}$. Then we have the following lemma.

\begin{lemma}\label{lemma:E-parameters}
	There are constants $c_E,\epsilon_E>0$, independent of $\nu$, such that the distances satisfy
	\begin{align}\label{eq:E-distances}
		\min_{e \in \ker\delta_E^3 \setminus \im \delta_E^2}|e| \geq c_E v_\nu, \qquad
		\min_{\gamma \in \ker\partial_{E,j} \setminus \im \partial_{E,j+1}}|\gamma|\ge c_Ev_\nu \quad (j=2,3).
	\end{align}
	The cocycle expansions in degrees two and three and the cycle expansion in degree three satisfy
	\begin{align}\label{eq:E-expansion}
		\epsilon_{\delta_E}(2)\ge\epsilon_E,\qquad
		\epsilon_{\delta_E}(3)\ge\epsilon_E,\qquad
		\epsilon_{\partial_E}(3)\ge\epsilon_E.
	\end{align}
\end{lemma}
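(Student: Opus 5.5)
The idea is to treat $E$ as the mapping cone of $1-\rho\colon C^\bullet(X,\mathcal H)\to C^\bullet(X,\mathcal H)$. The short exact sequence
\[
0\to C^{\bullet-1}(X,\mathcal H)\xrightarrow{\iota}E^\bullet\xrightarrow{\pi}C^\bullet(X,\mathcal H)\to0,\qquad \iota(x)=(0,x),
\]
is compatible with coboundaries, since $\delta_E(0,x)=(0,\delta_Cx)$ and $\pi\delta_E=\delta_C\pi$. The plan is to move every distance and expansion question about $E$ to the known bounds for block $C$: linear distance in the relevant degrees, and positive cocycle and cycle expansion constants $\epsilon_{\delta_C}(j),\epsilon_{\partial_C}(j)$ for $j=1,2,3$, all uniform in $\nu$. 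Weights on $E^j$ split additively, $|(y,x)|=|y|+|x|$. Also, $\rho$ permutes coordinates, so $|y-\rho y|\le2|y|$.

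\textbf{Cocycle expansion $\epsilon_{\delta_E}(j)$, $j=2,3$.} Take $(y,x)$ and set $b=\delta_E(y,x)=(\delta_Cy,\,y-\rho y+\delta_Cx)$. First use $\epsilon_{\delta_C}(j)$ to find $y'$ with $\delta_Cy'=\delta_Cy$ and $|y-y'|\le\epsilon_{\delta_C}(j)^{-1}|\delta_Cy|$. Replacing $y$ by $y'$ changes the second component by $(y-y')-\rho(y-y')$, whose weight is $O(|b|)$. So the remaining task is a pair $(y',x)$ with $\delta_Cy'=0$ and $|y'-\rho y'+\delta_Cx|=O(|b|)$.

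Next, write $y'-\rho y'+\delta_Cx=\delta_C(x)+(\text{a cocycle})$ and use $\epsilon_{\delta_C}(j-1)$ to move $x$ by $O(|b|)$, so that $y'-\rho y'+\delta_C x'$ lies in the cocycles. We still need it to be a coboundary of something small. Here the lemma requires that whenever $y'-\rho y'+\delta_Cx'$ is a small cocycle, it is in fact a coboundary, or else $y'$ can be adjusted. I would invoke the linear distance of $C$ in degree $j$: a cocycle of weight below $c_Cv_\nu$ is a coboundary. So if $|b|$ is small relative to $v_\nu$, the residual class vanishes and we correct by a small $x$-shift, using cocycle expansion again. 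If $|b|\ge c v_\nu$, the trivial bound $\operatorname{dist}\le\dim E^j=O(v_\nu)$ finishes. Combining these gives $\operatorname{dist}((y,x),\ker\delta_E)\le\epsilon_E^{-1}|b|$.

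\textbf{Distances.} Let $e=(y,x)$ be a nontrivial cocycle in $E^3$ of small weight. If $[y]\ne0$ in $H^3(X,\mathcal H)$, then $|y|\ge c_Cv_\nu$. Otherwise $y=\delta_Cz$ with $|z|=O(|y|)$ by cocycle expansion, and subtracting $\delta_E(z,0)$ leaves $(0,x')$ with $x'$ a cocycle of weight $O(|e|)$. If $[x']\ne0$, distance gives $|x'|\ge c_Cv_\nu$. If $[x']=0$, the class of $e$ is trivial. The homological statements for $j=2,3$ are handled dually: the chain complex of $E$ is the mapping cone of $1-\rho^{T}$ on $C_\bullet$, and the same two-step argument applies with cycle expansion $\epsilon_{\partial_C}$ and homological distances. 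The cycle expansion $\epsilon_{\partial_E}(3)$ follows from the dual version of the cocycle argument.

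\textbf{Main obstacle.} The hard step is the middle of the expansion argument, controlling the connecting map $[y]\mapsto[y-\rho y]$. When the correction only kills $y-\rho y$ up to a nontrivial class, this class must not be small. I expect to resolve it with the dichotomy above, namely small means trivial by linear distance, while large weight is absorbed by the $O(v_\nu)$ ambient dimension. The constant loses a factor depending on $\max\{\dim E^j\}/v_\nu$, which is bounded since coefficient dimensions are fixed.
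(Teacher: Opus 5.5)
Your proposal is correct and follows essentially the same route as the paper. You peel off the $C^\bullet$ component using the cocycle (resp.\ cycle) expansion of block $C$. You then note that what remains in the shifted summand is a cocycle (resp.\ cycle) of weight $O(|e|)$ or $O(|b|)$. The linear distance of $C$ makes this remainder a coboundary when it is small, and in the large-syndrome case the $O(v_\nu)$ ambient dimension absorbs the bound.

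One small cleanup: in the expansion step you want $\delta_C y'=0$, not $\delta_C y'=\delta_C y$. The residual $y'-\rho y'+\delta_C x$ is then automatically a $j$-cocycle, so the intermediate step of ``moving $x$ so that it lies in the cocycles'' is unnecessary. Only the small-implies-coboundary dichotomy is needed.
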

\begin{proof}
	Since the original third complex $C^\bullet(X,\mathcal H)$ has linear distances $\Omega(v_\nu)$ and its expansion constants $\epsilon_{\delta_C}(j)$ for $j=1,2,3$ and $\epsilon_{\partial_C}(j)$ for $j=2,3$ are bounded below by positive constant, we can apply these bounds successively to the two components in the definition of $\delta_E$.
	
	If $(y,x)\in E^3$ is a cocycle of sufficiently small weight, the original degree-three distance gives $y =\delta_C u$, with $|u| = O(|y|)$ by Eq.~\eqref{eq:syndrome_correction}. Subtracting $\delta_E(u,0)$ leaves $(0,x+(1-\rho)u)$. Its second component is a small degree-two cocycle and hence a coboundary, proving the first bound in Eq.~\eqref{eq:E-distances}.
	
	The boundary on the dual spaces is
	\begin{align}
		\partial_{E,j}(u,v) = \bigl(\partial_{C,j}u+(1- \rho)^Tv, \ \partial_{C,j-1}v\bigr).
	\end{align}
	For a small cycle $(u,v)$ in degree $j=2,3$, the original homological distance gives $v=\partial_{C,j}z$, with $|z|=O(|v|)$. Subtracting $\partial_{E,j+1}(0,z)$ leaves a small degree-$j$ cycle in the first summand and zero in the second. The remaining cycle is a boundary. This proves the second bound in Eq.~\eqref{eq:E-distances}.
	
	To bound $\epsilon_{\delta_E}(j)$ for $j=2,3$, take $(r,s)$ in the image of $\delta_E^j$ and solve $\delta_Cy=r$ with $|y|=O(|r|)$. The image condition gives $\delta_Cs=(1-\rho)r$, so $s+(1-\rho)y$ is a $j$-cocycle of weight $O(|r|+|s|)$. If this weight is below the original distance, solve $\delta_Cx=s+(1-\rho)y$ with $|x|=O(|r|+|s|)$. Otherwise $|r|+|s|=\Omega(v_\nu)$, and any preimage has weight $O(v_\nu)=O(|r|+|s|)$. Thus Eq.~\eqref{eq:syndrome_correction} gives a constant lower bound on $\epsilon_{\delta_E}(j)$.
	
	For $(r,s) \in \im \partial_{E,3}$, solve $\partial_{C,2}v = s$ with $|v|=O(|s|)$. The image condition gives $\partial_{C,2}r=(1 - \rho)^Ts$, so $r+(1- \rho)^Tv$ is a 2-cycle. Below the original homological distance, Eq.~\eqref{eq:syndrome_correction} gives $\partial_{C,3}u = r+(1- \rho)^Tv$ with $|u|=O(|r|+|s|)$. Otherwise any preimage has weight $O(v_\nu)=O(|r|+|s|)$. This proves the bound on $\epsilon_{\partial_E}(3)$, with constants independent of the order of $\rho$.
\end{proof}

Combine the three blocks $A,B,E$ by direct-sums $\delta_-=\delta_A^0\oplus\delta_B^0\oplus\delta_E^1$, $\delta=\delta_A^1\oplus\delta_B^1\oplus\delta_E^2$ and $\delta_+=\delta_A^2\oplus\delta_B^2\oplus\delta_E^3$, with domains and codomains
\begin{align}
	& \delta_-: C^0(X,\mathcal F)\oplus C^0(X,\mathcal G)\oplus E^1\longrightarrow U, \\
	& \delta: U \longrightarrow V, \\
	& \delta_+: V \longrightarrow C^3(X,\mathcal F)\oplus C^3(X,\mathcal G)\oplus E^4.
\end{align}
Since the three blocks have disjoint coordinates, their distance and expansion bounds give the same bounds for these direct sums, including
\begin{equation}\label{eq:check_distance}
	\min\{|\gamma|:\gamma \in \ker\delta_-^T \setminus \im \delta^T\} \geq c_* n_\nu
\end{equation}
for a constant $c_*>0$ independent of $\nu$.

\paragraph{Gauged code distance.}
It is proved in Ref.~\cite{LWHL2609nonAbelian} using the general Knill--Laflamme condition that, when the ingredient code blocks have good distances and cocycle expansions, the gauged code inherits good distance. Since $A,B$ and $E$ satisfy these conditions, our gauged code family has distance $\Omega(N_\nu)$. The constant soundness for the gauged code is prove in Section \ref{subsection:gauged-operator-bound} in the main text.

\section{Binary NLTM from ternary stabilizer codes}\label{appendix:ternary}

A qutrit stabilizer code encoding $k\ge1$ logical qutrits supplies a logical dimension $3^k$. A linear energy barrier protects this factor at low energy and yields binary NLTM after embedding each qutrit into two qubits.
Appendix~\ref{appendix:hamiltonian} defines the embedded Hamiltonian and states the general energy barrier theorem.
Appendix~\ref{appendix:input} realizes its requirements with ternary Tanner codes, and Appendix~\ref{appendix:band} gives the final proof.
This construction separates binary and intrinsic NLTM, as explained in Appendix~\ref{appendix:ternary-properties}.

\subsection{Energy barrier criterion for binary NLTM}\label{appendix:hamiltonian}

Let $S_1,\cdots,S_m$ be a chosen list of commuting ternary Pauli checks on $n$ qutrits. The check Hamiltonian is
\begin{equation}\label{equation:bar-check-projector}
	h_i=I-\frac{I+S_i+S_i^2}{3},\quad H_n^{\mathrm{qutrit}}=\sum_{i=1}^{m}h_i.
\end{equation}

\begin{definition}[Energy barrier]\label{definition:energy-barrier}
	For a chosen list $S_1,\cdots,S_m$ of commuting ternary Pauli checks on $n$ qutrits, set $\omega=e^{2\pi i/3}$. For a ternary Pauli $P$, define its syndrome relative to this list, $\sigma(P)\in\mbb{F}_3^m$, by $S_iP=\omega^{\sigma(P)_i}PS_i$. A \emph{single-site Pauli path} is a sequence $P_0=I,P_1,\cdots,P_T$ such that $P_jP_{j-1}^\dagger$ is supported on at most one ternary site. When a nontrivial logical Pauli exists, the energy barrier of the corresponding check Hamiltonian is
	\begin{equation}
		B\coloneqq\min_{\substack{\{P_j\}_{j=0}^T\text{ single-site Pauli path}\\ P_T\text{ nontrivial logical Pauli}}}\max_{0\le j\le T}|\sigma(P_j)|.
	\end{equation}
\end{definition}

Recall the qutrit Hamiltonian $H_n^{\mathrm{qutrit}}=\sum_i h_i$ in Eq.~\eqref{equation:bar-check-projector}. Embed each qutrit into two qubits by $J\ket{0}=\ket{00}$, $J\ket{1}=\ket{01}$, and $J\ket{2}=\ket{10}$, and let $\Pi_v\coloneqq JJ^\dagger=I-\ketbra{11}{11}_v$ project onto the legal subspace at site $v$. For a set $R$ of sites, write $J_R\coloneqq\bigotimes_{v\in R}J$ and let $I_d$ be the identity on $\mbb{C}^d$. For each check, set $R_i\coloneqq\operatorname{supp}(h_i)\subset[n]$ and write $h_i=h_i^{(R_i)}\otimes\bigotimes_{v\notin R_i}I_3$. Its qubit embedding and the Hamiltonian are
\begin{align}
	\widehat h_i&\coloneqq J_{R_i}h_i^{(R_i)}J_{R_i}^\dagger\otimes\bigotimes_{v\notin R_i}I_4,\\
	H_n&\coloneqq\sum_i\widehat h_i+\sum_{v=1}^n\ketbra{11}{11}_v.\label{equation:main-hamiltonian}
\end{align}
Thus $\widehat h_i$ vanishes if any site in $R_i$ is illegal, while illegal sites outside $R_i$ do not affect it. The physical Hilbert space decomposes according to the set $F$ of illegal sites:
\begin{equation}\label{equation:ham-sector-space}
	(\mbb{C}^2)^{\otimes2n}=\bigoplus_{F\subset[n]}\mc{H}_F,\quad \mc{H}_F\coloneqq\bigotimes_{v\in F}\operatorname{span}\{\ket{11}_v\}\otimes\bigotimes_{v\notin F}\im J_v.
\end{equation}

\begin{proposition}[Properties of the embedded Hamiltonian]
	Let the chosen nonidentity Pauli checks generate the stabilizer group of a ternary $[\![n,k]\!]$ code. Suppose each check is supported on at most $w$ sites and each site is incident on at most $\Delta$ checks. Set $\ell\coloneqq2\max\{w,1\}$. The Hamiltonian $H_n$ in Eq.~\eqref{equation:main-hamiltonian} has the following properties:
	\begin{enumerate}[label=(\roman*)]
		\item All terms are mutually commuting orthogonal projectors of norm one, each supported on at most $\ell$ qubits, and each qubit belongs to at most $\Delta+1$ terms.
		\item The Hamiltonian satisfies
		\begin{equation}
			\operatorname{spec}(H_n)\subset\mbb{Z}_{\ge0},\quad 0\le H_n\le(\Delta+1)nI,
		\end{equation}
		where $\operatorname{spec}(H_n)$ denotes its spectrum.
		\item The Hamiltonian is frustration free, with
		\begin{equation}
			\ker H_n=J_{[n]}\bigl(\ker H_n^{\mathrm{qutrit}}\bigr),\quad \dim\ker H_n=3^k.
		\end{equation}
		\item The subspaces $\mc{H}_F$ form an orthogonal decomposition, are invariant under every term, and satisfy
		\begin{equation}\label{equation:ham-sector}
			H_n\big|_{\mc{H}_F}=|F|I_{\mc{H}_F}+\sum_{i:\,R_i\cap F=\varnothing}\widehat h_i\big|_{\mc{H}_F}.
		\end{equation}
	\end{enumerate}
\end{proposition}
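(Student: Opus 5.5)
The plan is to verify the four items directly from the definitions of $J$, $\Pi_v$, and $\widehat h_i$. Everything is organized around the sector decomposition~\eqref{equation:ham-sector-space}; once (iv) is in place, (ii) and (iii) follow almost immediately.

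\emph{Item (i).} I first check that each $\widehat h_i$ is an orthogonal projector. Since $J^\dagger J=I_3$, the map $J_{R_i}$ is an isometry, so $J_{R_i}h_i^{(R_i)}J_{R_i}^\dagger$ is a projector whenever $h_i^{(R_i)}$ is. This holds because $(I+S_i+S_i^2)/3$ projects onto the $+1$ eigenspace of the order-three unitary $S_i$. Each $\ketbra{11}{11}_v$ is clearly a projector.

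For commutation, two embedded checks satisfy
\begin{equation}
\widehat h_i\widehat h_j=J_{R_i\cup R_j}\,h_i^{(R_i)}h_j^{(R_j)}\,J_{R_i\cup R_j}^\dagger\otimes I.
\end{equation}
This uses $J^\dagger J=I$ on the overlapping sites, and the corresponding identity holds for $\widehat h_j\widehat h_i$. Because the ternary checks commute, the two products agree. Next, $\ketbra{11}{11}_v$ commutes with $\widehat h_i$: if $v\in R_i$, both products vanish, since $J^\dagger\ket{11}=0$; otherwise the two operators have disjoint supports. Norm one follows because each nonidentity $S_i$ has a non-$+1$ eigenvector, so $h_i\neq0$. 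The support bounds are immediate: each $\widehat h_i$ acts on $2|R_i|\le 2w$ qubits, and a qubit at site $v$ lies in at most $\Delta$ check terms plus one penalty term.

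\emph{Item (iv).} Each $\mc{H}_F$ is the range of the product over sites of $\ketbra{11}{11}_v$ for $v\in F$ and $\Pi_v$ for $v\notin F$. These are commuting local projectors that sum to the identity, so the $\mc{H}_F$ give an orthogonal decomposition. Every term preserves each $\mc{H}_F$. The penalty terms are diagonal in this decomposition. Each $\widehat h_i$ has range and support inside $\bigotimes_{R_i}\im J$, so it maps $\mc H_F$ into itself, and it vanishes on $\mc H_F$ when $R_i\cap F\ne\varnothing$. On $\mc H_F$ the penalty sum equals $|F|I$, which gives~\eqref{equation:ham-sector}.

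\emph{Item (ii).} Since all terms are commuting projectors, they are simultaneously diagonalizable with eigenvalues in $\{0,1\}$. Hence $\operatorname{spec}(H_n)\subset\mbb Z_{\ge0}$. The number of terms is at most $m+n$. Counting incidences gives $m\le \Delta n$, since each check is nonidentity and so touches at least one site, which yields $H_n\le(\Delta+1)nI$.

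\emph{Item (iii).} This is the only step needing care. By (iv), the kernel of $H_n$ lies in $\mc H_\varnothing=J_{[n]}(\mbb C^{3})^{\otimes n}$, because $H_n\ge |F|$ on every other sector. On $\mc H_\varnothing$, conjugation by the unitary $J_{[n]}:(\mbb C^3)^{\otimes n}\to\mc H_\varnothing$ carries $\widehat h_i$ to $h_i$. Therefore $\ker H_n=J_{[n]}\ker H_n^{\mathrm{qutrit}}$. This kernel is the common $+1$ eigenspace of the checks, i.e.\ the code space, of dimension $3^k$ by the standard qutrit stabilizer counting. It is nonzero, and all terms vanish on it, so $H_n$ is frustration free.

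The main point to get right is the intertwining identity $\widehat h_i J_{[n]}=J_{[n]}h_i$ together with the fact that $\widehat h_i$ annihilates illegal sites inside $R_i$. Both follow from $J^\dagger J=I_3$ and $J^\dagger\ket{11}=0$.
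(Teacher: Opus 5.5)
Your proof is correct and follows essentially the same route as the paper: isometric conjugation by $J$ for the projector and norm properties, commutation with the penalties via $J^\dagger\ket{11}=0$, and the sector decomposition for (iv). Item (ii) then comes from simultaneous diagonalization and incidence counting, and item (iii) from positivity forcing the kernel into the fully legal sector, where $H_nJ_{[n]}=J_{[n]}H_n^{\mathrm{qutrit}}$. The only minor difference is how commutation of the embedded checks is shown: you verify it globally through the product identity over $R_i\cup R_j$, whereas the paper deduces it sector by sector, noting that on $\mc H_F$ the surviving checks are the qutrit checks conjugated by $J_{[n]\setminus F}$.
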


\comments{
	\begin{proof}
		Since $J^\dagger J=I_3$ and $J^\dagger\ket{11}=0$, isometric conjugation preserves the projection property and norm of each nonzero $h_i$, and $\widehat h_i$ commutes with every $\Pi_v$. The penalties $I-\Pi_v$ are also norm-one projectors and commute with every term. Thus every term preserves the sectors in Eq.~\eqref{equation:ham-sector-space}. On $\mc{H}_F$, the penalties sum to $|F|I$, and precisely the checks meeting $F$ vanish. The surviving checks are the original commuting qutrit checks on the legal sites, conjugated by $J_{[n]\setminus F}$, which proves both their commutation and Eq.~\eqref{equation:ham-sector}. The support and incidence bounds give locality $\ell$ and degree at most $\Delta+1$.
		
		Simultaneous diagonalization gives the nonnegative integer spectrum. There are at most $n\Delta$ checks because each uses at least one site, so the $n$ penalties give between $n$ and $(\Delta+1)n$ terms and the stated operator bound. Finally, positivity forces every ground state to satisfy every penalty and therefore to lie in the fully legal sector. On this sector, $H_nJ_{[n]}=J_{[n]}H_n^{\mathrm{qutrit}}$, so its kernel is exactly the embedded ternary code space. This space is nonzero and has dimension $3^k$, proving frustration freedom and the ground space claim.
	\end{proof}
}

\begin{proof}
	Each $\widehat h_i$ is a norm-one projector commuting with every $\Pi_v$. On $\mc H_F$, the penalties contribute $|F|I$, checks meeting $F$ vanish, and the remaining checks are the original commuting qutrit checks conjugated by $J_{[n]\setminus F}$. This proves commutation and Eq.~\eqref{equation:ham-sector}.
	
	The support and incidence bounds give locality $\ell$ and degree at most $\Delta+1$. Simultaneous diagonalization gives the nonnegative integer spectrum, and the at most $n\Delta$ checks together with the $n$ penalties give the operator bound. Positivity forces every ground state into the fully legal sector, where $H_nJ_{[n]}=J_{[n]}H_n^{\mathrm{qutrit}}$; hence the ground space is the embedded ternary code and has dimension $3^k$.
\end{proof}

The energy barrier $B_n$ controls logical operations assembled from local errors, including paths whose total support exceeds the code distance.

A family is qLDPC with respect to the chosen check lists when both check weight and the number of checks incident on a qutrit are uniformly bounded.

\begin{theorem}[Magic complexity from a linear energy barrier]\label{theorem:main}
	Consider a family of $[\![n,k]\!]$ qutrit stabilizer codes with unbounded block length $n$ and $k\ge1$ logical qutrits. For each code, fix a (possibly redundant) list of nonidentity ternary Pauli checks generating its stabilizer group, and let $H_n$ be the corresponding qubit Hamiltonian in Eq.~\eqref{equation:main-hamiltonian}. Assume the family is qLDPC with respect to these lists and has energy barrier $B_n\ge\beta_0n$ for some $\beta_0>0$ and all sufficiently large $n$. Then there are constants $\eta,c>0$ such that for all sufficiently large $n$ and every $2n$-qubit state $\rho$,
	\begin{equation}
		\frac{\Tr(H_n\rho)}{2n}\le\eta \quad\Longrightarrow\quad \Cmagicd{2}(\rho)>\frac15\log  n+\log  c.
	\end{equation}
\end{theorem}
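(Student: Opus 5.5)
The plan is to run the argument of Section~\ref{section:intrinsic-magic-lower-bound} with three substitutions. The prime $p_n$ becomes the factor $3^k$. Operator soundness becomes a spectral band cut supplied by the energy barrier. The subsystem decomposition of Lemma~\ref{lemma:correctable-space-protection} is replaced by one built from low-energy Pauli paths rather than from code distance.

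\emph{Reduction to a qubit stabilizer code.} Suppose $\omega=\Tr_{A_{\mathrm{out}}}(V\ketbra{S_0}{S_0}V^\dagger)$ has depth $D$ and energy $E\le 2\eta n$, where $\ket{S_0}$ is a qubit stabilizer state ($d_{\max}=2$). Applying Lemma~\ref{lemma:stabcompletion} to $V^\dagger(H_n\otimes I)V$, with backward light cones as regions, gives a projector $P$. Set $\widehat H=\sum_j V(I-\Pi_j)V^\dagger$. Then:
\begin{itemize}
\item $\operatorname{rank}P=\dim\ker\widehat H$ is a power of two;
\item every normalized state in $\im VPV^\dagger$ has energy $E_*\le E$;
\item $s\le 2n2^D$;
\item each term of $\widehat H$ touches at most $\ell2^{2D}$ output qubits.
\end{itemize}
Since $3\nmid 2^j$, it suffices to produce a projector $Q$ and an operator $F$ meeting the hypotheses of Lemma~\ref{lemma:spectral-rank-obstruction} with $K=3^k$. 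The choices are $F=f(\widehat H)$ from Fact~\ref{fact:stabilizer-filter}, with $\deg f\le c_a\sqrt{s}$, and $Q=\Pi_{<b}\otimes I_A$, where $\Pi_{<b}$ is the spectral projector of $H_n$ onto energies below $b\coloneqq B_n/4$.

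\emph{Overlap.} Because $H_n\ge b(I-\Pi_{<b})$, we have
\[
VPV^\dagger(I-Q)VPV^\dagger\le \frac{E_*}{b}\,VPV^\dagger\le\frac{8\eta}{\beta_0}\,VPV^\dagger .
\]
Choosing $\eta<(1-2a)\beta_0/8$ gives the overlap condition. No soundness inequality is needed, since $H_n$ is a commuting projector Hamiltonian and the band projector is a function of $H_n$.

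\emph{Tensor structure on the band (main step).}
\begin{enumerate}[label=(\alph*)]
\item By part (iv) of the Proposition, $\im\Pi_{<b}$ splits into illegal-site sectors $\mc H_F$ with $|F|<b$. Within each sector, it splits further into syndrome sectors of the surviving checks, each of total energy below $b$.
\item Model an illegal site as an erased qutrit. A Pauli $P$ on $[n]\setminus F$ together with the set $F$ then labels a configuration. Each legal syndrome sector of a fixed $F$ is a copy $J\,P_\sigma(\ker H^{\mathrm{qutrit}}_n)$, restricted appropriately, of dimension $3^k$ times the multiplicity from $F$.
\item Form the graph whose vertices are these low-energy configurations and whose edges are single-site moves, where flipping a site to or from illegal also counts as a single-site move.
\item In each connected component, fix a base configuration and transport representatives along paths. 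Two paths to the same configuration differ by a Pauli with trivial syndrome that is reachable by a closed single-site path with maximal energy below $2b<B_n$. By Definition~\ref{definition:energy-barrier}, this Pauli is a stabilizer up to phase. The transport is therefore consistent.
\item This yields a unitary $W:\ker H^{\mathrm{qutrit}}_n\otimes\mc G\to\im\Pi_{<b}$. The same loop argument shows that every operator $O$ supported on $r$ qubits with $(\Delta+1)r<b$ compresses to $I_{\mc C}\otimes B_O$ on the band, exactly as in the proof of Lemma~\ref{lemma:correctable-space-protection}. Each Pauli term of $O$ moves configurations by at most $r$ site steps, so every intermediate stays below $b+(\Delta+1)r<2b$.
\end{enumerate}
Applying this to the Chebyshev expansion of $F$ requires $c_a\ell2^{2D}\sqrt{2n2^D}\,(\Delta+1)<b=\beta_0n/4$. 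This holds whenever $D\le\frac15\log n+\log c$ for a suitable constant $c$. Lemma~\ref{lemma:spectral-rank-obstruction} then forces $3^k\mid\operatorname{rank}P$, a contradiction. Mixtures follow by linearity of the energy, as before.

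The hard part is step (d)–(e) with illegal sites. One must check that the erasure model makes the loop argument go through: a closed path through illegal configurations must correspond to a genuine single-site qutrit Pauli path for Definition~\ref{definition:energy-barrier}. The fallback is to replace each erased site by an arbitrary qutrit Pauli, charging at most $\Delta$ extra syndrome per illegal site. This only rescales $b$ by a factor depending on $\Delta$.
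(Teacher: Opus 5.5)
Your reduction matches the paper's Proposition~\ref{proposition:sp-energy-bound} and the final depth arithmetic: stabilizer completion with power-of-two rank, the Chebyshev filter $f(\widehat H)$, the band overlap bound $E_*/b$. Your legal-sector loop argument (syndromes as vertices, loops translated by the base syndrome, edges split into single-site factors) is the paper's Lemma~\ref{lemma:band-legal-factor} with Claim~\ref{claim:band-holonomy}. The gap is where you suspected, and it is more than bookkeeping: steps (b)--(e) never define an isometry onto the illegal sectors $\mathcal{H}_F\cap\im\Pi_{<b}$.

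Making a site illegal is the rank-one map $\ketbra{11}{00}_v$ on a legal pair, not a Pauli, and it does not send orthogonal legal sectors to orthogonal pieces. Since $\bra{0}Z=\bra{0}$, the sectors $J\mathcal{C}_s$ and $JZ_v\mathcal{C}_s$ have identical images in $\mathcal{H}_{\{v\}}$, although they are orthogonal whenever $\sigma(Z_v)\neq0$. So your configurations with $F\neq\varnothing$ do not label orthogonal copies of $\mathbb{C}^{3^k}$ decomposing the band. Also, a loop through illegal moves is not a single-site Pauli path, so Definition~\ref{definition:energy-barrier} cannot be applied to it. The dimension count in (b) would not suffice anyway. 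Lemma~\ref{lemma:spectral-rank-obstruction} needs one identification under which every term of $f(\widehat H)$ compresses to $I\otimes B$, including its blocks between different illegal sets.

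The missing idea is the paper's repair map. Let $R_F$ swap $\ket{11}$ and $\ket{00}$ on $F$. It commutes with the checks disjoint from $F$, and at most $|F|\Delta$ checks meet $F$. So $R_F$ sends $\mathcal{H}_F\cap\im Q_e$ into the legal band of energy at most $e-|F|+|F|\Delta\le e\Delta$ (Claim~\ref{claim:band-repair}). Inside that band the image is cut out by two conditions, both compatible with the single legal isometry $W$:
\begin{itemize}
\item the local projector $P_F^0$ fixing $F$ to $\ket{00}$ compresses to $I\otimes p_F$;
\item the surviving energy $H_n^{\overline F}$ is diagonal in the syndrome label.
\end{itemize}
Hence the image equals $W(\mathbb{C}^{3^k}\otimes\mathcal{G}_F)$ (Claim~\ref{claim:band-sector-factor}). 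Cross-sector blocks are controlled because $R_F^\dagger OR_{F'}$ touches at most $|F|+u+|F'|\le2e+u$ sites. Applying the legal lemma with $t=e\Delta$ and $q=2e+u$ then requires $B_n>(3e+\lfloor u/2\rfloor)\Delta$. This is why the band cutoff must be $e=\lfloor\eta_0 n\rfloor$ with $\eta_0\le\beta_0/(8(3+\ell/2)\Delta)$ rather than $B_n/4$, and why the energy threshold $\eta$ shrinks accordingly.
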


The conclusion allows arbitrary finite stabilizer ancillas, discarded outputs, and classical randomization, and extends to gates acting on any fixed number of qubits with a constant change in the depth bound.

\subsection{Tanner codes with a linear energy barrier}\label{appendix:input}

We first derive a linear energy barrier from small-set expansion, then specify the Tanner family. Consider the check matrices $H_X\in\mbb{F}_3^{m_X\times n}$ and $H_Z\in\mbb{F}_3^{m_Z\times n}$ with
\begin{equation}\label{equation:input-css}
	H_XH_Z^T=0,\quad \mathsf S_X=\operatorname{row}H_X,\quad \mathsf S_Z=\operatorname{row}H_Z,\quad k=n-\dim\mathsf S_X-\dim\mathsf S_Z\ge1.
\end{equation}
Let $X$ and $Z$ be the qutrit Pauli operators, and set $\omega=e^{2\pi i/3}$. For $a,b\in\mbb{F}_3^n$, define
\begin{equation}
	X(a)=\bigotimes_v X^{a_v},\quad Z(b)=\bigotimes_v Z^{b_v},
\end{equation}
with global phases omitted. The rows of $H_X$ and $H_Z$ define $m=m_X+m_Z$ stabilizer generators $S_1,\cdots,S_m$ and generate
\begin{equation}
	\mathsf S=\bigl\langle X(x),Z(z):x\in\mathsf S_X,\ z\in\mathsf S_Z\bigr\rangle.
\end{equation}
\comments{
	Eq.~\eqref{equation:input-css} makes these checks commute, and $h_i$ in Eq.~\eqref{equation:bar-check-projector} projects onto the eigenspaces on which the $i$th check is violated. The syndrome $\sigma(P)\in\mbb{F}_3^m$ of a Pauli $P$ is specified by
	\begin{equation}
		S_iP=\omega^{\sigma(P)_i}PS_i.
	\end{equation}
	Thus $P$ applied to a code state has energy $|\sigma(P)|$. For $P=X(a)Z(b)$,
	\begin{equation}\label{equation:bar-pauli-energy}
		|\sigma(P)|=|H_Za|+|H_Xb|.
	\end{equation}
}

Eq.~\eqref{equation:input-css} makes these checks commute. For $P=X(a)Z(b)$, Definition~\ref{definition:energy-barrier} gives the energy of $P$ applied to a code state as
\begin{equation}\label{equation:bar-pauli-energy}
	|\sigma(P)|=|H_Za|+|H_Xb|.
\end{equation}
For $y\in\mbb{F}_3^n$ and $\mathsf A\subset\mbb{F}_3^n$, define
\begin{equation}
	\operatorname{dist}(y,\mathsf A)=\min_{z\in\mathsf A}|y-z|,\quad r_X(a)=\operatorname{dist}(a,\mathsf S_X),\quad r_Z(b)=\operatorname{dist}(b,\mathsf S_Z).
\end{equation}

\begin{definition}[Small-set expansion]
	For $0<\alpha\le1$ and $\beta>0$, the check matrices $H_X$ and $H_Z$ have $(\alpha,\beta)$-small-set expansion if every $y\in\mbb{F}_3^n$ with $|y|\le\alpha n$ satisfies
	\begin{subequations}\label{equation:input-expansion}
		\begin{align}
			\frac{|H_Zy|}{m_Z}&\ge\beta\frac{\operatorname{dist}(y,\mathsf S_X)}{n},\\
			\frac{|H_Xy|}{m_X}&\ge\beta\frac{\operatorname{dist}(y,\mathsf S_Z)}{n}.
		\end{align}
	\end{subequations}
\end{definition}

If $r_X(a)\le\alpha n$, choose $x\in\mathsf S_X$ such that $|a-x|=r_X(a)$. Since $H_Zx=0$, applying small-set expansion to $a-x$ gives
\begin{subequations}\label{equation:bar-reduced-expansion}
	\begin{align}
		r_X(a)\le\alpha n&\ \Longrightarrow\ |H_Za|\ge\beta\frac{m_Z}{n}r_X(a),\\
		r_Z(b)\le\alpha n&\ \Longrightarrow\ |H_Xb|\ge\beta\frac{m_X}{n}r_Z(b).
	\end{align}
\end{subequations}

\begin{proposition}[Linear energy barrier from small-set expansion]
	\label{proposition:bar-linear}
	Suppose the family has $(\alpha,\beta)$-small-set expansion and $\min\{m_X,m_Z\}\ge c_m n$ for constants $\alpha,\beta,c_m>0$ independent of $n$. Then, for every sufficiently large code in the family,
	\begin{equation}\label{equation:bar-lower-bound}
		B_n\ge\beta\frac{\min\{m_X,m_Z\}}{n}\left\lfloor\frac{\alpha n}{2}\right\rfloor\ge\frac{\alpha\beta c_m}{4}n.
	\end{equation}
\end{proposition}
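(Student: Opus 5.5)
The plan is to prove Proposition~\ref{proposition:bar-linear} with a standard path argument. Take any single-site Pauli path $P_0=I,P_1,\cdots,P_T$ ending in a nontrivial logical Pauli $P_T$. We locate an intermediate step whose Pauli has reduced weight about $\alpha n/2$. The small-set expansion inequalities \eqref{equation:bar-reduced-expansion} then force its syndrome weight to be linear.

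Write $P_j=X(a_j)Z(b_j)$ up to phase and set $f(j)\coloneqq\max\{r_X(a_j),r_Z(b_j)\}$. Since $P_jP_{j-1}^\dagger$ is supported on a single site, $a_j-a_{j-1}$ and $b_j-b_{j-1}$ each have weight at most one. Distances to the fixed subspaces are $1$-Lipschitz in Hamming distance, so $|f(j)-f(j-1)|\le1$, and $f(0)=0$.

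Next I need a lower bound on $f(T)$. Because $P_T$ is a nontrivial logical, $a_T\in\ker H_Z\setminus\mathsf S_X$ or $b_T\in\ker H_X\setminus\mathsf S_Z$. Suppose, say, $r_X(a_T)\le\alpha n$. Applying \eqref{equation:bar-reduced-expansion} with $H_Za_T=0$ gives $r_X(a_T)=0$, so $a_T\in\mathsf S_X$. The same holds for $b_T$. Hence a nontrivial logical must have $f(T)>\alpha n$. Set $t\coloneqq\lfloor\alpha n/2\rfloor$, which is at least $1$ for large $n$. By the discrete intermediate value property there is an index $j$ with $f(j)=t$. At that step one of $r_X(a_j),r_Z(b_j)$ equals $t\le\alpha n$, say $r_X(a_j)=t$. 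Then \eqref{equation:bar-reduced-expansion} gives $|H_Za_j|\ge\beta(m_Z/n)t$. Combined with \eqref{equation:bar-pauli-energy}, this yields $|\sigma(P_j)|\ge\beta\frac{\min\{m_X,m_Z\}}{n}\lfloor\alpha n/2\rfloor$.

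Minimizing over paths gives the first inequality in \eqref{equation:bar-lower-bound}. For the second, use $\min\{m_X,m_Z\}\ge c_mn$ and $\lfloor\alpha n/2\rfloor\ge\alpha n/4$ for large $n$.

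The main subtlety is checking that nontriviality of $P_T$ really forces $f(T)>\alpha n$. Here "logical" means $\sigma(P_T)=0$, so that $a_T\in\ker H_Z$ and $b_T\in\ker H_X$. "Nontrivial" means $P_T$ is not in $\mathsf S$ up to phase, i.e.\ $a_T\notin\mathsf S_X$ or $b_T\notin\mathsf S_Z$. With these two facts in hand, the expansion bound rules out small reduced weight, which is what the argument needs. Everything else is routine bookkeeping.
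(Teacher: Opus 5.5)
Your proof is correct and is essentially the paper's argument. The paper case-splits on whether $a_T\notin\mathsf S_X$ or $b_T\notin\mathsf S_Z$ and follows only that component's reduced weight down to $\lfloor\alpha n/2\rfloor$, whereas you track $\max\{r_X(a_j),r_Z(b_j)\}$; this is an equivalent repackaging of the same discrete intermediate-value step, concluded identically via Eqs.~\eqref{equation:bar-reduced-expansion} and \eqref{equation:bar-pauli-energy}.
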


\begin{proof}
	Consider a single-site Pauli path $P_0=I,P_1,\cdots,P_T$ ending at a nontrivial logical Pauli, and write $P_j=X(a_j)Z(b_j)$ up to phase. If $a_T\notin\mathsf S_X$, then $H_Za_T=0$ and Eq.~\eqref{equation:bar-reduced-expansion} imply $r_X(a_T)>\alpha n$. Since $r_X(a_0)=0$ and each step changes $r_X(a_j)$ by at most one, the path contains an index $j$ with $r_X(a_j)=L\coloneqq\lfloor\alpha n/2\rfloor$. At that index, Eqs.~\eqref{equation:bar-pauli-energy} and \eqref{equation:bar-reduced-expansion} give $|\sigma(P_j)|\ge\beta(m_Z/n)L$. Otherwise $b_T\notin\mathsf S_Z$, and the same argument gives $|\sigma(P_j)|\ge\beta(m_X/n)L$ at some index. Thus every path reaches syndrome weight at least $\beta\min\{m_X,m_Z\}L/n$, and $L\ge\alpha n/4$ for $n\ge4/\alpha$ proves Eq.~\eqref{equation:bar-lower-bound}.
\end{proof}

The analogous energy barrier for binary Tanner codes was proved in~\cite{williamson2024layercodes}.

\comments{
	The quantum Tanner construction over finite fields in~\cite[Sec.~3]{golowich2023nltshamiltoniansstronglyexplicitsos} provides the ternary family required by Proposition~\ref{proposition:bar-linear}.
}

The formation of quantum Tanner code~\cite{leverrier2022quantumtannercodes} in the general finite fields~\cite{golowich2023nltshamiltoniansstronglyexplicitsos} provides the ternary family required by Proposition~\ref{proposition:bar-linear}.

\comments{
	For the field $\mbb{F}_3$ and inner code rates in $(1/8,7/8)$, Corollary~11 of~\cite{golowich2023nltshamiltoniansstronglyexplicitsos} gives a product expansion constant $\rho_{\mathrm{in}}>0$ independent of the inner length. Choose a sufficiently large power of two $Q=2^h$, with $h\ge4$, and set $r=Q+1$ and $a=\lfloor r/4\rfloor$, so that $r$ meets the degree threshold in Theorem~16 of that reference for $\rho_{\mathrm{in}}$ and Corollary~11 ensures simultaneous product expansion for dimensions $a$ and $r-a$. Fix codes $C_A,C_B\subset\mbb{F}_3^r$ with $\dim C_A=a$ and $\dim C_B=r-a$ such that both $(C_A,C_B)$ and $(C_A^\perp,C_B^\perp)$ are $\rho_{\mathrm{in}}$-product expanding.
}

Fix a sufficiently large power of two $Q=2^h$, and set $r=Q+1$ and $a=\lfloor r/4\rfloor$. Ref.~\cite{golowich2023nltshamiltoniansstronglyexplicitsos} gives codes $C_A,C_B\subset\mbb F_3^r$ of dimensions $a,r-a$ such that both $(C_A,C_B)$ and $(C_A^\perp,C_B^\perp)$ are $\rho_{\mathrm{in}}$-product expanding, for a constant $\rho_{\mathrm{in}}>0$ independent of $r$. Choose $r$ large enough for the small-set expansion bound in that reference, and keep $Q,C_A,C_B$ fixed throughout the family.

\comments{
	For each $j\ge1$, take an inverse-closed generating set $S_j=S_j^{-1}$ for the $r$-regular Ramanujan Cayley graph on $G_j$ supplied by Theorem~12 of~\cite{golowich2023nltshamiltoniansstronglyexplicitsos}, and set $A_j=B_j=S_j$ in the Tanner construction. The resulting code has
}

For each $j\ge1$, take an inverse-closed generating set $S_j=S_j^{-1}$ for the $r$-regular Ramanujan Cayley graph on $G_j$ supplied by Ref.~\cite{golowich2023nltshamiltoniansstronglyexplicitsos}, and set $A_j=B_j=S_j$ in the Tanner construction. The resulting code has
\begin{equation}\label{equation:input-sizes}
	|G_j|=Q^{2j}(Q^{4j}-1),\quad n_j=|G_j|r^2=r^2Q^{2j}(Q^{4j}-1).
\end{equation}
The local check spaces for $X$ and $Z$ are $C_A\otimes C_B$ and $C_A^\perp\otimes C_B^\perp$, respectively, both of dimension $a(r-a)$. Choosing a basis in each local check space gives
\begin{equation}\label{equation:input-counts}
	m_X=m_Z=2|G_j|a(r-a)=c_m n_j,\quad c_m=\frac{2a(r-a)}{r^2}>0.
\end{equation}
The logical dimension satisfies
\begin{equation}
	k_j\ge n_j-m_X-m_Z=\left(1-\frac{2a}{r}\right)^2n_j\ge\frac{n_j}{4}.
\end{equation}
Moreover, every check has weight at most $r^2$, and every site belongs to at most $4a(r-a)$ checks. Thus $w$ and $\Delta$ are independent of $j$. Since each local check space has constant dimension, the choice of basis changes syndrome weight by at most a constant factor.

\comments{
	Theorem~16 of~\cite{golowich2023nltshamiltoniansstronglyexplicitsos} implies that these codes have $(\alpha,\beta)$-small-set expansion for constants $\alpha,\beta>0$ independent of $j$. In the notation of that reference, $C_X=\ker H_X$ and $C_Z=\ker H_Z$, so $C_X^\perp=\mathsf S_X$ and $C_Z^\perp=\mathsf S_Z$, identifying its expansion bounds with the pair of inequalities in~\eqref{equation:input-expansion}. Together with~\eqref{equation:input-counts}, Proposition~\ref{proposition:bar-linear} gives $B_{n_j}=\Omega(n_j)$, verifying the hypotheses of Theorem~\ref{theorem:main} for this Tanner family.
}

Ref.~\cite{golowich2023nltshamiltoniansstronglyexplicitsos} implies that these codes have $(\alpha,\beta)$-small-set expansion for constants $\alpha,\beta>0$ independent of $j$. In the notation of that reference, $C_X=\ker H_X$ and $C_Z=\ker H_Z$, so $C_X^\perp=\mathsf S_X$ and $C_Z^\perp=\mathsf S_Z$, identifying its expansion bounds with the pair of inequalities in~\eqref{equation:input-expansion}. Together with~\eqref{equation:input-counts}, Proposition~\ref{proposition:bar-linear} gives $B_{n_j}=\Omega(n_j)$, verifying the hypotheses of Theorem~\ref{theorem:main} for this Tanner family.

\begin{corollary}[NLTM from Tanner codes]\label{corollary:tanner-obstruction}
	The ternary quantum Tanner codes~\cite{golowich2023nltshamiltoniansstronglyexplicitsos}, specified above, satisfy the hypotheses of Theorem~\ref{theorem:main}. Consequently, there exist a fixed constant $\varepsilon>0$ and a family of Hamiltonians $\{H_N\}$ of bounded degree and locality, defined for all sufficiently large $N$, such that each $H_N$ acts on $N$ qubits, consists of $\Theta(N)$ commuting projectors of norm one, and has zero ground energy. Every $N$-qubit state $\rho_N$ satisfying $\Tr(H_N\rho_N)\le\varepsilon N$ has binary magic circuit complexity $\Omega(\ln N)$.
\end{corollary}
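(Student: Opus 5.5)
The plan is to verify the hypotheses of Theorem~\ref{theorem:main} for the Tanner family and then convert its conclusion, stated in terms of $n$ qutrits and energy density per $2n$ qubits, into the normalization of the corollary with $N=2n$. The family is indexed by $j$ with $n_j=r^2Q^{2j}(Q^{4j}-1)$. These are the codes whose check matrices over $\mbb F_3$ come from the quantum Tanner construction with $A_j=B_j=S_j$ and fixed local codes $C_A,C_B$. First, the qLDPC condition holds with respect to the chosen (possibly redundant) check lists. Each check has weight at most $r^2$, and each site lies in at most $4a(r-a)$ checks. Both bounds are independent of $j$.

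Second, $k_j\ge n_j/4\ge1$, so a nontrivial logical Pauli exists. Third, the small-set expansion from Ref.~\cite{golowich2023nltshamiltoniansstronglyexplicitsos} supplies constants $\alpha,\beta>0$ independent of $j$. Together with $m_X=m_Z=c_m n_j$, Proposition~\ref{proposition:bar-linear} then gives $B_{n_j}\ge(\alpha\beta c_m/4)n_j$.

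With these hypotheses in hand, Theorem~\ref{theorem:main} yields constants $\eta,c>0$ with the following property: for large $j$, every $2n_j$-qubit state $\rho$ with $\Tr(H_{n_j}\rho)\le 2\eta n_j$ has $\Cmagicd{2}(\rho)>\frac15\log n_j+\log c$. Set $N=2n_j$ and $\varepsilon=\eta$. Then the energy condition $\Tr(H_N\rho)\le\varepsilon N$ is exactly the hypothesis, and the bound $\frac15\log(N/2)+\log c=\Omega(\ln N)$ follows.

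The remaining properties come from the embedded-Hamiltonian proposition. It gives that the terms are commuting norm-one projectors supported on at most $2r^2$ qubits, each qubit lies in at most $4a(r-a)+1$ terms, and the ground energy is zero with ground space of dimension $3^{k_j}$. The number of terms is between $n_j$ and $(\Delta+1)n_j$, which is $\Theta(N)$.

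The main obstacle is the phrase ``defined for all sufficiently large $N$''. The natural family only exists at the sparse sizes $N=2n_j$, so to get every large $N$ I would pad. Given $N$, take the largest $j$ with $2n_j\le N$, and add $N-2n_j$ extra qubits, each carrying a single-qubit projector $\ketbra{1}{1}$. Since $n_{j+1}/n_j$ is bounded by a constant $K$ depending only on $Q$, we get $2n_j\ge N/K$.

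A state of energy at most $\varepsilon N$ on the padded system has reduced energy on the code block at most $\varepsilon N\le\varepsilon K\cdot 2n_j$. Shrinking $\varepsilon$ to $\eta/K$ makes this at most $2\eta n_j$, within the window of Theorem~\ref{theorem:main}. Magic complexity does not increase under discarding qubits, since the padded qubits can simply be added to $A_j$. So the reduced state obeys the bound, and hence so does $\rho$, with $\log n_j\ge\log N-O(1)$. The padded Hamiltonian keeps bounded locality and degree, commuting norm-one projectors, zero ground energy, and $\Theta(N)$ terms, which completes the argument.
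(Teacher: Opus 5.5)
Your proposal is correct and follows the paper's route: you check bounded check weight and incidence, $k_j\ge n_j/4\ge1$, and a linear energy barrier via Proposition~\ref{proposition:bar-linear} from small-set expansion. You then apply Theorem~\ref{theorem:main} with $N=2n_j$ and $\varepsilon=\eta$, and read off locality, degree, commuting projectors and zero ground energy from the embedded-Hamiltonian proposition, exactly as the paper does.

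The paper does not pad; it implicitly lets $N$ range over the sparse sizes $2n_j$. Your padding step is an extra that makes ``all sufficiently large $N$'' literally true, and it is sound: $n_{j+1}/n_j\le Q^2(Q^4+1)$, the pad terms are nonnegative, and $\Cmagicd{2}$ cannot increase under discarding qubits.
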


\subsection{Low-energy logical protection and NLTM}\label{appendix:band}

A linear energy barrier protects a logical factor throughout the full low-energy sector, including syndrome components disconnected from the ground syndrome and sectors with illegal qubit pairs. The proof uses the chosen ternary stabilizer check list, its incidence bound $\Delta$, and the qubit Hamiltonian with commuting projector terms. A \emph{site} means one qutrit or its corresponding qubit pair, so an operator on at most $u$ qubits touches at most $u$ sites. All statements concern compression, without requiring local operators to preserve the compressed subspace.

Error-corrected and dressed logical observables characterize protection in quantum memories~\cite{chesi2009thermodynamicstabilitycriteriaquantum,bombin2012selfcorrectingquantumcomputers}, and quantum encodings in excited-state regions of qLDPC Hamiltonians have been studied under linear confinement~\cite{placke2024topologicalquantumspinglass}. Energy barriers also enter constructions of decoders and partial self-correction~\cite{gu2026layercodespartiallyselfcorrecting}.

\begin{theorem}[Logical protection in the low-energy sector]\label{theorem:band-full}
	Fix an $[\![n,k]\!]$ qutrit stabilizer code with $k\ge1$ and a possibly redundant Pauli check list generating its stabilizer group, with each qutrit incident on at most $\Delta\in\mbb{Z}_{>0}$ checks. Let $B_n$ denote the energy barrier of this check list. Let $H_n$ be the corresponding qubit Hamiltonian in Eq.~\eqref{equation:main-hamiltonian}. Let $e,u$ be nonnegative integers satisfying
	\begin{equation}
		B_n>\bigl(3e+\lfloor u/2\rfloor\bigr)\Delta,
	\end{equation}
	and set $Q_e\coloneqq\mathbf1_{[0,e]}(H_n)$. Then there exist a nonzero finite-dimensional Hilbert space $\mc{G}_e$ and an isometry $W_e:\mbb{C}^{3^k}\otimes\mc{G}_e\longrightarrow(\mbb{C}^2)^{\otimes 2n}$ with image $\im Q_e$, such that for every operator $O$ supported on at most $u$ qubits,
	\begin{equation}\label{equation:band-full-compression}
		W_e^\dagger O W_e=I_{3^k}\otimes A_{e,O}
	\end{equation}
	for some operator $A_{e,O}$ on $\mc{G}_e$.
\end{theorem}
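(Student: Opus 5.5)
The plan is to diagonalize everything in a joint eigenbasis labelled by sector and syndrome, split the low-energy sector into syndrome classes, and show that each class carries a copy of the logical space on which every $u$-qubit operator acts as the identity. Since all terms of $H_n$ commute, $\im Q_e$ decomposes as a direct sum over pairs $(F,\sigma)$. Here $F$ is the set of illegal sites with $|F|\le e$. The label $\sigma$ is a syndrome of the surviving checks (those with $R_i\cap F=\varnothing$), with $|F|+\#\{\text{violated surviving checks}\}\le e$.

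\textbf{Step 1: structure of the sectors.} In a sector with legal set $[n]\setminus F$, the qutrit stabilizer restricted to the surviving checks defines a code on the legal sites. I then push it back to the full qutrit system by noting $\mc{H}_F\cong J(\text{qutrits on }[n]\setminus F)$. I would show that each syndrome subspace is of the form $P\,\Pi_\sigma^{\mathrm{surv}}$ applied to a space containing the code tensored with extra freedom. Concretely, I would choose for every reachable syndrome $\sigma$ a representative Pauli $P_\sigma$ of minimal path height. Then $P_\sigma\ket{\psi}$, for $\ket{\psi}$ in the original code, spans a copy of $\mbb C^{3^k}$.

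\textbf{Step 2: defining the logical factor.} Group the syndrome classes into connected components under single-site moves that stay within energy $\le 3e+\lfloor u/2\rfloor$ (in units of violated checks, with illegal sites accounting for at most $\Delta$ checks each). Within a component, single-site Paulis relate the subspaces. The energy barrier $B_n$ exceeding the maximum height guarantees that any loop returning to the same syndrome implements a stabilizer, not a logical. So I can transport a fixed logical basis consistently along paths and define $W_e$ by $W_e(\ket{j}\otimes\ket{g})=$ the transported basis vector. Components not connected to the ground syndrome need care. For those, I would pick any base point and check whether two Paulis with equal syndrome differ by a logical reachable at low height. The barrier assumption should imply this difference is a stabilizer times a low-path operator, so the logical labelling is canonical up to a global unitary per component, which is absorbed into $\mc{G}_e$.

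\textbf{Step 3: compression of local operators.} Expand $O$ in Paulis on at most $\lfloor u/2\rfloor$ sites plus illegal-projector terms. Matrix elements between transported states $\ket{j,g}$ and $\ket{j',g'}$ involve a Pauli path from $g$ to $g'$ through $O$. The total path height is bounded by $3e\Delta$, from the energies of the endpoints and the connecting detour, plus $\lfloor u/2\rfloor\Delta$ from applying $O$ site by site. The factor $3$ accounts for going down from $g$, across, and up to $g'$. Since this height is below $B_n$, the composed Pauli is not a nontrivial logical. Hence it acts as $\delta_{jj'}$ times a scalar, which gives the identity $I_{3^k}\otimes A_{e,O}$.

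\textbf{Main obstacle.} I expect the hardest part to be the illegal-site sectors and disconnected syndrome components. There, single-site paths starting from the code space are not directly available, and I must show that the logical identification is path independent using only the barrier bound with the precise constant $3e+\lfloor u/2\rfloor$. I would first try to map each illegal sector to a legal one by replacing $\ket{11}$ with a qutrit value, at a cost of at most $\Delta$ checks per site, and then run the path argument on the qutrit code.
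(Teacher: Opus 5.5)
Your overall picture (transport a fixed logical basis along Pauli paths, use the barrier to show that loop holonomies are stabilizers, and repair illegal sites by replacing $\ket{11}$ with a legal value at a cost of $\Delta$ checks per site) is the right one. However, there is a genuine gap exactly where you flag it, and the bookkeeping you describe cannot produce the constant $3e+\lfloor u/2\rfloor$.

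\textbf{Loops not based at the code space.} The barrier only constrains single-site paths that start at $I$. Your Step 2 claim that a barrier exceeding the maximum height makes every loop a stabilizer is unjustified for a loop based at a syndrome $s_0\neq 0$. For components not connected to the ground syndrome, you only say the barrier ``should imply'' consistency. The missing idea is to re-base the loop. If a closed path visits $s_0,s_1,\dots,s_L=s_0$ with labels $P_1,\dots,P_L$, the prefix products $V_j=P_j\cdots P_1$ form a path starting at $I$ with $\sigma(V_j)=s_j-s_0$. Hence $|\sigma(V_j)|\le|s_j|+|s_0|$, and since $V_L$ has zero syndrome, the barrier applies directly. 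This treats all components uniformly: an arbitrary Pauli $E_C$, with no weight or height bound, serves only to identify $\mc{C}_0$ with each component's reference sector. The price is that it doubles the vertex height. So defining components by single-site moves that range up to the full threshold height, as in your Step 2, would need a barrier roughly twice the assumed one. The paper instead keeps vertices in $\Xi_t$ with $t=e\Delta$ and labels edges by Paulis on at most $q=2e+u$ sites. It bounds the single-site steps inside an edge by comparing with the nearer endpoint, giving height at most $2t+\lfloor q/2\rfloor\Delta=(3e+\lfloor u/2\rfloor)\Delta$. In particular, $\lfloor u/2\rfloor$ does not arise because $O$ lives on $\lfloor u/2\rfloor$ sites, as you assert in Step 3; a $u$-qubit operator can touch $u$ sites. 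It comes from this two-sided midpoint bound. Your ``down, across, up'' account of the $3e$ also presupposes a low-height path to the ground syndrome, which need not exist.

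\textbf{Illegal sectors and cross-sector blocks.} For $F\neq\varnothing$, a surviving syndrome $\sigma$ does not determine a full syndrome. The corresponding piece of $\mc{H}_F$ is not of the form $J_{[n]}P_\sigma\mc{C}_0$: it is not in the range of $J_{[n]}$, and its dimension is generally larger than $3^k$. So Step 1 does not exhibit it as $\mathbb{C}^{3^k}$ tensored with an auxiliary space under the same isometry used for the legal sectors. The paper first shows $R_F(\mc{H}_F\cap\im Q_e)\subseteq\mc{Q}^{\mathrm{legal}}_{e\Delta}$. It then cuts this image out of $W(\mc{L}\otimes\mc{G}^{\mathrm{qutrit}}_{e\Delta})$ by two conditions compatible with the tensor structure. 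The first is the $|F|$-local projector onto $\ket{00}$ on $F$, which compresses to $I_{\mc{L}}\otimes p_F$. The second is the surviving energy, which is diagonal in the syndrome label. Blocks of $O$ between different illegal sets $F,F'$ are then controlled because $R_F^\dagger O R_{F'}$ touches at most $|F|+u+|F'|\le 2e+u=q$ sites. After halving, this is where the third $e$ in the hypothesis comes from. Your ``expand $O$ in Paulis plus illegal-projector terms'' does not address these cross-sector blocks.
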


We first align the logical bases in all qutrit syndrome sectors. We then repair each sector indexed by a set of illegal sites into an enlarged legal sector and identify its image as the same logical factor tensored with an auxiliary subspace. The final step controls operators between sectors indexed by different sets of illegal sites.

Let $S_1,\cdots,S_m$ be the chosen check list, let $\mathsf S$ be its stabilizer group, and set $\omega=e^{2\pi i/3}$. For $s\in\mbb{F}_3^m$, define
\begin{equation}
	\mc{C}_s\coloneqq\{\ket{\psi}\in(\mbb{C}^3)^{\otimes n}:S_i\ket{\psi}=\omega^{s_i}\ket{\psi}\text{ for all }i=1,\cdots,m\}.
\end{equation}
Write $\Xi\coloneqq\{s:\mc{C}_s\ne\{0\}\}$ for the realizable syndromes and, for $t\ge0$, set
\begin{equation}
	\Xi_t\coloneqq\{s\in\Xi:|s|\le t\},\quad \mc{Q}_t^{\mathrm{qutrit}}\coloneqq\bigoplus_{s\in\Xi_t}\mc{C}_s,\quad \mc{G}_t^{\mathrm{qutrit}}\coloneqq\operatorname{span}\{\ket{s}:s\in\Xi_t\}.
\end{equation}
The syndrome weight counts every violated check in the chosen list, including redundant checks, so $\mc{Q}_t^{\mathrm{qutrit}}$ is the entire qutrit low-energy sector up to energy $t$. The vectors $\ket{s}$ form an abstract orthonormal basis.

\begin{lemma}[Subsystem code from energy barrier]\label{lemma:band-legal-factor}
	Let $t\ge0$ and let $q$ be a nonnegative integer satisfying
	\begin{equation}\label{equation:band-legal-barrier}
		B_n>2t+\lfloor q/2\rfloor\Delta.
	\end{equation}
	Then there is an isometry $W_{\mathrm{qutrit}}:\mbb{C}^{3^k}\otimes\mc{G}_t^{\mathrm{qutrit}}\longrightarrow\mc{Q}_t^{\mathrm{qutrit}}$ onto the full qutrit low-energy sector, with $W_{\mathrm{qutrit}}(\mbb{C}^{3^k}\otimes\ket{s})=\mc{C}_s$ for every $s\in\Xi_t$, such that every operator $T$ supported on at most $q$ qutrits satisfies
	\begin{equation}
		W_{\mathrm{qutrit}}^\dagger T W_{\mathrm{qutrit}}=I_{3^k}\otimes A_T
	\end{equation}
	for some operator $A_T$ on $\mc{G}_t^{\mathrm{qutrit}}$.
\end{lemma}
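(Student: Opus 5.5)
The plan is to mimic the proof of Lemma~\ref{lemma:correctable-space-protection}, with syndrome sectors playing the role of error spaces and the energy barrier playing the role of distance.

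\textbf{Step 1: fix logical frames.} For each realizable $s\in\Xi_t$ choose a Pauli $R_s$ with $R_s\mc{C}_0=\mc{C}_s$. Realizable syndromes are exactly those of Paulis, so such $R_s$ exists. Rather than minimal weight, choose $R_s$ as the endpoint of a single-site Pauli path from $I$ whose syndrome weight never exceeds $t$ (or a fixed slightly larger budget). Then set
\[
W_{\mathrm{qutrit}}(\ket{\psi}\otimes\ket{s})\coloneqq R_s\ket{\psi},
\]
after fixing an orthonormal basis of $\mc{C}_0\cong\mbb{C}^{3^k}$. Since $R_s$ is unitary, each sector $\mc{C}_s$ is mapped isometrically, distinct sectors are orthogonal, and the map is onto $\mc{Q}_t^{\mathrm{qutrit}}$. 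Hence $W_{\mathrm{qutrit}}$ is an isometry with $W_{\mathrm{qutrit}}(\mbb{C}^{3^k}\otimes\ket{s})=\mc{C}_s$.

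\textbf{Step 2: reduce to a logical-triviality statement.} Expand $T$ in Paulis $P$ supported on at most $q$ qutrits. The block of $W_{\mathrm{qutrit}}^\dagger PW_{\mathrm{qutrit}}$ between $\ket{s'}$ and $\ket{s}$ is the compression $P_{\mc{C}_0}R_{s'}^\dagger PR_s P_{\mc{C}_0}$. This vanishes unless $\sigma(R_{s'}^\dagger PR_s)=0$. In that case $L\coloneqq R_{s'}^\dagger PR_s$ is in the normalizer. It then suffices to show $L$ is a stabilizer element up to phase, i.e.\ not a nontrivial logical, for all such $s,s',P$. Each block is then a scalar times $I_{\mc{C}_0}$, giving the form $I_{3^k}\otimes A_T$.

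\textbf{Step 3 (main obstacle): bound the barrier of $L$.} Build a single-site path to $L$ and bound its syndrome throughout. The concatenation is chosen as follows.

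First traverse the path to $R_s$; every step has syndrome weight at most $t$.

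Next multiply in the sites of $P$ one at a time. Each qutrit of $P$ touches at most $\Delta$ checks, so the intermediate syndromes stay at most $t+q\Delta$. To reach the $\lfloor q/2\rfloor$ in the hypothesis, order the sites of $P$ so that at each stage we either add its first half or remove the complementary half from the endpoint $PR_s$, whose syndrome is $s'$. This gives the bound
\[
\max(|s|,|s'|)+\lfloor q/2\rfloor\Delta .
\]

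Finally traverse the reversed path of $R_{s'}$, applied as left multiplication by $R_{s'}^\dagger$. Syndromes add under products, up to sign in $\mbb{F}_3$. Along this leg the syndrome is $s'$ minus partial syndromes of weight at most $t$, so it is bounded by $2t$.

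The delicate bookkeeping is to combine these legs so the overall maximum is at most $2t+\lfloor q/2\rfloor\Delta$. I would try to choose the orientation, prepending $R_{s'}^\dagger$ partial paths and then $P$, so that the contributions overlap rather than add. Once the maximum is below $B_n$, the definition of the energy barrier forces $L$ to be trivial, and Step 2 completes the proof. If the precise constants do not come out, the backup is to tighten Step 1 by picking each $R_s$ to minimize the path barrier. This yields a consistent bound because the inverse of a path has the same barrier.
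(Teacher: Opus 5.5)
Your Step 2 reduction and your half-and-half treatment of the edge $P$ are sound. Step 1, however, has a genuine gap. Nothing in the hypotheses guarantees that each $s\in\Xi_t$ is the syndrome of a Pauli reachable from $I$ by a single-site path of syndrome weight at most $t$, or even below $B_n$, and this can fail. The lemma concerns the \emph{full} low-energy sector, and a low-weight syndrome can be separated from the ground sector by a large barrier.

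For example, adjoin to a code with $B_n=\Omega(n)$ a disjoint block of $n'$ qutrits, with checks $Z_uZ_v^{-1}$ on the edges of a bounded-degree expander and one pinning check $Z_1$. This block encodes nothing, so $k$ and $B_n$ are unchanged. The syndrome violating only $Z_1$ has weight one. But every Pauli with this syndrome has $X$-part exactly $(1,\dots,1)$ on the block. So every single-site path to it passes a stage where half the block has nonzero $X$-exponent, and $\Omega(n')$ edge checks are violated there. Since $B_n$ is at most the number of original checks, taking $n'$ a large multiple of $n$ makes this exceed $B_n$.

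Your fallback of minimizing the path barrier of $R_s$ does not help: the minimum itself can exceed $B_n$, and then the concatenated path to $L=R_{s'}^\dagger PR_s$ gives nothing. Note also that if such $R_s$ were available, your three legs would give the bound $t+\lfloor q/2\rfloor\Delta$. This suggests the $2t$ in the hypothesis has a different origin.

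The missing idea is to avoid routing through $\mc{C}_0$ with controlled energy at all. Form a graph on $\Xi_t$ with an edge $s\to s+\sigma(P)$ for every Pauli $P$ on at most $q$ sites whose endpoints both lie in $\Xi_t$. In each connected component choose a base syndrome $s_C$ and an \emph{arbitrary} Pauli $E_C$ of syndrome $s_C$, with no weight or barrier requirement. Set $R_s\coloneqq U_sE_C$, where $U_s$ is the label product along a graph path from $s_C$ to $s$.

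What must then be shown is that the label product of every closed path $s_0,\dots,s_L=s_0$ is a stabilizer up to phase. Its partial products $V_j=P_j\cdots P_1$ satisfy $\sigma(V_j)=s_j-s_0$, hence $|\sigma(V_j)|\le|s_j|+|s_0|\le2t$. This relative syndrome is where $2t$ comes from. Splitting each label site by site and comparing with both endpoints of its edge keeps every intermediate weight at most $2t+\lfloor q/2\rfloor\Delta<B_n$. So the loop product cannot be a nontrivial logical.

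When $s,s'\in\Xi_t$ and $s'=s+\sigma(P)$, the edge $P$ puts them in one component, and $U_{s'}^\dagger PU_s$ is such a loop product based at $s_C$. Hence $E_C^\dagger U_{s'}^\dagger PU_sE_C$ is a stabilizer up to phase and acts as a scalar on $\mc{C}_0$. Your Step 2 then goes through verbatim.
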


\begin{proof}
	\comments{
		After relabeling the checks, let $S_1,\cdots,S_{n-k}$ be independent generators. For $y\in\mbb{F}_3^{n-k}$, their joint eigenspace with eigenvalues $\omega^{y_j}$ has projector
		\begin{equation}
			\Pi_y=\frac{1}{3^{n-k}}\sum_{\alpha\in\mbb{F}_3^{n-k}}\omega^{-\alpha\cdot y}\prod_{j=1}^{n-k}S_j^{\alpha_j}.
		\end{equation}
		Only the identity term has nonzero trace, so $\Tr\Pi_y=3^k$. Writing $S_i=\prod_{j=1}^{n-k}S_j^{c_{ij}}$ with $c_{ij}\in\mbb{F}_3$, the full realizable syndrome space is
		\begin{equation}
			\Xi=\left\{\left(\sum_{j=1}^{n-k}c_{ij}y_j\right)_{i=1}^m:y\in\mbb{F}_3^{n-k}\right\}.
		\end{equation}
		The first $n-k$ entries recover $y$, so this parametrization is injective and $\Xi$ is an $(n-k)$-dimensional linear subspace. Every realizable sector is the image of exactly one $\Pi_y$ and therefore has dimension $3^k$. Each such syndrome is attained by a Pauli: for nonzero $\ket{\phi}\in\mc{C}_0$ and $\ket{\psi}\in\mc{C}_s$, a Pauli expansion of $\ketbra{\psi}{\phi}$ supplies $P$ with $\bra{\psi}P\ket{\phi}\ne0$, which forces $\sigma(P)=s$ by syndrome orthogonality.
	}

	Each realizable syndrome $s$ is attained by a Pauli: for nonzero $\ket{\phi}\in\mc{C}_0$ and $\ket{\psi}\in\mc{C}_s$, a Pauli expansion of $\ketbra{\psi}{\phi}$ supplies $P$ with $\bra{\psi}P\ket{\phi}\ne0$, which forces $\sigma(P)=s$ by syndrome orthogonality. Then $P\mc{C}_0=\mc{C}_s$, so every realizable syndrome sector has dimension $3^k$.

	Form a directed multigraph on $\Xi_t$, with an edge $s\xrightarrow{P}s+\sigma(P)$ for every Pauli $P$ supported on at most $q$ sites whenever both endpoints lie in $\Xi_t$. The edge records $P\mc{C}_s=\mc{C}_{s+\sigma(P)}$ and has reverse edge labelled by $P^\dagger$. The graph includes loops and parallel edges and need not be connected.
	
	\begin{claim}\label{claim:band-holonomy}
		Under Eq.~\eqref{equation:band-legal-barrier}, the product of the labels of every closed path in this graph is a stabilizer up to a scalar phase.
	\end{claim}
	
	We prove the claim below. Identify $\mc{C}_0$ with $\mc{L}=\mbb{C}^{3^k}$ by fixing an orthonormal basis. For every connected component $C$, choose a reference syndrome $s_C$ and a Pauli $E_C$ with syndrome $s_C$. No support or energy bound is imposed on $E_C$, which only identifies $\mc{C}_0$ with $\mc{C}_{s_C}$. For $s\in C$, choose a path from $s_C$ to $s$ with Pauli product $U_s$ and define
	\begin{equation}
		W_{\mathrm{qutrit}}(\ket{\xi}\otimes\ket{s})\coloneqq U_sE_C\ket{\xi},\quad \ket{\xi}\in\mc{L}.
	\end{equation}
	Each $U_sE_C$ maps $\mc{C}_0$ unitarily onto $\mc{C}_s$, and the syndrome sectors are orthogonal, so $W_{\mathrm{qutrit}}$ is an onto isometry.
	\comments{
		The choice of paths affects only the common phase of each transported logical basis. Indeed, if $\widetilde U_s$ is the product along another path from $s_C$ to $s$, then $\widetilde U_s^\dagger U_s$ is the product along a closed path. Claim~\ref{claim:band-holonomy} makes it a stabilizer up to phase, which acts as a scalar on $\mc{C}_{s_C}$. Consequently,
		\begin{equation}
			U_sE_C\ket{\xi}=c\,\widetilde U_sE_C\ket{\xi},\quad \ket{\xi}\in\mc{C}_0,
		\end{equation}
		where $|c|=1$ is independent of the logical vector.
	}

	For a Pauli $P$ supported on at most $q$ sites, its block from $\mc{C}_s$ to $\mc{C}_{s'}$ vanishes unless $s'=s+\sigma(P)$. In the latter case the corresponding edge lies in one component $C$, and $U_{s'}^\dagger P U_s$ is the product along a closed path based at $s_C$. Write it as $\omega^aS$ with $S\in\mathsf S$ using the claim. Since $S$ acts as a scalar on $\mc{C}_{s_C}$, the logical block is
	\begin{equation}
		E_C^\dagger U_{s'}^\dagger P U_sE_C\big|_{\mc{C}_0}=\omega^aE_C^\dagger S E_C\big|_{\mc{C}_0}=c_{s',s,P}I_{\mc{L}}.
	\end{equation}
	\comments{
		Set $c_{s',s,P}=0$ for the vanishing blocks and define $A_P\coloneqq\sum_{s,s'\in\Xi_t}c_{s',s,P}\ketbra{s'}{s}$. All blocks between different graph components vanish, and the preceding calculation treats every nonzero block within a component. Summing over the full syndrome basis gives
		\begin{equation}
			W_{\mathrm{qutrit}}^\dagger P W_{\mathrm{qutrit}}=\sum_{s,s'\in\Xi_t}(c_{s',s,P}I_{\mc{L}})\otimes\ketbra{s'}{s}=I_{\mc{L}}\otimes A_P.
		\end{equation}
	}

	All blocks between different graph components vanish, and every remaining block is scalar on $\mc L$, so $W_{\mathrm{qutrit}}^\dagger P W_{\mathrm{qutrit}}=I_{\mc L}\otimes A_P$ for some $A_P$.
	Expanding any $q$-local operator $T=\sum_P\alpha_PP$ in the Pauli basis preserves the support bound and gives the required identity with $A_T=\sum_P\alpha_PA_P$.
\end{proof}

\begin{proof}[Proof of Claim~\ref{claim:band-holonomy}]
	Let the successive vertices and labels of a closed path be $s_0,s_1,\cdots,s_L=s_0$ and $P_1,\cdots,P_L$. Set $V_0=I$ and $V_j=P_j\cdots P_1$. The successive syndrome differences telescope:
	\begin{equation}
		\sigma(V_j)=\sum_{r=1}^j\sigma(P_r)=\sum_{r=1}^j(s_r-s_{r-1})=s_j-s_0.
	\end{equation}
	Thus the same accumulated Pauli, applied to a ground state, has syndrome weight
	\begin{equation}\label{equation:band-translated-path}
		|\sigma(V_j)|\le|s_j|+|s_0|\le2t.
	\end{equation}
	In particular, $V_L$ has zero syndrome and preserves the code space.
	
	To obtain a single-site Pauli path, write an edge label of support $1\le h\le q$ as $P_j=P_{j,h}\cdots P_{j,1}$, with one factor per site and its scalar phase absorbed into the first factor. After $a$ factors, let $V_{j,a}\coloneqq P_{j,a}\cdots P_{j,1}V_{j-1}$. Each factor changes at most $\Delta$ syndrome entries. Comparing with both endpoints of the edge and using Eq.~\eqref{equation:band-translated-path} yields
	\begin{equation}\label{equation:band-edge-bound}
		|\sigma(V_{j,a})|\le\min\{2t+a\Delta,2t+(h-a)\Delta\}\le2t+\lfloor q/2\rfloor\Delta.
	\end{equation}
	An edge proportional to $I$ is a step of zero support. The resulting path starts at $I$, ends at $V_L$, and has maximum syndrome weight bounded by Eq.~\eqref{equation:band-edge-bound}. If $V_L$ had nontrivial logical action, this would contradict the definition of $B_n$ and Eq.~\eqref{equation:band-legal-barrier}. Thus $V_L$ is a stabilizer up to phase.
\end{proof}

We now extend this protection to the full qubit low-energy sector. Let $\mc{Q}_t^{\mathrm{legal}}\coloneqq J_{[n]}\mc{Q}_t^{\mathrm{qutrit}}$ be the embedded legal low-energy sector.

\begin{proof}[Proof of Theorem~\ref{theorem:band-full}]
	The orthogonal sectors $\mc{H}_F$ in Eq.~\eqref{equation:ham-sector-space}, indexed by the sets $F$ of illegal sites, are invariant under $H_n$, so
	\begin{equation}
		\im Q_e=\bigoplus_{F\subset[n]}\bigl(\mc{H}_F\cap\im Q_e\bigr).
	\end{equation}
	Separate the embedded checks by whether their supports $R_i$ meet $F$:
	\begin{equation}
		H_n^{\overline F}\coloneqq\sum_{i:R_i\cap F=\varnothing}\widehat h_i,\quad H_n^{\cap F}\coloneqq\sum_{i:R_i\cap F\ne\varnothing}\widehat h_i.
	\end{equation}
	By Eq.~\eqref{equation:ham-sector}, $H_n=|F|I+H_n^{\overline F}$ on $\mc{H}_F$, so sectors with $|F|>e$ contribute nothing to $\im Q_e$. For $|F|\le e$, let $R_F$ swap $\ket{11}$ and $\ket{00}$ on $F$, fix the other two basis states, and act trivially elsewhere. Then $R_F^\dagger=R_F$, $R_F^2=I$, and $[R_F,H_n^{\overline F}]=0$.
	
	\begin{claim}\label{claim:band-repair}
		For every $F$ with $|F|\le e$,
		\begin{equation}\label{equation:band-repair-inclusion}
			R_F\bigl(\mc{H}_F\cap\im Q_e\bigr)\subset\mc{Q}_{e\Delta}^{\mathrm{legal}}.
		\end{equation}
	\end{claim}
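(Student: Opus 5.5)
We have to show that $R_F$ maps the low-energy part of the sector $\mc H_F$ into the embedded legal sector with qutrit syndrome weight at most $e\Delta$. The plan is to analyze what $R_F$ does check by check, and split the checks according to whether their supports meet $F$.

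\emph{Target sector.} Since $R_F$ maps $\ket{11}$ to $\ket{00}=J\ket{0}$ on every site of $F$ and fixes the legal states elsewhere, we have $R_F\mc H_F\subset\mc H_\varnothing=\im J_{[n]}$. So $R_F\mc H_F$ lies in the legal sector, and it remains to bound the qutrit syndrome. The legal sector decomposes as $\bigoplus_{s\in\Xi}J_{[n]}\mc C_s$, and the embedded checks $\widehat h_i$ restricted to it are exactly the $J$-conjugated qutrit check projectors. Hence it suffices to show that every vector in $R_F(\mc H_F\cap\im Q_e)$ is supported on syndrome sectors with $|s|\le e\Delta$. Note that $|s|$ is exactly the eigenvalue of $\sum_i\widehat h_i$ on $J_{[n]}\mc C_s$.

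\emph{Checks avoiding $F$.} All terms commute, so $\mc H_F\cap\im Q_e$ is spanned by joint eigenvectors of the $\widehat h_i$ and the penalties inside $\mc H_F$. Take such an eigenvector $\ket\phi$. By Eq.~\eqref{equation:ham-sector}, its energy is $|F|+\langle H_n^{\overline F}\rangle\le e$. In particular it violates at most $e-|F|\le e$ of the checks with $R_i\cap F=\varnothing$. The operator $R_F$ acts only on sites in $F$, so it commutes with each such $\widehat h_i$. Therefore $R_F\ket\phi$ is still an eigenvector of each of these checks, with the same eigenvalue, and at most $e$ of them are violated.

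\emph{Checks meeting $F$.} Here I would not try to control the eigenvalues at all. The number of checks meeting $F$ is at most $|F|\Delta\le e\Delta$, because each site lies in at most $\Delta$ checks. The remaining subtlety is that $R_F\ket\phi$ need not be an eigenvector of these checks. Even so, in the qutrit syndrome decomposition every component has syndrome entries that vanish on the checks avoiding $F$ and are violated there. This holds because those checks commute with $R_F$ and are diagonal in the syndrome basis. Each component can therefore violate at most all $e\Delta$ checks meeting $F$ plus at most $e-|F|$ others.

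\emph{Combining the counts.} For a component with syndrome $s$, this gives
\[
|s|\le(e-|F|)+|F|\Delta\le e\Delta,
\]
using $\Delta\ge1$. Therefore $R_F\ket\phi\in J_{[n]}\mc Q^{\mathrm{qutrit}}_{e\Delta}=\mc Q^{\mathrm{legal}}_{e\Delta}$, and Eq.~\eqref{equation:band-repair-inclusion} follows by linearity over the spanning set of eigenvectors.

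\emph{Main obstacle.} The delicate point is the middle step: making sure that the syndrome decomposition of $R_F\ket\phi$ is compatible with the joint eigenvalues of the checks avoiding $F$. This needs those checks to commute both with $R_F$ and with every qutrit syndrome projector, which holds because all checks commute.
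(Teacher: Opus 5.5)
Your argument is correct and is essentially the paper's: $R_F$ commutes with $H_n^{\overline F}$, so the checks avoiding $F$ contribute at most $e-|F|$. The at most $|F|\Delta$ checks meeting $F$ contribute at most $|F|\Delta$, and $e-|F|+|F|\Delta\le e\Delta$. The paper phrases this through the joint spectral decomposition of $H_n^{\overline F}$ and $H_n^{\cap F}$ on the legal space, which is the same as your syndrome-component count.

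One wording fix: the sentence bounding the checks meeting $F$ by $e\Delta$ would overshoot if you added $e-|F|$ to it. Your final display correctly uses $|F|\Delta$ instead, so only that sentence needs tightening.
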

	
	We prove this claim below. Apply Lemma~\ref{lemma:band-legal-factor} with $t=e\Delta$ and $q=2e+u$, for which its hypothesis is exactly
	\begin{equation}
		B_n>2e\Delta+\lfloor(2e+u)/2\rfloor\Delta=\bigl(3e+\lfloor u/2\rfloor\bigr)\Delta.
	\end{equation}
	Set $\mc{L}=\mbb{C}^{3^k}$ and
	\begin{equation}
		W\coloneqq J_{[n]}W_{\mathrm{qutrit}}:\mc{L}\otimes\mc{G}_{e\Delta}^{\mathrm{qutrit}}\longrightarrow\mc{Q}_{e\Delta}^{\mathrm{legal}}.
	\end{equation}
	This is an onto isometry with $W(\mc{L}\otimes\ket{s})=J_{[n]}\mc{C}_s$. If a qubit operator $\widetilde O$ touches at most $q$ sites, its pullback $J_{[n]}^\dagger\widetilde OJ_{[n]}$ is supported on those same qutrits, because $J_{[n]}$ is a tensor product of single-site isometries. Consequently,
	\begin{equation}\label{equation:band-legal-compression}
		W^\dagger\widetilde OW=W_{\mathrm{qutrit}}^\dagger\bigl(J_{[n]}^\dagger\widetilde OJ_{[n]}\bigr)W_{\mathrm{qutrit}}=I_{\mc{L}}\otimes A_{\widetilde O}.
	\end{equation}
	The next claim identifies every repaired sector under this same isometry.
	
	\begin{claim}\label{claim:band-sector-factor}
		For every $F\subset[n]$ with $|F|\le e$, there is a subspace $\mc{G}_F\subset\mc{G}_{e\Delta}^{\mathrm{qutrit}}$ such that
		\begin{equation}
			R_F\bigl(\mc{H}_F\cap\im Q_e\bigr)=W(\mc{L}\otimes\mc{G}_F).
		\end{equation}
	\end{claim}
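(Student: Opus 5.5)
The plan is to write $R_F(\mc{H}_F\cap\im Q_e)$ as the range of an explicit projector $\Pi_F$ on $\mc{H}_{\mathrm{legal}}$. Since Claim~\ref{claim:band-repair} places that range inside $\mc{Q}^{\mathrm{legal}}_{e\Delta}=\im W$, it then suffices to show that the compression $W^\dagger\Pi_FW$ has the form $I_{\mc L}\otimes B_F$ for a projector $B_F$ on $\mc{G}^{\mathrm{qutrit}}_{e\Delta}$. We then take $\mc{G}_F\coloneqq\im B_F$.

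\emph{Identifying the projector.} On $\mc{H}_F$ we have $H_n=|F|I+H_n^{\overline F}$ by Eq.~\eqref{equation:ham-sector}, so $\mc{H}_F\cap\im Q_e=\mc{H}_F\cap\im\mathbf 1_{[0,e-|F|]}(H_n^{\overline F})$. The operator $R_F$ is a self-inverse unitary that commutes with $H_n^{\overline F}$. It maps $\mc{H}_F$ onto the subspace of fully legal states whose sites in $F$ are in $\ket{00}=J\ket0$. Hence
\begin{equation}
R_F(\mc{H}_F\cap\im Q_e)=\im\Pi_F,\qquad \Pi_F\coloneqq J_{[n]}\bigl(\ketbra{0}{0}_F\otimes I\bigr)\,\mathbf 1_{[0,e-|F|]}\bigl(H^{\overline F}_{\mathrm{qutrit}}\bigr)J_{[n]}^\dagger .
\end{equation}
Here $H^{\overline F}_{\mathrm{qutrit}}=\sum_{i:R_i\cap F=\varnothing}h_i$. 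The two factors commute, because these checks act trivially on $F$, so $\Pi_F$ is an orthogonal projector. Claim~\ref{claim:band-repair} gives $\im\Pi_F\subset\im W$, and therefore $\Pi_F=WW^\dagger\Pi_FWW^\dagger$.

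\emph{Tensor form of the compression.} Each $h_i$ is a function of the commuting checks, so it acts as the scalar $\mathbf 1[s_i\neq0]$ on every syndrome sector $\mc{C}_s$. Consequently $S\coloneqq\mathbf 1_{[0,e-|F|]}(H^{\overline F}_{\mathrm{qutrit}})$ is block diagonal in syndrome with scalar blocks. Because $W_{\mathrm{qutrit}}(\mc L\otimes\ket s)=\mc{C}_s$, this gives $SW_{\mathrm{qutrit}}=W_{\mathrm{qutrit}}(I_{\mc L}\otimes D)$ for a diagonal $0/1$ matrix $D$ on $\mc{G}^{\mathrm{qutrit}}_{e\Delta}$. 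In particular $S$ preserves $\mc{Q}^{\mathrm{qutrit}}_{e\Delta}$. The operator $\ketbra00_F$ is supported on $|F|\le e\le q=2e+u$ qutrits. Lemma~\ref{lemma:band-legal-factor}, in the embedded form Eq.~\eqref{equation:band-legal-compression}, therefore gives $W^\dagger J_{[n]}(\ketbra00_F\otimes I)J_{[n]}^\dagger W=I_{\mc L}\otimes A$. Combining the two,
\begin{equation}
W^\dagger\Pi_FW=W^\dagger J_{[n]}(\ketbra00_F\otimes I)J_{[n]}^\dagger W\,(I_{\mc L}\otimes D)=I_{\mc L}\otimes AD\eqqcolon I_{\mc L}\otimes B_F .
\end{equation}
Since $\Pi_F=W(W^\dagger\Pi_FW)W^\dagger$ and $W$ is an isometry, $W^\dagger\Pi_FW$ is a projector. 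Hence $B_F$ is a projector, and $\im\Pi_F=W(\mc L\otimes\im B_F)$, which proves the claim.

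The main point needing care is the factorisation in the second step. The spectral projector $S$ is a global operator, so Lemma~\ref{lemma:band-legal-factor} cannot be applied to it directly. Instead we use that $S$ is diagonal in syndrome and that $W$ maps $\mc L\otimes\ket s$ onto $\mc C_s$, so only the genuinely local factor $\ketbra00_F$ requires the subsystem-code compression. The ordering $\Pi_F=(\ketbra00_F\otimes I)S$ is legitimate because $S$ leaves $\im W$ invariant.
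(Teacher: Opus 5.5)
Your proof is correct and takes essentially the same approach as the paper. Both arguments use Claim~\ref{claim:band-repair} to place the repaired sector inside $\im W$, compress the local $\ket{00}$-on-$F$ projector with Lemma~\ref{lemma:band-legal-factor}, and write the surviving energy as the syndrome-diagonal $I_{\mc L}\otimes D_F$. The only difference is packaging: the paper sets $\mc G_F=\ker(I-p_F)\cap\im\mathbf 1_{[0,e-|F|]}(D_F)$ and checks both inclusions without needing $p_F$ to be a projector, whereas you multiply the two commuting physical projectors and use $\im\Pi_F\subset\im W$ to show the compressed product $AD$ is itself a projector whose image is that same subspace.
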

	
	The proof below identifies $\mc{G}_F$ by the conditions that the sites in $F$ are fixed to $\ket{00}$ and that the surviving energy is at most $e-|F|$. Using $R_F^2=I$, define
	\begin{align}
		\mc{G}_e&\coloneqq\bigoplus_{F\subset[n],\,|F|\le e}\mc{G}_F,\\
		W_e\big|_{\mc{L}\otimes\mc{G}_F}&\coloneqq R_FW\big|_{\mc{L}\otimes\mc{G}_F}.
	\end{align}
	The direct sum keeps separate copies of $\mc{G}_F$ even when these subspaces overlap inside $\mc{G}_{e\Delta}^{\mathrm{qutrit}}$. Each summand maps isometrically onto $\mc{H}_F\cap\im Q_e$, and these images are orthogonal.
	\comments{
		For $\ket{z}=\bigoplus_F\ket{z_F}$, this gives
		\begin{equation}
			\|W_e\ket{z}\|^2=\sum_F\|R_FW\ket{z_F}\|^2=\sum_F\|\ket{z_F}\|^2=\|\ket{z}\|^2.
		\end{equation}
		Thus $W_e$ is an isometry with
		\begin{equation}
			\im W_e=\bigoplus_{F\subset[n],\,|F|\le e}\bigl(\mc{H}_F\cap\im Q_e\bigr)=\im Q_e.
		\end{equation}
	}

	Thus $W_e$ is an isometry with image $\im Q_e$.

	Since $\ker H_n\subset\im Q_e$ has dimension $3^k$, the auxiliary space $\mc{G}_e$ is nonzero.
	
	It remains to control operators between sectors indexed by different sets of illegal sites. If $O$ is supported on at most $u$ qubits, then $R_F^\dagger O R_{F'}$ touches at most $|F|+u+|F'|\le2e+u=q$ sites. Equation~\eqref{equation:band-legal-compression} therefore gives $W^\dagger R_F^\dagger O R_{F'}W=I_{\mc{L}}\otimes B_{F,F',O}$. Let $\iota_F:\mc{G}_F\to\mc{G}_{e\Delta}^{\mathrm{qutrit}}$ be the inclusion. The block with input label $F'$ and output label $F$ is
	\begin{align}
		\bigl(W_e^\dagger OW_e\bigr)_{F,F'}&=(I_{\mc{L}}\otimes\iota_F^\dagger)W^\dagger R_F^\dagger O R_{F'}W(I_{\mc{L}}\otimes\iota_{F'}),\\
		&=I_{\mc{L}}\otimes\bigl(\iota_F^\dagger B_{F,F',O}\iota_{F'}\bigr).
	\end{align}
	Defining $A_{e,O}$ by these auxiliary matrix blocks proves Eq.~\eqref{equation:band-full-compression} on the full direct sum.
\end{proof}

\comments{
	\begin{proof}[Proof of Claim~\ref{claim:band-repair}]
		Take $\ket{v_F}\in\mc{H}_F\cap\im Q_e$ with $|F|\le e$. The vector $R_F\ket{v_F}$ is legal and, because $R_F$ commutes with $H_n^{\overline F}$, lies in its spectral subspace with eigenvalues at most $e-|F|$. On the legal space, $H_n=H_n^{\overline F}+H_n^{\cap F}$. These two operators commute and preserve that space, so write the orthogonal joint spectral decomposition
		\begin{equation}
			R_F\ket{v_F}=\sum_{a,d}\ket{w_{a,d}},\quad H_n^{\overline F}\ket{w_{a,d}}=a\ket{w_{a,d}},\quad H_n^{\cap F}\ket{w_{a,d}}=d\ket{w_{a,d}}.
		\end{equation}
		The components need not be normalized. The joint spectral projections preserve the legal space and the spectral subspace of $H_n^{\overline F}$ with eigenvalues at most $e-|F|$, so every nonzero component is legal and has $a\le e-|F|$. This controls spectral support, rather than only the mean energy of the repaired vector. At most $|F|\Delta$ checks touch $F$, giving $0\le H_n^{\cap F}\le|F|\Delta I$ and hence $d\le|F|\Delta$. Each such component therefore has total energy
		\begin{equation}
			a+d\le e-|F|+|F|\Delta=e+(\Delta-1)|F|\le e\Delta.
		\end{equation}
		The last inequality uses $|F|\le e$ and $\Delta\ge1$. Thus the entire repaired vector lies in the legal spectral subspace $\mc{Q}_{e\Delta}^{\mathrm{legal}}$, proving Eq.~\eqref{equation:band-repair-inclusion}.
	\end{proof}
}

\begin{proof}[Proof of Claim~\ref{claim:band-repair}]
	Take $\ket{v_F}\in\mc{H}_F\cap\im Q_e$ with $|F|\le e$. The vector $R_F\ket{v_F}$ is legal and, because $R_F$ commutes with $H_n^{\overline F}$, lies in its spectral subspace with eigenvalues at most $e-|F|$. On the legal space, $H_n=H_n^{\overline F}+H_n^{\cap F}$, the two summands commute, and $0\le H_n^{\cap F}\le|F|\Delta I$ because at most $|F|\Delta$ checks touch $F$. Their joint eigenspaces therefore give
	\begin{equation}
		R_F\bigl(\mc{H}_F\cap\im Q_e\bigr)\subseteq\mc{Q}_{e-|F|+|F|\Delta}^{\mathrm{legal}}\subseteq\mc{Q}_{e\Delta}^{\mathrm{legal}},
	\end{equation}
	where the last inclusion uses $|F|\le e$ and $\Delta\ge1$.
\end{proof}

\comments{
	\begin{proof}[Proof of Claim~\ref{claim:band-sector-factor}]
		Fix $F\subset[n]$ with $|F|\le e$. Within the range of $W$, the repaired sector is specified by being fixed to $\ket{00}$ on $F$ and having surviving energy at most $e-|F|$. We express both conditions on the auxiliary factor, and then verify this characterization in both directions.
		
		The condition that the sites in $F$ are fixed to $\ket{00}$ uses the projector
		\begin{equation}
			P_F^0\coloneqq\bigotimes_{v\in F}\ketbra{00}{00}_v\otimes I_{F^c}.
		\end{equation}
		Its support has $|F|\le q$ sites, so Eq.~\eqref{equation:band-legal-compression} gives $W^\dagger P_F^0W=I_{\mc{L}}\otimes p_F$ with $0\le p_F\le I$. For any $\ket{z}\in\mc{L}\otimes\mc{G}_{e\Delta}^{\mathrm{qutrit}}$, projector idempotence gives
		\begin{equation}
			\|(I-P_F^0)W\ket{z}\|^2=\bra{z}\bigl[I_{\mc{L}}\otimes(I-p_F)\bigr]\ket{z}.
		\end{equation}
		By positivity, this vanishes exactly for $\ket{z}\in\mc{L}\otimes\ker(I-p_F)$. Thus this condition selects the eigenspace of $p_F$ with eigenvalue one, without assuming that $p_F$ is a projector.
		
		The surviving energy is diagonal in the syndrome basis. Define $D_F$ by
		\begin{equation}
			D_F\ket{s}\coloneqq|\{i:R_i\cap F=\varnothing,\ s_i\ne0\}|\ket{s},\quad s\in\Xi_{e\Delta}.
		\end{equation}
		Since $W(\mc{L}\otimes\ket{s})=J_{[n]}\mc{C}_s$, the surviving checks give $H_n^{\overline F}W=W(I_{\mc{L}}\otimes D_F)$ and hence
		\begin{equation}
			\mathbf1_{[0,e-|F|]}(H_n^{\overline F})W=W\bigl(I_{\mc{L}}\otimes\mathbf1_{[0,e-|F|]}(D_F)\bigr).
		\end{equation}
		The energy condition therefore selects $\mc{L}\otimes\im\mathbf1_{[0,e-|F|]}(D_F)$. Both conditions hold exactly on $\mc{L}\otimes\mc{G}_F$, where
		\begin{equation}
			\mc{G}_F\coloneqq\ker(I-p_F)\cap\im\mathbf1_{[0,e-|F|]}(D_F).
		\end{equation}
		This intersection needs no commutation assumption: expanding a vector in an orthonormal basis of $\mc{L}$ requires each auxiliary coefficient to lie in both subspaces.
		
		If $\ket{z}\in\mc{L}\otimes\mc{G}_F$, the vector $W\ket{z}$ is legal and fixed to $\ket{00}$ on $F$, so $R_FW\ket{z}$ has illegal set exactly $F$. The repair commutes with $H_n^{\overline F}$, preserving the surviving energy bound. Equation~\eqref{equation:ham-sector} then puts $R_FW\ket{z}$ in $\mc{H}_F\cap\im Q_e$, proving one inclusion. Conversely, for $\ket{v_F}\in\mc{H}_F\cap\im Q_e$, Claim~\ref{claim:band-repair} gives $R_F\ket{v_F}=W\ket{z}$ for some $\ket{z}$. The repaired vector is fixed to $\ket{00}$ on $F$ and has surviving energy at most $e-|F|$, so the two equivalences above imply $\ket{z}\in\mc{L}\otimes\mc{G}_F$. This proves the reverse inclusion and the claim.
	\end{proof}
}

\begin{proof}[Proof of Claim~\ref{claim:band-sector-factor}]
	Fix $F\subset[n]$ with $|F|\le e$. Within the range of $W$, the repaired sector is specified by being fixed to $\ket{00}$ on $F$ and having surviving energy at most $e-|F|$.
	
	The condition that the sites in $F$ are fixed to $\ket{00}$ uses the projector
	\begin{equation}
		P_F^0\coloneqq\bigotimes_{v\in F}\ketbra{00}{00}_v\otimes I_{F^c}.
	\end{equation}
	Its support has $|F|\le q$ sites, so Eq.~\eqref{equation:band-legal-compression} gives $W^\dagger P_F^0W=I_{\mc{L}}\otimes p_F$ with $0\le p_F\le I$. For any $\ket z\in\mc L\otimes\mc G_{e\Delta}^{\mathrm{qutrit}}$,
	\begin{equation}
		\|(I-P_F^0)W\ket z\|^2=\bra z[I_{\mc L}\otimes(I-p_F)]\ket z.
	\end{equation}
	Thus this condition selects $\mc L\otimes\ker(I-p_F)$, without assuming that $p_F$ is a projector.
	
	The surviving energy is diagonal in the syndrome basis. Define
	\begin{equation}
		D_F\ket{s}\coloneqq|\{i:R_i\cap F=\varnothing,\ s_i\ne0\}|\ket{s},\quad s\in\Xi_{e\Delta}.
	\end{equation}
	Since $W(\mc L\otimes\ket s)=J_{[n]}\mc C_s$, we have $H_n^{\overline F}W=W(I_{\mc L}\otimes D_F)$. Both conditions therefore hold exactly on $\mc L\otimes\mc G_F$, where
	\begin{equation}
		\mc{G}_F\coloneqq\ker(I-p_F)\cap\im\mathbf1_{[0,e-|F|]}(D_F).
	\end{equation}
	For $\ket z\in\mc L\otimes\mc G_F$, the vector $R_FW\ket z$ has illegal set $F$ and, by Eq.~\eqref{equation:ham-sector} and $[R_F,H_n^{\overline F}]=0$, lies in $\im Q_e$. Conversely, Claim~\ref{claim:band-repair} places every repaired vector in the range of $W$, where it satisfies exactly the two conditions defining $\mc G_F$.
\end{proof}

Applying Lemma~\ref{lemma:stabcompletion} to qubits gives the following corollary.

\begin{corollary}[Binary stabilizer completion]\label{corollary:completion-full}
Let $H'=\sum_i O_i$ be a finite sum of Hermitian operators on $M$ qubits, with $O_i$ supported within $A_i$, and let $\ket{S_0}$ be a pure stabilizer state on the same system.
There are $s\le|\bigcup_iA_i|$ independent commuting signed Pauli strings, each supported within some $A_i$, whose common $+1$ projector $P_*$ has rank $2^{M-s}$.
Every operator supported within any $A_i$ compresses to a scalar on $\im P_*$, and
\begin{equation}\label{equation:ref-energy-conclusion}
P_*H'P_*=E_*P_*,\quad E_*\le\bra{S_0}H'\ket{S_0}.
\end{equation}
\end{corollary}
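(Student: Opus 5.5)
The plan is to specialize Lemma~\ref{lemma:stabcompletion} to the case $d_v=2$ for every $v$ and then read off the additional structure that qubits give us. Take the stabilizer state $\ket{S_0}$ on $M$ qubits and the regions $A_i$. Lemma~\ref{lemma:stabcompletion} produces an abelian group $G$ of rephased Pauli operators, with no nonidentity scalar, generated by commuting Pauli constraints each supported within some $A_i$. Its common $+1$ projector $P_*=|G|^{-1}\sum_{g\in G}g$ satisfies $P_*H'P_*=E_*P_*$ with $E_*\le\bra{S_0}H'\ket{S_0}$, and every operator supported within an $A_i$ compresses to a scalar. That scalar-compression statement is proved inside the lemma through the maximality argument, so I will quote it from the lemma's proof and not only from its statement.

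First, I will check that the generators can be taken to be signed Pauli strings. For $d=2$ the Pauli basis consists of tensor products of $X^aZ^b$. A Hermitian Pauli string is $\pm$ times a product of $I,X,Y,Z$, whereas a string containing $XZ$ factors has eigenvalues $\pm i$ when the number of such factors is odd. The generators of $G$ are not necessarily Hermitian, so I rewrite each one. An element $g\in G$ has $g^2$ equal to a scalar, and that scalar must equal $1$ because $G$ contains no nonidentity scalar. Hence $g$ squares to the identity, its eigenvalues are $\pm1$, and it is Hermitian. Such a $g$ is therefore a signed Pauli string of the form $\pm P$ with $P\in\{I,X,Y,Z\}^{\otimes M}$. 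This takes care of the ``signed'' claim.

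Next comes the counting. In the lemma the generators are chosen greedily, each one skipping anything already generated. Over $\mathbb{F}_2$, every nontrivial element of $G/\{\text{phases}\}$ has order $2$, so $G\cong\mathbb{Z}_2^s$ with the $s$ chosen generators independent. Then $|G|=2^s$ exactly, and $\operatorname{rank}P_*=2^M/|G|=2^{M-s}$ by Eq.~\eqref{equation:generalized-completion-rank}. The bound $s\le\sum_{v\in\bigcup_iA_i}\log_2 2=|\bigcup_iA_i|$ follows from the lemma's generator bound once logarithms are taken base $2$.

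The only step that needs care is the claim that elements of $G$ square to the identity and are Hermitian. This relies on the rephasing in the lemma: the chosen joint eigenspace has eigenvalue $+1$ for every element of $G$, and $G$ has no nonidentity scalar. Everything else is bookkeeping.
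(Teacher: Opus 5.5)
Your proposal is correct and follows the paper's route: the paper obtains this corollary by specializing Lemma~\ref{lemma:stabcompletion} to $d_v=2$, including the scalar-compression property established in that lemma's proof. Your added checks fill in details the paper leaves implicit. First, elements of $G$ square to the identity because $G$ contains no nonidentity scalar, so they are Hermitian signed Pauli strings. Second, $G$ is elementary abelian with $|G|=2^s$ exactly, which gives the independent generators and $\operatorname{rank}P_*=2^{M-s}$.
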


\begin{corollary}\label{corollary:ref-band-weight}
Suppose $H'\geq0$ has integer spectrum and $P\ne0$ is a projector satisfying $PH'P=E_*P$.
For an integer $e\ge0$, let $Q'_e\coloneqq\mathbf1_{[0,e]}(H')$.
Then
\begin{equation}
\|P(I-Q'_e)P\|_\infty\le\frac{E_*}{e+1}.
\end{equation}
Under these spectral assumptions, the projectors in Lemma~\ref{lemma:stabcompletion} and Corollary~\ref{corollary:completion-full} have leakage at most $\bra{S_0}H'\ket{S_0}/(e+1)$.
\end{corollary}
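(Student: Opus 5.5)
The plan is a Markov-type inequality for the spectral projectors of $H'$, compressed by $P$. The first step uses integrality of the spectrum. Since $H'\ge0$ has spectrum in $\mbb{Z}_{\ge0}$, every eigenvalue outside $[0,e]$ is at least $e+1$. The spectral theorem then gives the operator inequality
\begin{equation}
H'\geq (e+1)\bigl(I-Q'_e\bigr).
\end{equation}
To see this, compare both sides on each eigenspace of $H'$. On eigenspaces with eigenvalue $\lambda\le e$ the right side vanishes and $\lambda\ge0$. On the remaining eigenspaces the right side equals $e+1\le\lambda$.

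The second step compresses this inequality by $P$, which preserves operator order. Using $PH'P=E_*P$, this yields
\begin{equation}
(e+1)\,P(I-Q'_e)P\le PH'P=E_*P.
\end{equation}
The operator $P(I-Q'_e)P$ is positive semidefinite, and by the display above it is bounded by $E_*/(e+1)$ times a projector. Its operator norm is therefore at most $E_*/(e+1)$. When $E_*=0$ the bound is zero, which is consistent. This proves the first claim.

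For the final sentence, I would apply Lemma~\ref{lemma:stabcompletion} or Corollary~\ref{corollary:completion-full} to $H'$ with the given local decomposition. This produces a nonzero projector $P$ with $PH'P=E_*P$ and $E_*\le\bra{S_0}H'\ket{S_0}$. By the spectral assumptions, $H'\ge0$ has integer spectrum. Substituting $E_*\le\bra{S_0}H'\ket{S_0}$ into the bound just proved gives leakage at most $\bra{S_0}H'\ket{S_0}/(e+1)$.

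No step is a real obstacle. The one point to state explicitly is that integrality is what upgrades the Markov bound from $E_*/e$ to $E_*/(e+1)$, by guaranteeing a spectral gap between $e$ and $e+1$.
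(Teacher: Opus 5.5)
Your proof is correct and follows essentially the same route as the paper. The integer spectrum gives $H'\geq(e+1)(I-Q'_e)$, compressing by $P$ yields $0\le P(I-Q'_e)P\le\frac{E_*}{e+1}P$, and the final claim follows by substituting the completion bound $E_*\le\bra{S_0}H'\ket{S_0}$.
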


\begin{proof}
The spectrum gives $H'\geq(e+1)(I-Q'_e)$.
Compression by $P$ yields
\begin{equation}
0\leq P(I-Q'_e)P\leq\frac{E_*}{e+1}P.
\end{equation}
Taking the operator norm proves the bound.
The final claim follows from the completion energy inequalities, including Eq.~\eqref{equation:ref-energy-conclusion}.
\end{proof}

We can now prove Theorem~\ref{theorem:main} by combining the logical protection established above for the spectral band with the shared results of Appendix~\ref{appendix:refinement}. This argument uses the energy barrier without assuming operator soundness.

Fix the Hamiltonian $H_n$ in Eq.~\eqref{equation:main-hamiltonian} for an $n$-qutrit stabilizer code with $k\ge1$, term weights at most $\ell\in\mbb{Z}_{>0}$, check incidence at most $\Delta\in\mbb{Z}_{>0}$, and energy barrier $B_n$. Consider any pure binary stabilizer seed $\ket{S}$ on a finite system of $M\ge2n$ qubits and a depth-$D$ circuit $U$ of arbitrary one- and two-qubit gates with disjoint supports in each layer. Let $\rho\coloneqq\Tr_{\mc{A}}(U\ketbra{S}{S}U^\dagger)$ be the retained state, and for an integer $e\ge0$ define
\begin{equation}
	H'\coloneqq U^\dagger(H_n\otimes I_{\mc{A}})U,\quad Q'\coloneqq U^\dagger(\mathbf1_{[0,e]}(H_n)\otimes I_{\mc{A}})U.
\end{equation}
Each pulled-back term is supported within its backward light cone $A_i$, where
\begin{equation}
	|A_i|\le2^D\ell,\quad m\coloneqq\left|\bigcup_iA_i\right|\le\min\{M,2^{D+1}n\}.
\end{equation}
Corollary~\ref{corollary:completion-full} gives $s\le m$ independent commuting signed Pauli generators $\kappa_i$, each supported within one $A_i$, and their common $+1$ projector $P_*$ with rank $2^{M-s}$ and $P_*H'P_*=E_*P_*$, where $E_*\le\Tr(H_n\rho)$. Set
\begin{equation}
	H_K\coloneqq\sum_{i=1}^s\frac{I-\kappa_i}{2},\quad t_m\coloneqq\left\lceil\frac{\sqrt m}{2}\ln8\right\rceil.
\end{equation}

\begin{proposition}[Energy lower bound from logical protection]\label{proposition:sp-energy-bound}
	For the state $\rho$ and projector $P_*$ above, suppose there is an integer $u$ such that
	\begin{equation}\label{equation:sp-joint-hypotheses}
		B_n>[3e+\lfloor u/2\rfloor]\Delta,\quad u\ge2^{2D}\ell t_m.
	\end{equation}
	Then $\Tr(H_n\rho)\ge E_*\ge(e+1)/2$.
\end{proposition}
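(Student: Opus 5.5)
We argue by contradiction: assume $E_*<(e+1)/2$. The bound $\Tr(H_n\rho)\ge E_*$ is already supplied by Corollary~\ref{corollary:completion-full}, so only $E_*\ge(e+1)/2$ needs proof. The contradiction will come from the spectral rank obstruction, Lemma~\ref{lemma:spectral-rank-obstruction}, applied on the $M$-qubit input system. There we take $P=P_*$, whose rank is $2^{M-s}$, and $Q=Q'$, which is $U^\dagger$-conjugate to the low-energy projector $\mathbf1_{[0,e]}(H_n)\otimes I_{\mc{A}}$. The divisor is $K=3^k$, which never divides a power of two since $k\ge1$.

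\emph{Leakage.} The pulled-back Hamiltonian $H'$ is unitarily equivalent to $H_n\otimes I_{\mc{A}}$. It is therefore nonnegative with integer spectrum, and $Q'=\mathbf1_{[0,e]}(H')$. Corollary~\ref{corollary:ref-band-weight} gives
\[
\|P_*(I-Q')P_*\|_\infty\le E_*/(e+1)<1/2 .
\]
With $a=1/4$ this is exactly the strict inequality $\|P(I-Q)P\|_\infty<1-2a$ that the obstruction lemma forbids.

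\emph{Polynomial filter.} Apply Fact~\ref{fact:stabilizer-filter} to $H_K=\sum_i(I-\kappa_i)/2$, a sum of $s$ commuting projectors whose kernel is $\im P_*$, with $a=1/4$. This gives $F=f(H_K)$ satisfying $\|F-P_*\|_\infty\le 1/4$. The degree satisfies $\deg f\le L_s(1/4)=\lceil\frac{\sqrt s}{2}\ln 8\rceil\le t_m$, because $s\le m$. Hence $F$ is a linear combination of products of at most $t_m$ generators, each generator supported in one light cone $A_i$ of size at most $2^D\ell$. Conjugating back, $UFU^\dagger$ is a combination of terms each touching at most $2^{2D}\ell t_m\le u$ output qubits. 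Split such a term as a sum of (output operator)$\,\otimes\,$(ancilla operator), where each output operator acts on at most $u$ of the $2n$ retained qubits.

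\emph{Tensor form of the compression.} Apply Theorem~\ref{theorem:band-full} with this $e$ and $u$; its hypothesis is the first condition in \eqref{equation:sp-joint-hypotheses}. It gives an isometry $W_e$ onto $\im\mathbf1_{[0,e]}(H_n)$ such that every operator on at most $u$ qubits compresses to $I_{3^k}\otimes A$. Extending by $I_{\mc{A}}$ on the ancillas and conjugating with $U$, we get that $Q'FQ'|_{\im Q'}$ is unitarily equivalent to $I_{3^k}\otimes B$ with $B$ Hermitian. This mirrors the proof of Lemma~\ref{lemma:intrinsic-rank-divisibility}, with Theorem~\ref{theorem:band-full} replacing Lemma~\ref{lemma:correctable-space-protection}.

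Lemma~\ref{lemma:spectral-rank-obstruction} now forces $\|P_*(I-Q')P_*\|_\infty\ge 1/2$, contradicting the leakage bound. Therefore $E_*\ge(e+1)/2$.

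The main obstacle is the support bookkeeping in the tensor-form step. The light-cone growth must be counted on retained qubits only, and the ancilla factors must be carried through the compression correctly. Beyond that, everything follows from results already established.
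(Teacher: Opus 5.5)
Your proposal is correct and follows the paper's proof essentially step for step. The shared steps are the filter from Fact~\ref{fact:stabilizer-filter} with $a=1/4$ and degree at most $t_m$, the light-cone count $2^{2D}\ell t_m\le u$ with the retained/ancilla tensor splitting, Theorem~\ref{theorem:band-full} to obtain the $I_{3^k}\otimes B$ compression, and Lemma~\ref{lemma:spectral-rank-obstruction} together with the leakage bound of Corollary~\ref{corollary:ref-band-weight}. The only difference is cosmetic: you argue by contradiction, whereas the paper chains $\frac12\le\|P_*(I-Q')P_*\|_\infty\le E_*/(e+1)$ directly.
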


\begin{proof}
	Fact~\ref{fact:stabilizer-filter} with $a=1/4$ gives a real polynomial $p$ of degree at most $t_m$ satisfying $\|p(H_K)-P_*\|_\infty\le1/4$. Every term in its expansion has input support at most $2^D\ell t_m$ and, after conjugation by $U$, output support at most $2^{2D}\ell t_m\le u$. Expand each output term in tensor products on its retained and discarded supports. Theorem~\ref{theorem:band-full} applies to every retained factor with the same isometry, so
	\begin{equation}
		Q'p(H_K)Q'\big|_{\im Q'}\simeq I_{3^k}\otimes B,\quad B=B^\dagger.
	\end{equation}
	Here the auxiliary space also includes $\mc{A}$. The polynomial is formed before compression, without assuming that $H_K$ or $P_*$ preserves $\im Q'$. Since $3^k$ does not divide $2^{M-s}$, Lemma~\ref{lemma:spectral-rank-obstruction} and Corollary~\ref{corollary:ref-band-weight} give
	\begin{equation}
		\frac12\le\|P_*(I-Q')P_*\|_\infty\le\frac{E_*}{e+1}\le\frac{\Tr(H_n\rho)}{e+1}.
	\end{equation}
\end{proof}

\begin{proof}[Proof of Theorem~\ref{theorem:main}]
	After replacing the barrier constant $\beta_0$ by $\min\{\beta_0,\Delta\}$ if necessary, choose
	\begin{align}
		0<\eta_0&\le\min\left\{\frac14,\frac{\beta_0}{8(3+\ell/2)\Delta}\right\},\quad \eta\coloneqq\frac{\eta_0}{4},\quad e\coloneqq\lfloor\eta_0n\rfloor,\\
		r_*&\coloneqq\frac{\beta_0}{2\Delta},\quad K_{\rm deg}\coloneqq1+\frac{\ln8}{\sqrt2},\quad c\coloneqq\left(\frac{r_*}{\ell K_{\rm deg}}\right)^{2/5}.
	\end{align}
	These constants depend only on the Hamiltonian family. Let $n_B$ be large enough that $B_n\ge\beta_0n$ for family sizes $n\ge n_B$, and restrict to $n\ge n_*\coloneqq\max\{n_B,1,\lceil4\ell\Delta/\beta_0\rceil\}$. The bound on the size of the active input subsystem gives $t_m\le K_{\rm deg}\sqrt n\,2^{D/2}$. If $2^D\le cn^{1/5}$, the integer $u\coloneqq2^{2D}\ell t_m$ satisfies
	\begin{align}
		u&\le\ell K_{\rm deg}\sqrt n\,2^{5D/2}\le r_*n,\\
		[3e+\lfloor u/2\rfloor]\Delta&\le3\eta_0n\Delta+\frac{r_*n\Delta}{2}\le\frac{3\beta_0n}{8}<\beta_0n\le B_n.
	\end{align}
	Thus Eq.~\eqref{equation:sp-joint-hypotheses} holds, and Proposition~\ref{proposition:sp-energy-bound} gives
	\begin{equation}
		\Tr(H_n\rho)\ge\frac{\lfloor\eta_0n\rfloor+1}{2}>\frac{\eta_0n}{2}=2\eta n.
	\end{equation}
	\comments{
		This common bound applies to every branch of depth at most $D$, independently of the seed, circuit, and finite ancillary size. By linearity, the bound holds for every finite mixture of these branches, and hence for every state with $\Cmagicd{2}(\rho)\le D$. The floor gives a uniform strict margin at each fixed size, so the inequality also survives taking an infimum. Taking the contrapositive proves the claimed bound $\Cmagicd{2}(\rho)>\tfrac15\log n+\log c$ whenever $\Tr(H_n\rho)/(2n)\le\eta$. More explicitly, for every integer $D\ge0$ one may take $n_0(D)\coloneqq\max\{n_*,\lceil(2^D/c)^5\rceil\}$, so that every family size $n\ge n_0(D)$ satisfies
		\begin{equation}
			\inf_{\rho:\,\Cmagicd{2}(\rho)\le D}\frac{\Tr(H_n\rho)}{2n}\ge\frac{\lfloor\eta_0n\rfloor+1}{4n}>\eta.
		\end{equation}
		The threshold $\eta$ is therefore independent of $D$.
	}

	This common bound applies to every branch of depth at most $D$, independently of the seed, circuit, and finite ancillary size. By linearity, the bound holds for every finite mixture of these branches, and hence for every state with $\Cmagicd{2}(\rho)\le D$. Taking the contrapositive proves the claimed bound $\Cmagicd{2}(\rho)>\tfrac15\log n+\log c$ whenever $\Tr(H_n\rho)/(2n)\le\eta$.
\end{proof}

\subsection{Separation of the preparation models and other properties}\label{appendix:ternary-properties}

The ternary construction satisfies binary NLTM but does not satisfy intrinsic NLTM. For a ternary stabilizer state $\ket{S_3}$ in the code space, the embedded ground state $\ket{\psi}=J^{\otimes n}\ket{S_3}$ obeys the binary lower bound but has $\Cmagicd{3}(\ketbra{\psi}{\psi})\le2$. Two successive gates between a qutrit and a qubit transfer its high and low binary digits to qubits initialized in $\ket{00}$ and reset the qutrit to $\ket{0}$. Applying these gates in parallel implements the embedding in depth two.

\comments{
	The Tanner realization also has an exact symmetry under charge conjugation. The qutrit charge conjugation $\mathrm C\ket{j}=\ket{-j\bmod3}$ obeys $\mathrm C X\mathrm C^\dagger=X^{-1}$ and $\mathrm C Z\mathrm C^\dagger=Z^{-1}$. For the Tanner CSS realization, every listed check is of pure $X$ or pure $Z$ type, so $\mathrm C^{\otimes n}S_i(\mathrm C^\dagger)^{\otimes n}=S_i^{-1}$ and every $h_i$ in Eq.~\eqref{equation:bar-check-projector} is invariant. Swapping the two qubits implements $\mathrm C$ on the legal subspace and fixes $\ket{11}$. Consequently $\widehat U_C\coloneqq\bigotimes_{v=1}^n\mathrm{SWAP}_v$ commutes with every $\widehat h_i$ and $\ketbra{11}{11}_v$ in this realization, giving an exact global $\mbb{Z}_2$ symmetry with an onsite implementation on the paired sites. The relation of this symmetry to non-Abelian extensions~\cite[Sec.~III.4.3]{christos2026nonabelianquantumlowdensityparity} gives a separate direction for constructing further examples.
}

The operator $\mathrm C\ket{j}=\ket{-j\bmod3}$ obeys $\mathrm C X\mathrm C^\dagger=X^{-1}$ and $\mathrm C Z\mathrm C^\dagger=Z^{-1}$. For the Tanner CSS realization, every listed check is of pure $X$ or pure $Z$ type, so $\mathrm C^{\otimes n}S_i(\mathrm C^\dagger)^{\otimes n}=S_i^{-1}$ and every $h_i$ in Eq.~\eqref{equation:bar-check-projector} is invariant. Swapping the two qubits implements $\mathrm C$ on the legal subspace and fixes $\ket{11}$. Consequently $\widehat U_C\coloneqq\bigotimes_{v=1}^n\mathrm{SWAP}_v$ commutes with every $\widehat h_i$ and $\ketbra{11}{11}_v$ in this realization.

\section{Classical energy witnesses}\label{appendix:classical-witnesses}
\comments{don't use "branch". I prefer to move this section after appendix "Gauged sheaf codes" or "binary NLTM"}
We extend the classical energy witness argument discussed in Refs.~\cite{natarajan2025lecturequantumpcpconjecture,wei2025longrangenonstabilizernessquantumcodes,parham2025quantumcircuitlowerbounds} to unrestricted finite ancillary inputs and classical mixtures.
We first consider the binary specialization of Eq.~\eqref{equation:intrinsic-preparation}, with every input dimension $d_v=2$.


\begin{proposition}\label{proposition:ancilla-witness}
Let $H$ be a Hamiltonian on $N$ qubits, and let $\rho$ be obtained from a pure binary stabilizer state on an arbitrary finite qubit system by a depth-$D$ circuit and discarding all but $N$ outputs.
There exists a state with energy at most $\Tr(H\rho)$ prepared at depth at most $D$ from a pure binary stabilizer state on at most $N2^D$ qubits, followed by discarding outputs. 
\end{proposition}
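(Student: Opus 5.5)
The plan is to shrink the input to the backward light cone of the retained outputs and to reduce the part of the stabilizer state lying outside that cone to a stabilizer state supported on the cone itself. Write the input system as $R\cup R^c$, where $R$ is the union of the backward light cones of the $N$ retained qubits under the depth-$D$ circuit $V$. Each output has a light cone of at most $2^D$ input qubits, so $|R|\le N2^D$. Split $V=V_{\mathrm{cone}}\cdot V'$, where $V_{\mathrm{cone}}$ consists of the gates inside the light cone of the retained outputs. The remaining gates either act on $R^c$ alone, or else their outputs never reach a retained qubit, so they can be dropped after tracing out. The reduced state on the retained qubits then depends only on the input reduced state $\sigma_R=\Tr_{R^c}\ketbra{S_0}{S_0}$ and on the gates of $V$ that act within $R$. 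Concretely, $\rho=\Tr_{\mathrm{disc}}\bigl(\widetilde V\sigma_R\widetilde V^\dagger\bigr)$, where $\widetilde V$ is the restriction of $V$ to gates acting entirely within $R$. This identification uses the fact that any gate straddling $R$ and $R^c$ has all of its outputs outside every retained light cone. I will verify this by the definition of light cones layer by layer.

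Next, $\sigma_R$ is a reduced binary stabilizer state. By the standard structure theorem, it equals $2^{-|R|}\sum_{g\in\mathcal S_R}g$, where $\mathcal S_R$ is the subgroup of stabilizers supported in $R$. This is the uniform mixture over the joint eigenspace of the stabilizer code with stabilizer group $\mathcal S_R$. Completing $\mathcal S_R$ to maximal stabilizer groups by adjoining a commuting set of logical operators with all sign choices writes $\sigma_R$ as a uniform convex combination of pure stabilizer states $\ket{T_\mu}$ on $R$. By linearity, $\Tr(H\rho)$ is the average of the energies $\Tr\bigl(H\,\Tr_{\mathrm{disc}}(\widetilde V\ketbra{T_\mu}{T_\mu}\widetilde V^\dagger)\bigr)$. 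At least one $\mu$ therefore attains energy at most $\Tr(H\rho)$. That branch is a pure stabilizer state on at most $N2^D$ qubits, followed by a circuit of depth at most $D$ and discarding outputs, as required.

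The main obstacle is the light-cone bookkeeping in the first step. The issue is that a gate acting on one qubit in $R$ and one qubit in $R^c$ cannot appear in $\widetilde V$. I expect to handle this by defining $R$ as the set of input qubits in the backward cones and observing, layer by layer, that every gate that feeds into a retained output has both of its inputs in the cone. Gates with exactly one input in $R$ therefore never influence the retained marginal, and deleting them leaves the retained marginal unchanged by the no-signaling property of unitaries acting on discarded wires.
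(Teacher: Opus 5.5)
Your proposal is correct and follows essentially the same route as the paper: restrict to the union $R$ of backward light cones (so $|R|\le N2^D$), replace the input by the reduced state $\sigma_R=2^{-|R|}\sum_{g\in\mathcal S_R}g$, complete $\mathcal S_R$ to a maximal commuting set to write $\sigma_R$ as a mixture of pure stabilizer states, and select a component whose output energy is at most $\Tr(H\rho)$. One notational point: the non-cone gates commute past all later cone gates, so they should be applied \emph{after} the cone gates (not before), and they then act only on discarded wires. This ordering is what your trace-out argument actually uses.
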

\begin{proof}
The union $B$ of the backward light cones of the retained outputs has size $L\coloneqq|B|\le N2^D$.
Deleting gates outside these cones leaves a circuit $V$ of depth at most $D$ with $O(LD)$ gates, acting on the reduced input
\begin{equation}
\sigma_B=2^{-L}\sum_{P\in\mc{G}_B}P,
\end{equation}
where $\mc{G}_B$ consists of the input stabilizers supported within $B$, restricted to $B$.
This normalized stabilizer projector has an $O(L^2)$-bit tableau, and completing its generators to a maximal commuting Pauli set expresses it as a mixture of pure stabilizer states.
At least one component has output energy no greater than $\Tr(H\rho)$.
\end{proof}
Proposition~\ref{proposition:ancilla-witness} apply to every finite mixture of such preparations with branch depths at most $D$.
The resulting witness requires neither a description of the mixing probabilities nor an efficient procedure for finding the selected branch.

For verification, write $H=\sum_{i=1}^{m}h_i$, where every $h_i$ has support at most $\ell$ and norm at most one.
The witness specifies the compressed tableau and circuit $V$.
Each $V^\dagger(h_i\otimes I)V$ acts on at most $\ell2^D$ inputs and expands into at most $4^{\ell2^D}$ Pauli operators, whose expectations follow from binary linear algebra~\cite{aaronson2008improvedsimulationstabilizercircuits}.
Writing $L_{\mathrm{bit}}$ for the bit length of the Hamiltonian and circuit descriptions, the energy can be estimated to additive error $0<\epsilon<1$ in time
\begin{equation}
\operatorname{poly}(N,m,L,D,L_{\mathrm{bit}},\ln(1/\epsilon))\,2^{O(\ell2^D)}.
\end{equation}
For fixed $\ell$, polynomial input length, and inverse polynomial $\epsilon$, this is polynomial whenever $2^D=O(\ln N)$.

To ensure finite descriptions of states, encode each gate as a fixed number of two-level rotations and diagonal phases with dyadic angles, whose product is exactly unitary and acts on the same support.
If $g$ is the number of gates, approximating each to operator norm error $\epsilon/(4m\max\{g,1\})$ changes the total energy by at most $\epsilon/2$ and uses $O(\ln(m\max\{g,1\}/\epsilon))$ bits per angle.
The verifier checks independence and commutation of the $L$ signed Hermitian Pauli generators and disjoint gate supports within each layer, then evaluates the energy with controlled numerical error.

For qudit inputs, let $A_i$ be the backward light cone of the support of $h_i$.
It contains at most $\ell2^D$ input qudits, and its generalized Pauli basis has
\begin{equation}
\prod_{v\in A_i}d_v^2\le d_{\max}^{2\ell2^D}
\end{equation}
elements.
Pauli expectations follow by solving the modular equations for membership in the stabilizer subgroup and retaining the associated phases, also for unequal or composite local dimensions.
With explicit gate descriptions, the corresponding energy evaluation takes time
\begin{equation}
\operatorname{poly}(N,m,L,D,L_{\mathrm{bit}},\ln d_{\max},\ln(1/\epsilon))\,d_{\max}^{O(\ell2^D)}.
\end{equation}
Thus fixed depth and a fixed local dimension bound give polynomial time.
More generally, for polynomial description length and inverse polynomial $\epsilon$, the same estimate is polynomial whenever $\ell2^D\ln d_{\max}=O(\ln N)$.
If $d_{\max}$ grows, a general two-qudit gate requires $O(d_{\max}^4)$ two-level rotations and phases, whose description and evaluation costs are included in this bound.

Proposition~\ref{proposition:ancilla-witness} also extends to qudits: the reduced input on $B$ is a normalized qudit stabilizer projector, and extending its commuting Pauli subgroup to a maximal one decomposes it into pure qudit stabilizer states.
One component has no greater output energy, and the same selection applies to classical mixture.
At fixed local dimension bound and depth, the resulting stabilizer generators and circuit have polynomial descriptions, so these preparations likewise admit polynomial classical energy witnesses.

Thus these preparations provide polynomial-size classical witnesses for low energy.
If every YES instance of a $\mathsf{QMA}$-hard local Hamiltonian problem admitted
such a witness, then the problem would lie in \(\mathsf{NP}\), implying
\(\mathsf{QMA}=\mathsf{NP}\).

\printbibliography

\end{document}